\newcommand{\titleshort}{{Estimator Augmentation}}
\newcommand{\authorshort}{{Zhou and Min}}
\newcommand{\apptitle}{Appendix}  %%%%{Appendix: Supplemental Material and Proofs}
\setlist{noitemsep, topsep=0pt}
\numberwithin{equation}{section}
\theoremstyle{theorem}
\newtheorem{theorem}{Theorem}[section]
\newtheorem{lemma}[theorem]{Lemma}
\newtheorem{proposition}[theorem]{Proposition}
\newtheorem{corollary}[theorem]{Corollary}
\theoremstyle{definition}
\newtheorem{definition}{Definition}%[section]
\newtheorem{remark}{Remark}%[section]
\newtheorem{assumption}{Assumption}%[section]
\newtheorem{example}{Example}%[section]
\newtheorem{routine}{Algorithm}%[section]
\newcommand{\myappendix}{\section*{\hfil \apptitle \hfil}\appendix}
\newcommand{\bfI}{\mathbf{I}}
\newcommand{\bfzr}{\mathbf{0}}
\newcommand{\calK}{\mathcal{K}}
\newcommand{\calA}{\mathcal{A}}
\newcommand{\tdH}{\tilde{H}}
\newcommand{\tdv}{\tilde{v}}
\newcommand{\calB}{\mathcal{B}}
\newcommand{\calC}{\mathcal{C}}
\newcommand{\calI}{\mathcal{I}}
\newcommand{\calM}{\mathcal{M}}
\newcommand{\calG}{\mathcal{G}}
\newcommand{\calV}{\mathcal{V}}
\newcommand{\scrA}{\mathscr{A}}
\newcommand{\scrK}{\mathscr{K}}
\newcommand{\mbB}{\mathbb{B}}
\newcommand{\mbD}{\mathbb{D}}
\newcommand{\R}{\mathbb{R}}
\newcommand{\N}{\mathbb{N}}
\newcommand{\mbS}{\mathbb{S}}
\newcommand{\Prob}{\mathbb{P}}
\newcommand{\alp}{\alpha}
\newcommand{\hbeta}{\hat{\beta}}
\newcommand{\tdbeta}{\tilde{\beta}}
\newcommand{\veps}{\varepsilon}
\newcommand{\hTheta}{\hat{\Theta}}
\newcommand{\hsigma}{\hat{\sigma}}
\newcommand{\hgamma}{\hat{\gamma}}
\newcommand{\lmd}{\lambda}
\newcommand{\Omg}{\Omega}
\newcommand{\hzeta}{\hat{\zeta}}
\newcommand{\hb}{\hat{b}}
\newcommand{\hy}{\hat{y}}
\newcommand{\E}{\mathbb{E}}
\newcommand{\calE}{\mathcal{E}}
\newcommand{\I}{\bm{1}}
\newcommand{\defi}{\mathop{=}\limits^{\Delta}}  
\newcommand{\toas}{\mathop{\longrightarrow}\limits^{a.s.}}  
\newcommand{\eqinL}{\mathop{=}\limits^{d}}      % mathop for define
\newcommand{\dnorm}{\mathcal{N}}
\newcommand{\trans}{\mathsf{T}}
\newcommand{\supth}{\text{th}}
\newcommand{\sqn}{\sqrt{n}}
\newcommand{\sqm}{\sqrt{m}}
\newcommand{\frn}{\frac{1}{n}}
\newcommand{\vn}{\varnothing}
\newcommand{\la}{\langle}
\newcommand{\ra}{\rangle}
\newcommand{\grad}{\nabla}
\DeclareMathOperator*{\argmin}{argmin}
\DeclareMathOperator*{\argmax}{argmax}
\DeclareMathOperator{\diag}{diag}
\DeclareMathOperator{\sgn}{sgn}
\DeclareMathOperator{\row}{row}
\DeclareMathOperator{\nul}{null}
\DeclareMathOperator{\rank}{rank}
\DeclareMathOperator{\spn}{span}
\newcommand{\estpivot}{\hat{b}}
\newcommand{\siglevel}{\delta}
\begin{document}

\title{Estimator Augmentation with Applications \\ in High-Dimensional Group Inference}
\author{Qing Zhou\thanks{UCLA Department of Statistics}
\thanks{(Email: zhou@stat.ucla.edu) Supported by NSF grants DMS-1055286 and IIS-1546098.}
 \and Seunghyun Min$^*$
%\\ UCLA Department of Statistics
}
%\date{}
\maketitle

\begin{abstract}
To make inference about a group of parameters on high-dimensional data,
we develop the method of estimator augmentation for the block Lasso, which is defined
via the block norm. 
By augmenting a block Lasso estimator $\hbeta$ with the subgradient $S$ of the block norm evaluated
at $\hbeta$, we derive a closed-form density for the joint distribution of $(\hbeta,S)$ under a high-dimensional
setting. This allows us to draw from an estimated sampling distribution of $\hbeta$, 
or more generally any function of $(\hbeta,S)$, by Monte Carlo algorithms.
We demonstrate the application of estimator augmentation in group inference 
with the group Lasso and a de-biased group Lasso constructed as a function of $(\hbeta,S)$. 
Our numerical results show that
importance sampling via estimator augmentation can be orders of magnitude more efficient 
than parametric bootstrap in estimating tail probabilities for significance tests.
This work also brings new insights into the geometry of the sample space
and the solution uniqueness of the block Lasso.
%The distribution is defined over a collection of manifolds.

{\em Keywords}: estimator augmentation, group Lasso, high-dimensional inference,
importance sampling, parametric bootstrap, sampling distribution.
\end{abstract}

\section{Introduction}

There has been a fast growth of high-dimensional data in many areas,
such as genomics and the social science. 
Statistical inference for high-dimensional models becomes a necessary tool
for scientific discoveries from such data. For example, significance
tests have been performed to screen millions of genomic loci for disease markers.
These applications have motivated the recent development in high-dimensional statistical inference.
Some methods make use of sample splitting and subsampling
to quantify estimation errors and significance \citep{Wasserman09,Meinshausen09,Meinshausen10},
while others rely on the bootstrap to approximate the sampling distributions
of Lasso-type estimators \citep{Chatterjee13,Zhou14}.
For Gaussian linear models, an interesting idea of de-biasing the Lasso \citep{Tibshirani96} 
has been developed by a few groups 
\citep{ZhangZhang11,vandeGeer13,Javanmard13b}. 
In addition, there are various other inferential methods 
\citep{Lockhart13,Lee14,NingLiu14,Voorman14,Neykov15}
for high-dimensional models, some of which are reviewed by \cite{Dezeure14}.

\subsection{Group inference}

In this article, we consider a linear model
\begin{equation}\label{eq:model}
y=X \beta_0 + \varepsilon,
\end{equation}
where $\beta_0\in\R^p$ is the unknown parameter of interest, $y\in\R^n$ is a response vector, 
$X=[X_1\mid \cdots \mid X_p]\in \R^{n \times p}$ is a design matrix and
$\varepsilon\in\R^n$ is an i.i.d. error vector with mean zero and variance $\sigma^2$.
Define $\N_k=\{1,\ldots,k\}$ for an integer $k\geq 1$.
We are interested in making inference about a group of the parameters, $\beta_{0G}=(\beta_{0j})_{j\in G}$, for
$G\subset \N_p$ under a high-dimensional setting that $p>n$.
To be specific, the goal is to test the null hypothesis $H_{0G}:\beta_{0G}=0$ and construct
confidence regions for $\beta_{0G}$. These are arguably the most general inference problems,
obviously including individual inference about $\beta_{0j}$
as a special case when we choose $G$ to be a singleton.

Group inference arises naturally in applications where predictors have a block structure.
For instance, inference about a group of genomic loci within the same gene for its association with
a disease can identify responsive genes for the disease. 
Even if there is no application-driven block structure among the predictors,
group inference may still be useful. By grouping variables, one can detect signals that
are too small to detect individually. High correlation among predictors is 
a well-known difficulty for the Lasso and related individual inference approaches.
In this situation, grouping highly correlated predictors in the block Lasso will greatly stabilize 
the inference and increase detection power. Due to these advantages and practical usage,
a few methods have been proposed in recent papers for group inference.
A de-biased group Lasso is proposed by \cite{MitraZhang14} as a generalization of
the de-biased Lasso for more efficient group inference. \cite{vandeGeer15} define
a de-sparsified estimator for $\beta_{0G}$ with a surrogate Fisher information matrix constructed
by a multivariate square-root Lasso. 
\cite{Meinshausen15} develops the group-bound method to construct a one-sided confidence interval 
for $\|\beta_{0G}\|_{1}$ and shows that it is possible to detect the joint contribution of a group of highly
correlated predictors even when each has no significant individual effect. 
\cite{ZhouMain15} establish that a parametric bootstrap is asymptotically valid
for the group Lasso and demonstrate the advantages of grouping in finite-sample inference. 

A  large portion of the above methods perform statistical inference based on
the sampling distribution of an estimator $\estpivot=\estpivot(\hbeta,y,X)$ constructed as a function 
of $\hbeta$, which is either the Lasso or the group Lasso \citep{YuanLin06} depending
on whether group structure is used. 
Examples of such an estimator $\estpivot$ include the de-biased Lasso, 
the de-biased group Lasso, and
obviously the trivial case $\estpivot=\hbeta$ in those methods that directly estimate the distribution 
of $\hbeta$. %such as in \cite{ZhouMain15}. % \citep{Chatterjee13,ZhouMain15}.
There are two big challenges in these approaches. First, the finite-sample distribution of
$\estpivot$ is not well-understood, due to the high dimensionality $p>n$ and the sparsity in
$\hbeta$. Consequently, bootstrap has been used to approximate this distribution for inference.
Although the de-biased estimators have a nice asymptotic normal distribution when $n\to\infty$, 
they can be far from normally distributed when $n$ is finite. Indeed, some recent papers
propose to bootstrap the de-biased Lasso as a better alternative \citep{ZhangCheng16,Dezeure16}.
Then, here comes the second challenge: How to efficiently simulate from the bootstrap distribution
or an estimated sampling distribution of $\estpivot$? Without an explicit characterization of
the finite-sample distribution of the group Lasso (or Lasso), it appears that one can only rely
on bootstrap which can be computationally inefficient, or even impractical, for some
calculations, such as approximating tail probabilities in significance tests
and conditional sampling given a selected model in post-selection inference.

\subsection{Contributions of this work}

To meet the aforementioned challenges in group inference,
we develop the method of estimator augmentation for the block Lasso.
Partition the predictors into $J$ disjoint groups $\calG_j\subset \N_p$ for $j=1,\ldots,J$. 
%That is, $\cup_{j=1}^J \calG_j=\N_p$ and $\calG_j \cap \calG_k= \varnothing$ for every $j \ne k$.
For $\beta=(\beta_1,\ldots,\beta_p)$, let $\beta_{(j)}=(\beta_k)_{k\in\calG_j}$ for $j\in\N_J$.
Given $\alp\in[1,\infty]$,
the block Lasso is then defined via minimizing a penalized loss function $L(\beta;\alp)$:
\begin{equation}\label{eq:blocklassodef}
\hbeta \in \argmin_{\beta\in\R^p} \left\{ L(\beta;\alpha)\defi\frac{1}{2n} \|y-X \beta \|^2 + 
\lambda \sum_{j=1}^J w_j \| \beta_{(j)}\|_{\alpha}\right\},
\end{equation}
where $\|\cdot\|$ denotes the Euclidean norm. 
The weight $w_j >0$ usually depends on the group size $p_j = |\calG_j|$.
The regularizer is the block-$(1,\alpha)$ norm (when $w_j=1$) of $\beta$, hence the name block Lasso.
Note that the Lasso and the group Lasso can be regarded as the special cases of $\alp=1$ and $\alp=2$, respectively.
In the context of group inference, we can always choose a partition so that $G=\calG_j$ for some $j$,
which translates our task into inference about $\beta_{0(j)}$ using some function of $\hbeta$.
Instead of the distribution of $\hbeta$, we work on the joint distribution of a so-called 
augmented estimator $(\hbeta,S)$, where $S=S(y,X)\in\R^p$ is a vector. Under a particular choice of $S$,
we are able to obtain a closed-form density for the {\em exact} distribution of the augmented estimator
for any finite $n$ and $p$ and for all $\alp\in[1,\infty]$. Given the density, one may use Monte Carlo methods,
such as importance sampling, to draw from the joint distribution and simultaneously obtain samples of
$\hbeta$ and any function of $\hbeta$, such as the estimator $\estpivot$ used in an inferential method.
This method serves as a powerful and efficient alternative to parametric bootstrap
for $\estpivot$, and can potentially be applied in any group inference approach 
that utilizes some function of the block Lasso.
Estimator augmentation is especially useful in determining the significance in a hypothesis test
and approximating $[\estpivot\mid \hbeta \in B]$ for some event $B$.
In both scenarios, we need to sample on a rare event, which is known to be difficult
and sometimes impossible for bootstrap. We will demonstrate
such applications with two group inference approaches,
one using the group Lasso and the other a de-biased group Lasso.

Estimator augmentation was first developed for the Lasso in the work by \cite{Zhou14},
which does not respect any group structure.
Generalizing the method to the block Lasso for all block norms ($\alp\in(1,\infty]$) turns out 
to be very challenging technically. The sample space of the augmented estimator $(\hbeta,S)$,
which can be represented by a collection of manifolds with nonzero curvature, becomes more complicated. 
The joint distribution
is thus defined over a curved space, a significant distinction from the augmented Lasso estimator.
Along the development, we also identify a set of sufficient conditions
for solution uniqueness for the block Lasso, which are weaker and more transparent than known results.
When we were finalizing this manuscript, \cite{Tian16} posted a preprint in which they generalize
the technique of estimator augmentation to derive densities for selective sampling in a randomized
convex learning program. This exemplifies that estimator augmentation may have
much wider applications than what has been considered in our paper.

In addition to the above theoretical contributions, the significance of this work is also seen
from its application in group inference, especially when the group size $p_j$ is large.
\cite{MitraZhang14} prove that de-biasing a scaled group Lasso can achieve an efficiency gain
in group inference by a factor of $\sqrt{p_j}$ over a de-biased Lasso. 
\cite{ZhouMain15} show that, under certain conditions, bootstrapping the group Lasso
can reach a convergence rate of $n^{-1/2}$ if $\log J =O(p_j)$, which never holds for the Lasso ($p_j=1$)
in the high-dimensional setting $p\gg n$.
These results demonstrate the benefit of group sparsity in making inference about a group of parameters.
Our development of estimator augmentation for the block Lasso enables efficient simulation
from the sampling distributions of the group Lasso and the de-biased group Lasso, which is an essential
component in practical applications of these inferential approaches. 

%By conditioning and introducing random terms in a penalized loss, they reduce
%some technical challenges in their derivations. However, their main theorem does not apply to the 
%sampling distribution of the block Lasso and their target application of post-selective inference is 
%not the main focus of this work either.
%A more technical comparison can be found in Section~\ref{sec:density}.

%Organization if necessary

%\subsection{Notations}
Notation used in this paper is defined as follows.
Let $A\subset \N_p$ be an index set. %where $\subset$ denotes nonstrict set-inclusion hereafter.
%and $B=\{i_1,\ldots,i_{\ell}\} \subset \N_n$ be two index sets. 
For a vector $v=(v_j)_{1:p}$, we define $v_A=(v_j)_{j\in A}$.
For a matrix ${M}=(M_{ij})_{n\times p}$,
write its columns as $M_j$, $j=1,\ldots,p$, define ${M}_A=(M_{j})_{j\in A}$ as a matrix of size $n\times |A|$
consisting of columns in $A$, and similarly define ${M}_{BA}=(M_{ij})_{i\in B, j\in A}$ with $B\subset \N_n$.
%Denote by $\diag(v)$ the $m\times m$ diagonal matrix with $v$ as the diagonal elements.
%where the submatrices ${M}$ and ${M}'$ may be of different sizes and may not be square.
%For a square matrix ${M}$, $\diag({M})$ extracts the diagonal elements, $\tr({M})$ is the trace of ${M}$, 
%and $\Lambda_j({M})$ denotes its eigenvalues. 
%Moreover, $\Lambda_{\max}({M})$ and $\Lambda_{\min}({M})$ denote the maximum and the minimum eigenvalues,
%respectively, or the supremum and the infimum when the size of ${M}$ approaches infinity. 
%Denote by $\bfI_m$ the $m\times m$ identity matrix.
Given the group structure $\calG$, let $\calG_A = \cup_{j\in A} \calG_j\subset \N_p$ for $A \subset \N_J$.
Define $v_{(A)} = v_{\calG_A}$
with the special case $v_{(j)}=v_{\calG_j}$ and let $G(v)=\{j\in\N_J: v_{(j)}\ne 0\}$ be the active groups of $v$.
For an $n \times p$ matrix ${M}$, ${M}_{(A)}={M}_{\calG_A}$, and for a $p \times p$
matrix ${M}$, ${M}_{(AB)}={M}_{\calG_A \calG_B}$, where $B \subset \N_J$. 
Denote by ${M}^+$ the Moore-Penrose pseudo-inverse of a matrix ${M}$ so ${M}^+=({M}^\trans {M})^+ {M}^\trans$
when ${M}$ is not a square matrix. 
We use $\row(M)$ and $\nul(M)$ to denote the row space and the null space of $M$, respectively.
Let $\diag({M},{M}')$ be the block diagonal matrix with ${M}$ and ${M}'$ as the diagonal blocks.
Denote by $\phi_n(\bullet;c)$ the density of $\dnorm_n(0,c\bfI_n)$ for $c>0$. 
Let $\mbS^{m-1}_{\alp}=\{v\in\R^m: \|v\|_{\alp}=1\}$ be 
the unit $\ell_{\alp}$-sphere in $\R^m$.
We may suppress $(m-1)$ and simply write $\mbS_{\alp}$ when the dimension 
does not need to be specified explicitly.

Throughout the paper, let $\alpha^*$ be conjugate to $\alpha$ in the sense that $\frac{1}{\alpha}+\frac{1}{\alp^*}=1$.
We will assume that $\alpha\in(1,\infty)$ unless noted otherwise in next three sections, 
and leave to Section~\ref{sec:inftynorm}
the special case $\alpha=\infty$ whose technical details are slightly more complicated.
Although not the focus of this paper, the results for the Lasso can be obtained 
as another special case ($\alp=1$) after some simple modifications 
of the corresponding results for $\alp\in(1,\infty)$.

\section{The Basic Idea}\label{sec:basic}

In this section, we give an overview of the idea of estimator augmentation. 
%A key step is to identify conditions under which the block Lasso is unique. To do that, 
We start with 
the Karush-Kuhn-Tucker (KKT) conditions for the minimization problem~\eqref{eq:blocklassodef}.
Under uniqueness of the block Lasso, we will establish a bijection between $y$ and the augmented
estimator and derive the joint density of its sampling distribution. We note that 
solution uniqueness for the block Lasso is an interesting topic in its own right, and the sufficient conditions
in this work are much more transparent than those in the existing literature. 

\subsection{The KKT conditions}

Denote by $\sgn(\cdot)$ the sign function with the convention that $\sgn(0)=0$.
For a scalar function $f: \R\to\R$ and a vector $v=(v_i)\in\R^m$, 
we define 
\begin{equation}\label{eq:fonvec}
f(v):=(f(v_1),\ldots,f(v_m)).
\end{equation}

\begin{definition}\label{def:rhofunction}
For $\alp\in(1,\infty)$, let $\rho=\alp^*/\alp \in (0,\infty)$ and define $\eta:[-1,1]\to[-1,1]$ by
\begin{equation*}
\eta(x)=\eta(x;\rho)=\sgn(x)|x|^\rho.
\end{equation*}
Denote its inverse function by $\eta^{-1}(x)=\sgn(x)|x|^{1/\rho}$.
\end{definition}

\noindent Some basic properties about $\eta$ are given in Lemma~\ref{lm:rhoStoS} in Appendix~\ref{sec:app_uniqueness}. In particular, $\eta(v)$ for $v\in\mbS_{\alp^*}$, 
interpreted in the sense of \eqref{eq:fonvec}, 
is a bijection from $\mbS_{\alp^*}$ onto $\mbS_{\alp}$. This fact is used in \eqref{eq:Sdef} below.

Let $S=(S_1,\ldots,S_p) \in \R^p$ such that $S_{(j)} \in \R^{p_j}$ is a subgradient of $\|\beta_{(j)}\|_{\alp}$ 
evaluated at the solution $\hbeta_{(j)}$ of \eqref{eq:blocklassodef}.
%Put $\tdbeta_{(j)}=\hbeta_{(j)}/\|\hbeta_{(j)}\|_{\alp}\in\mbS^{p_j-1}_{\alp}$ if $\hbeta_{(j)}\ne 0$. 
According to Lemma~\ref{lm:subdiff} on the subdifferential of $\|\cdot\|_{\alp}$, we have
\begin{align}\label{eq:Sdef}
\begin{cases}
S_{(j)} = \eta^{-1}(\hbeta_{(j)}/\|\hbeta_{(j)}\|_{\alp})\in\mbS^{p_j-1}_{\alp^*} &\text{ if } \hbeta_{(j)} \ne 0  \\
\| S_{(j)}\|_{\alp^*} \leq 1 &\text{ if } \hbeta_{(j)}=0. 
\end{cases}
\end{align}
For the special case $\alp=\alp^*=2$ (group Lasso), $\eta(v)=\eta^{-1}(v)=v$ and
the above subgradient reduces to the familiar result in \cite{YuanLin06}.
Put ${W}=\diag(w_1 \bfI_{p_1},\ldots,w_J \bfI_{p_J})\in\R^{p\times p}$.
The KKT conditions of \eqref{eq:blocklassodef}, which are both sufficient and necessary, are
\begin{equation}\label{eq:KKTwhole}
\frn X^{\trans} X \hbeta + \lambda {W} S = \frn X^{\trans} y
\end{equation}
for a vector $S$ satisfying \eqref{eq:Sdef}.

\begin{definition}\label{def:ae}
Let $S$ be defined by \eqref{eq:Sdef} and \eqref{eq:KKTwhole}.
We will call $(\hbeta,S)\in\R^{2p}$ an augmented solution to the block Lasso problem \eqref{eq:blocklassodef}.
When we study the sampling distribution of $\hbeta$, the random vector $(\hbeta,S)$ will be
called an {\em augmented estimator}. 
\end{definition}

If $(\hbeta,S)$ is unique for each $y$,
then \eqref{eq:KKTwhole} defines a bijective mapping from the space of $(\hbeta,S)$ onto 
the space of $y$, %or the noise vector $\veps$. 
which is the inverse of the minimization program \eqref{eq:blocklassodef} that maps $y$ to $(\hbeta,S)$.
From the density of $y$ or $\veps$, it is hopeful to derive the
joint density of the augmented estimator $(\hbeta,S)$ via this bijective mapping. 
Then one may apply Monte Carlo methods, such as Markov chain Monte Carlo (MCMC) and importance sampling,
to draw from the joint distribution of the augmented estimator. 
As a marginal distribution, the sampling distribution of $\hbeta$ can be
readily approximated by Monte Carlo samples, as well as any function of $(\hbeta,S)$. 
This is the basic idea of estimator augmentation.
Although the idea seems intuitive, there are a few technical difficulties in the implementation:
\begin{enumerate}
\item To establish the uniqueness of $(\hbeta,S)$ under fairly general situations.
\item To characterize the sample space for $(\hbeta,S)$, which appears to be a $2p$-vector
but in fact lives in the union of a finite number of $n$-dimensional manifolds. This makes the aforementioned
bijection conceivable since $\veps\in\R^n$.
\item To calculate the Jacobian of the mapping and obtain the target density via a change of variable.
\end{enumerate}
We will establish the solution uniqueness in the remainder of this section,
and take care of the other two major steps in Section~\ref{sec:EA}.
Although the basic idea follows from that in \cite{Zhou14}, there are substantial new technical issues
in each of the three steps, which will be discussed in the sequel.

\subsection{Uniqueness}

We briefly present here
the most relevant results about solution uniqueness for the block Lasso, 
while leaving many useful intermediate results 
and proofs to Appendix~\ref{sec:app_uniqueness}.
Despite that the KKT conditions only require the existence of a subgradient, it turns out that $S$
is always unique:

\begin{lemma}\label{lm:fittedandS}
For any $y$, $X$ and $\lambda>0$, every $\hbeta$ \eqref{eq:blocklassodef} for $\alp\in[1,\infty]$ gives the same fitted value $X\hbeta$ and the same subgradient $S$.
\end{lemma}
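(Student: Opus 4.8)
The plan is to exploit convexity of the objective $L(\beta;\alpha)$ in \eqref{eq:blocklassodef} together with strict convexity of the squared-error term in the fitted value $X\beta$. First I would suppose, for contradiction, that $\hbeta^{(1)}$ and $\hbeta^{(2)}$ are two minimizers with $X\hbeta^{(1)} \ne X\hbeta^{(2)}$. Since $L(\cdot;\alpha)$ is convex and both points attain the minimum value $c^* := \min_\beta L(\beta;\alpha)$, every convex combination $\hbeta^{(t)} = (1-t)\hbeta^{(1)} + t\hbeta^{(2)}$ for $t\in[0,1]$ must also attain $c^*$. Now decompose $L$ as the sum of $g(\beta) := \frac{1}{2n}\|y - X\beta\|^2$ and $h(\beta) := \lambda\sum_j w_j\|\beta_{(j)}\|_\alpha$. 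The map $u\mapsto \frac{1}{2n}\|y-u\|^2$ is \emph{strictly} convex on $\R^n$, so $g$ is strictly convex along any segment on which $X\beta$ is not constant; meanwhile $h$ is convex (a nonnegative combination of norms). Hence $t\mapsto L(\hbeta^{(t)};\alpha)$ would be strictly convex on $[0,1]$, contradicting that it is constant $\equiv c^*$. Therefore $X\hbeta$ is the same for every minimizer; call it $\hat u = X\hbeta$.

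Given that $X\hbeta$ is common to all solutions, I would next show $S$ is common. Rearranging the KKT condition \eqref{eq:KKTwhole}, every augmented solution satisfies
\begin{equation*}
\lambda W S = \frac{1}{n}X^\trans(y - X\hbeta) = \frac{1}{n}X^\trans(y - \hat u),
\end{equation*}
whose right-hand side depends only on $y$, $X$ and the already-established common fitted value $\hat u$, not on which minimizer $\hbeta$ we picked. Since $W = \diag(w_1\bfI_{p_1},\ldots,w_J\bfI_{p_J})$ with all $w_j>0$ is invertible and $\lambda>0$, this determines $S$ uniquely as $S = \frac{1}{\lambda n}W^{-1}X^\trans(y-\hat u)$. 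Thus $S$ is the same for all choices of $\hbeta$ in \eqref{eq:blocklassodef}.

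For the endpoint cases $\alpha=1$ and $\alpha=\infty$, the penalty $h$ is still a finite nonnegative combination of norms, hence convex, so the strict-convexity argument for the fitted value goes through verbatim; the KKT stationarity condition \eqref{eq:KKTwhole} (with $S_{(j)}$ in the appropriate subdifferential, $\|S_{(j)}\|_\infty\le 1$ or $\|S_{(j)}\|_1\le1$ respectively) still holds at any minimizer by standard subgradient optimality, so the elimination of $S$ via $W^{-1}$ is unchanged. I do not expect a serious obstacle here; the one point needing a line of care is justifying that $L(\hbeta^{(t)};\alpha)=c^*$ for all $t$ forces $g$ to be affine along the segment --- this follows because $g \le c^* - h$ is concave-dominated while being convex and matching at the endpoints, or more simply because $L = g+h$ constant and $h$ convex force $g$ concave along the segment, and $g$ convex then forces $g$ affine, which for a strictly convex quadratic in $X\beta$ means $X\beta$ is constant. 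This is the only step where I would be slightly careful; everything else is routine.
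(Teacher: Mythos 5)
Your proposal is correct and follows essentially the same route as the paper: strict convexity of $u\mapsto\|y-u\|^2$ forces all minimizers to share the fitted value $X\hbeta$, and then the KKT conditions \eqref{eq:KKTwhole} with $\lambda>0$ and $W$ invertible give $S=(n\lambda W)^{-1}X^\trans(y-X\hbeta)$, which depends only on the common fit. The paper's version is slightly more direct (it derives the contradiction at a single convex combination rather than arguing about constancy along the whole segment), but the substance is identical, including the treatment of $\alpha\in\{1,\infty\}$.
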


This lemma covers the full domain of $\alp$, including the boundary cases $\alp=1$ (Lasso) and $\alp=\infty$.
Hereafter, we call $S$ ``the" subgradient vector due to its uniqueness.
%The uniqueness of $X\hbeta$ for the Lasso can be found in \cite{Tibshirani13}.
Let $U = \frn X^{\trans} \veps \in \R^p$, and denote the Gram matrix by $\Psi=\frn {X}^{\trans} {X}$ hereafter. 
The KKT conditions \eqref{eq:KKTwhole} can be written as
\begin{equation}\label{eq:KKTU}
\Psi \hbeta + \lambda {W} S - \Psi \beta_0=U.
\end{equation}
Since $U \in \row(X)$ and $\Psi(\hbeta-\beta_0) \in \row(X)$, we have
\begin{equation}\label{eq:SinrowX}
{W} S \in \row(X) \Leftrightarrow S\in \row(X{W}^{-1}):=\calV \subset \R^p.
\end{equation}
Next, we state sufficient conditions for the uniqueness of $\hbeta$ and thus the uniqueness of 
the augmented solution $(\hbeta,S)$.

\begin{assumption}\label{as:X}
Every $(n \wedge p)$ columns of $X$ are linearly independent.
\end{assumption}

\begin{definition}\label{def:genpos}
We say that the columns of a matrix $M\in\R^{n\times p}$ is in blockwise general position with respect
to $(\calG,\alp)$ if for all $s\in\row(M)$, the vectors 
$\{M_{(j)}\eta({s}_{(j)}): j\in \calE\}$ are in general position,
where $\calE=\{j\in\N_J:\|s_{(j)}\|_{\alp^*}=1\}$.
\end{definition}

\begin{assumption}\label{as:pj}
(a) The block size $p_j=|\calG_j|\leq n$ for all $j\in\N_J$.
(b) The columns of $XW^{-1}$ are in blockwise general position with respect to $(\calG,\alp)$.
\end{assumption}

%\begin{remark}\label{rmk:uniqueconditions}
The two assumptions are quite weak. 
Assumption~\ref{as:X} simply states that $X$ does not satisfy
any additional linear constraint other than those that must be satisfied by any $n\times p$ matrix.
If the entries of $X$ are drawn from a continuous distribution,
then Assumption~\ref{as:X} holds with probability one. 
Assumption~\ref{as:pj}(a) can be regarded as a minimum sample size requirement, $n \geq m=\max_j p_j$.
To establish estimation consistency for the block Lasso, one needs the scaling that
$(m\vee \log J)/n \to 0$ \citep{Negahban12,ZhouMain15}, which is stronger than this assumption. 
To help understand the intuition behind Assumption~\ref{as:pj}(b), choose $W=\bfI_p$ to simplify the exposition.
Then this assumption is imposed on the vectors $X_{(j)}v_{(j)}$, where 
$v_{(j)}=\eta(s_{(j)})\in \mbS_{\alp}$ (Lemma~\ref{lm:rhoStoS}) and $s\in\calV$.
Under Assumption~\ref{as:X} with $n\leq p$, 
$\dim(\calV)=n$ and $v=\eta(s)\in\R^p$ has only $n$ free coordinates.
Thus, we essentially require  linear combinations of disjoint subsets of any $n$ columns of $X$ be in general
position, which is a mild condition in practice. Note that $p_j\leq n$ ensures that the vector $X_{(j)}v_{(j)}$
can be uniquely represented in $\spn(X_{(j)})$.
For the special case of Lasso,
Assumption~\ref{as:pj}(b) reduces to that the columns of $X$ are in general position, the same as in \cite{Tibshirani13}. 
%In fact, all the main results in this work are established under these two assumptions and thus 
%apply to most practical situations.
%\end{remark}

\begin{theorem}\label{thm:uniqsuff}
Suppose that Assumptions~\ref{as:X} and \ref{as:pj} hold. 
For any $\lambda>0$ and $y\in\R^n$,
the solution $\hbeta$ to the block Lasso problem \eqref{eq:blocklassodef} with $\alp\in[1,\infty)$ is unique 
and $|G(\hbeta)|\leq n\wedge J$.
\end{theorem}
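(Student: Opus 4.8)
The plan is to follow the standard uniqueness argument for the Lasso (as in \cite{Tibshirani13}), but adapted to the block structure through the map $\eta$ and the blockwise general position condition. Fix $y$, $X$, and $\lambda>0$. By Lemma~\ref{lm:fittedandS}, any two solutions $\hbeta$ and $\hbeta'$ share the same fitted value $X\hbeta=X\hbeta'$ and the same subgradient $S$, so they have the same active group set $\hA:=G(\hbeta)=\{j:\|S_{(j)}\|_{\alp^*}=1\}$ --- for $j\notin\hA$ we have $\|S_{(j)}\|_{\alp^*}<1$, which forces $\hbeta_{(j)}=0$ by \eqref{eq:Sdef}. So both solutions are supported on $\calG_{\hA}$. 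First I would argue that on the active groups, the relation $S_{(j)}=\eta^{-1}(\hbeta_{(j)}/\|\hbeta_{(j)}\|_{\alp})$ from \eqref{eq:Sdef}, equivalently $\hbeta_{(j)}=\|\hbeta_{(j)}\|_{\alp}\,\eta(S_{(j)})$, shows that $\hbeta_{(j)}$ is determined by $S_{(j)}$ up to the positive scalar $c_j:=\|\hbeta_{(j)}\|_{\alp}$. Since $\eta(S_{(j)})\in\mbS_{\alp}$ (Lemma~\ref{lm:rhoStoS}) and $S$ is common to all solutions, the \emph{direction} of $\hbeta_{(j)}$ is pinned down; only the magnitudes $c_j\ge 0$, $j\in\hA$, are free, and moreover $c_j>0$ exactly when $j\in G(\hbeta)$.

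Next I would use the shared fitted value. Writing $v_{(j)}=\eta(S_{(j)})$ for $j\in\hA$, every solution satisfies
\[
X\hbeta=\sum_{j\in\hA} c_j\, X_{(j)} v_{(j)} = \text{const},
\]
so if two solutions gave distinct magnitude vectors $(c_j)$ and $(c_j')$, their difference would yield a nontrivial linear dependence among the vectors $\{X_{(j)}v_{(j)}:j\in\hA\}$. Here is where I invoke $S\in\calV=\row(XW^{-1})$ from \eqref{eq:SinrowX}: since $S$ lies in that row space, Definition~\ref{def:genpos} applies with $s=S$ and $\calE\supseteq\hA$ (indeed $\calE$ is exactly the set of groups with $\|S_{(j)}\|_{\alp^*}=1$, i.e. $\calE=\hA$), so the vectors $\{(XW^{-1})_{(j)}\eta(S_{(j)}):j\in\hA\}$ are in general position by Assumption~\ref{as:pj}(b). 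A small bookkeeping point: $(XW^{-1})_{(j)}\eta(S_{(j)})=w_j^{-1}X_{(j)}\eta(S_{(j)})$, so up to the positive weights $w_j$ these are the same vectors $X_{(j)}v_{(j)}$, and general position is unaffected by positive rescaling. General position of these $|\hA|$ vectors in $\R^n$ means in particular that if $|\hA|\le n$ they are linearly independent, which immediately rules out the nontrivial dependence above and forces $c_j=c_j'$ for all $j$, hence $\hbeta=\hbeta'$: uniqueness. It therefore remains only to show $|G(\hbeta)|=|\hA|\le n\wedge J$.

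The bound $|\hA|\le J$ is trivial. For $|\hA|\le n$ I would argue by contradiction using general position again. Suppose $|\hA|\ge n+1$. General position of a collection of $\ge n+1$ vectors in $\R^n$ (in the sense of Definition~\ref{def:genpos}, which for the Lasso case is precisely the Tibshirani notion) says that no $n+1$ of them, after arbitrary sign flips, lie in a common $(n{-}1)$-dimensional affine subspace; equivalently, for any such subcollection of size $n+1$ there is no affine dependence $\sum a_j (\pm X_{(j)}v_{(j)})=0$ with $\sum a_j=0$ and not all $a_j$ zero. But the KKT stationarity relation \eqref{eq:KKTU} restricted to the active coordinates, combined with $X\hbeta$ being fixed, yields exactly such an affine/linear relation among the active vectors --- the argument mirrors the Lasso proof, where one shows that an active set of size $n+1$ would have to be in a position violating general position. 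Concretely, one perturbs $\hbeta$ along the null space of $X_{(\hA)}$ (nonempty once the number of active columns exceeds $n$, though here we must be careful because the active \emph{groups}, not columns, are what is controlled; this is where $p_j\le n$ and Assumption~\ref{as:X} enter, ensuring each $X_{(j)}v_{(j)}$ is a well-defined nonzero vector and the relevant spans behave) and checks that the perturbed point remains a solution with a strictly smaller active set unless a forbidden dependence holds. I expect this last part --- translating ``active columns exceed $n$'' into ``active \emph{groups} exceed $n$'' and extracting the precise affine dependence that contradicts blockwise general position --- to be the main obstacle, since the block norm's curvature means the perturbation direction must respect the constraint $\hbeta_{(j)}\parallel v_{(j)}$ on each active group rather than being free in $\R^{p_j}$; I would handle it by perturbing only the scalars $c_j$, reducing to a linear system in $|\hA|$ unknowns with coefficient vectors $X_{(j)}v_{(j)}\in\R^n$, for which $|\hA|>n$ forces a dependence and hence (by general position) a contradiction.
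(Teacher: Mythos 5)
Your overall strategy is the paper's: use Lemma~\ref{lm:fittedandS} to fix the fit and the subgradient, reduce each active block to a single scalar $c_j$ via $\hbeta_{(j)}=c_j\,\eta(S_{(j)})$, and read non-uniqueness (or an oversized active set) as a dependence among the vectors $Z_j=X_{(j)}\eta(S_{(j)})$ that blockwise general position forbids. But there is a genuine gap exactly at the step you flag as uncertain, and it is the heart of the proof. A \emph{linear} dependence among at most $n$ vectors in $\R^n$ does not contradict general position in the sense of Definition~\ref{def:genpos} (the Tibshirani notion): for instance $(1,0)$ and $(2,0)$ in $\R^2$ are in general position yet linearly dependent, because general position only excludes \emph{affine} dependences (no vector lying, up to signs, in the affine span of the others). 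So your claim that general position of $\le n$ vectors implies their linear independence is false, and the deduction $c_j=c_j'$ does not follow. The paper closes this gap with one ingredient your proposal never uses: pairing the blockwise KKT conditions \eqref{eq:KKT} with the residual $r=y-X\hbeta$ gives $\langle Z_j,r\rangle=n\lambda w_j\langle\eta(S_{(j)}),S_{(j)}\rangle=n\lambda w_j$ for every $j$ with $\|S_{(j)}\|_{\alp^*}=1$, where $\langle\eta(v),v\rangle=1$ is Lemma~\ref{lm:rhoStoS} (the equality case of H\"{o}lder). Taking the inner product of any dependence $Z_i/w_i=\sum_j c_j Z_j/w_j$ with $r$ then forces $\sum_j c_j=1$, i.e.\ a genuinely \emph{affine} dependence, and only then does Assumption~\ref{as:pj}(b) yield a contradiction. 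This same identity also handles the case of more than $n$ equicorrelated groups (such vectors are automatically linearly dependent, and the normalization upgrades this to an affine dependence), so the paper proves uniqueness and the bound $|G(\hbeta)|\le n\wedge J$ in one stroke by showing $\nul(Z)=\{0\}$ for $Z=(Z_j)_{j\in\calE}$; your ordering, by contrast, uses $|\hA|\le n$ inside the uniqueness argument before establishing it.

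A secondary issue: you identify $G(\hbeta)$ with $\calE=\{j:\|S_{(j)}\|_{\alp^*}=1\}$. These need not coincide --- a group can satisfy $\|S_{(j)}\|_{\alp^*}=1$ while $\hbeta_{(j)}=0$ --- and the paper explicitly points out that assuming $G(\hbeta)=\calE$ is the restrictive hypothesis of earlier work that Theorem~\ref{thm:uniqsuff} is designed to avoid. The argument must be run on $\calE$, the common superset of the supports of all solutions; what one shows is that $\hgamma_\calE$ is the unique solution of $Z\hgamma_\calE=\hy$, from which $|G(\hbeta)|\le|\calE|\le n\wedge J$ follows.
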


Since solution uniqueness is a topic of independent interest, we make a brief comparison
to some existing results. Theorem~\ref{thm:uniqsuff} 
unifies a few important special cases, including the Lasso ($\alp=1$) and the group Lasso ($\alp=2$).
For $\alp=1$, this theorem is comparable to the result in \cite{Tibshirani13},
while the existing results about the uniqueness of the group Lasso involve conditions that are much less
transparent than the ones stated here. %Let $A=G(\hbeta)$ be the active groups. 
As an example, Theorem 3 in \cite{RothFischer08} states that, under Assumption~\ref{as:X}, 
the group Lasso solution $\hbeta$ (with $\alp=2$) is unique
if (i) $|\calG_A|\leq n$, where $A=G(\hbeta)$ is the
active groups, and (ii) $A=\{j\in\N_J: \|S_{(j)}\|=1\}$. Unlike Assumption~\ref{as:pj} which is
imposed on $X$ explicitly, conditions (i) and (ii) are implicit in nature and 
can be verified only after a particular solution is calculated.
According to Theorem~\ref{thm:uniqsuff}, it is possible to have a unique solution 
when $|\calG_A|>n$ as long as $|A|\leq n$, i.e., there are no more than $n$ active groups but
the total number of active coefficients of $\hbeta$ could be greater than the sample size.
Such cases are not covered by the result in \cite{RothFischer08}.
As will become clear in next section, the set of $\hbeta$ satisfying (i) and (ii) is
a proper subset of the full space of unique solutions and thus, in general, will have a probability mass 
strictly less than one.

\section{Estimator Augmentation}\label{sec:EA}

We will go though the main steps in detail to derive the joint density of the augmented estimator $(\hbeta,S)$,
which is useful for understanding this method. Section~\ref{sec:space} characterizes the sample space
of $(\hbeta,S)$, Section~\ref{sec:mapping} defines explicitly the bijective mapping from the KKT conditions,
and Section~\ref{sec:density} derives the joint density.
A few concrete examples will follow in Section~\ref{sec:examples} to illustrate the method.
The joint density of $(\hbeta,S)$ depends on the true parameter $\beta_0$ and the error distribution.
We will discuss in Section~\ref{sec:infer} how to
apply estimator augmentation in high-dimensional inference.
By default, we assume $p\geq J\geq n$. The results for $p<n$ will be obtained as special cases.

\subsection{Sample space}\label{sec:space}

Denote by $\hgamma=(\hgamma_j)\in\R^J$ the vector of the block norms of $\hbeta$, i.e.
$\hgamma_j=\|\hbeta_{(j)}\|_{\alp}$. % for $j\in \N_J$.
It follows from \eqref{eq:Sdef} that $\hbeta_{(j)}=\hgamma_j \eta(S_{(j)})$ for all $j\in\N_J$.
Thus, the augmented estimator $(\hbeta,S)$ can be represented by the triplet $(\hgamma_{\calA},S,\calA)$,
where $\calA=G(\hbeta)$ is a random subset of $\N_J$ when considering the sampling distribution.
Given $\calA=A$ for a fixed subset $A\subset \N_J$, it is seen from \eqref{eq:Sdef} and \eqref{eq:SinrowX} that the sample space for $S$ is 
\begin{equation}\label{eq:defcalSofA}
\calM_A=\left\{s\in  \calV: \|s_{(j)}\|_{\alp^*}=1\;\forall j\in A \text{ and } \|s_{(j)}\|_{\alp^*}\leq 1\;\forall j\notin A\right\}.
\end{equation}
Since $s_{(j)}\in \mbS^{p_j-1}_{\alp^*}$ for $j\in A$ and $\dim(\calV)=n$ under Assumption~\ref{as:X}, 
$\calM_A$ is an $(n-|A|)$-manifold in $\R^{p}$ if $|A|\leq n$
and it is the product of unit $\ell_{\alp^*}$-spheres and balls intersecting with the linear subspace $\calV$.
Correspondingly, the space for $(\hgamma_{A},S)$ given $A$ is
$\Omega_A=(\R^+)^{|A|}\times \calM_A$, which is an $n$-manifold.
%Put the natural measure over $\Omega_A$ which calculates the volume of a measurable set.
Taking union over subsets of size $\leq n$, we obtain the sample space for the augmented estimator
$(\hgamma_{\calA},S,\calA)$:
\begin{equation}\label{eq:defOmg}
\Omega=\bigcup_{|A|\leq n} \Omega_A \times \{A\}.
\end{equation}

\begin{remark}
We do not have to consider $\{|\calA|>n\}$, since this never happens
under the assumptions of Theorem~\ref{thm:uniqsuff}. Hereafter, we always regard the essential range of $\calA$ as
\begin{equation}\label{eq:Arange}
\scrA:=\{A\subset \N_J: |A|\leq n\}.
\end{equation}
\end{remark}

In summary, the sample space of the augmented estimator, represented by the triplet $(\hgamma_{\calA},S,\calA)$,
is the union of a finite number of $n$-manifolds. Thus, it is possible
to find a bijective mapping from this space to $\R^n$, the space for $\veps$.
For the Lasso, $\Omg_A$ degenerates to an $n$-dimensional polyhedron with zero curvature.

\begin{remark}\label{rmk:sepnorm}
Parameterizing the augmented estimator in terms of $\hgamma$ and $S$ is a critical choice for our derivations.
In this way, all the equality constraints are imposed on $S$ as in \eqref{eq:defcalSofA},
leading to familiar geometry for the spaces of $\hgamma$ and $S$, which is helpful
for understanding distributions over these spaces. It is also a nature choice, since 
the subgradient $S$ is alway unique (Lemma~\ref{lm:fittedandS}) 
and non-uniqueness comes solely from $\hgamma$ (Lemma~\ref{lm:conditionuniq}).

\end{remark}

\subsection{A bijective mapping}\label{sec:mapping}

Putting $\hbeta_{(j)}=\hgamma_j\eta(S_{(j)})$ for $j\in\calA$,
Equation \eqref{eq:KKTU} becomes
\begin{equation}\label{eq:defineH}
\frn X^{\trans}\veps=\sum_{j\in\calA}\hgamma_{j}\Psi_{(j)} \eta(S_{(j)}) + \lambda {W} S - \Psi \beta_0
:=H(\hgamma_{\calA},S,\calA;\beta_0,\lambda),
\end{equation}
which defines a mapping $H:\Omega \to \row(X)$ for any $\beta_0 \in \R^p$ and $\lambda>0$.
For notational brevity, we often suppress its dependence on $(\beta_0,\lambda)$ and write the mapping as $H(\bullet)$.
In what follows, we show that $H$ is bijective, which is a consequence of
the uniqueness of $(\hbeta,S)$, or equivalently of $(\hgamma_{\calA},S,\calA)$.

\begin{lemma}\label{lm:bijection}
Suppose $\alp\in[1,\infty)$ and that Assumptions~\ref{as:X} and \ref{as:pj} hold. 
%There exists $\Omega_0\subset \Omega$ such that 
Then  for any $\beta_0\in\R^p$ and $\lambda>0$, 
$H$ is a bijection that maps $\Omg$ onto $\row(X)$.
\end{lemma}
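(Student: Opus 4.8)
The plan is to establish the bijectivity of $H:\Omega\to\row(X)$ in two halves, injectivity and surjectivity, each of which should follow from the uniqueness results already in hand. For injectivity, suppose $(\hgamma_{\calA},S,\calA)$ and $(\hgamma'_{\calA'},S',\calA')$ are two points of $\Omega$ with the same image under $H$; call this common value $u\in\row(X)$. I would then set $y=u + \Psi\beta_0$ — more precisely, choose any $y\in\R^n$ with $\frac1n X^\trans y = u + \Psi\beta_0$, which exists since $u+\Psi\beta_0\in\row(X)$ — and observe that by construction each triplet, via $\hbeta_{(j)}=\hgamma_j\eta(S_{(j)})$, yields a pair $(\hbeta,S)$ (resp. $(\hbeta',S')$) satisfying the KKT conditions \eqref{eq:KKTwhole}, hence (being both sufficient and necessary) an augmented solution to \eqref{eq:blocklassodef} for that $y$. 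By Theorem~\ref{thm:uniqsuff}, $\hbeta$ is unique, so $\hbeta=\hbeta'$; by Lemma~\ref{lm:fittedandS} the subgradient is unique, so $S=S'$. It remains to recover the triplet representation uniquely from $(\hbeta,S)$: the active set is $\calA=G(\hbeta)=\calA'$, and for $j\in\calA$, $\hgamma_j=\|\hbeta_{(j)}\|_\alp$ is determined, so $(\hgamma_{\calA},S,\calA)=(\hgamma'_{\calA'},S',\calA')$. (One small point to verify here: that $\eta$ applied coordinatewise to $S_{(j)}\in\mbS_{\alp^*}$ lands in $\mbS_{\alp}$, so that $\|\hbeta_{(j)}\|_\alp=\hgamma_j$ really holds — this is exactly Lemma~\ref{lm:rhoStoS}.)

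For surjectivity, take an arbitrary $u\in\row(X)$. Again pick $y$ with $\frac1n X^\trans y = u+\Psi\beta_0$; such a $y$ exists because $\row(X)$ contains both $u$ and $\Psi\beta_0$. By existence of a minimizer of the convex program \eqref{eq:blocklassodef} (the objective is coercive in the fitted value and bounded below), there is an augmented solution $(\hbeta,S)$; define $\calA=G(\hbeta)$, $\hgamma_j=\|\hbeta_{(j)}\|_\alp$ for $j\in\calA$, and note $|\calA|=|G(\hbeta)|\le n\wedge J\le n$ by Theorem~\ref{thm:uniqsuff}, so $\calA\in\scrA$. By \eqref{eq:Sdef} together with \eqref{eq:SinrowX}, $S\in\calM_\calA$, and $\hgamma_{\calA}\in(\R^+)^{|\calA|}$, so $(\hgamma_{\calA},S,\calA)\in\Omega$. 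Plugging into the definition \eqref{eq:defineH} of $H$ and using the KKT identity \eqref{eq:KKTU} gives $H(\hgamma_{\calA},S,\calA) = \frac1n X^\trans(y-X\beta_0) = u$. Hence $H$ is onto.

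The genuinely substantive content is not in the argument above — which is essentially a bookkeeping translation between the KKT conditions, the $(\hbeta,S)$ representation, and the $(\hgamma_{\calA},S,\calA)$ representation — but rather in the uniqueness statements it rests on, namely Theorem~\ref{thm:uniqsuff} and Lemma~\ref{lm:fittedandS}, which are proved elsewhere. So the main obstacle here is mostly a matter of care: making sure the codomain is correctly identified as $\row(X)$ (not all of $\R^n$), checking that the map $(\hbeta,S)\mapsto(\hgamma_{\calA},S,\calA)$ and its inverse $(\hgamma_{\calA},S,\calA)\mapsto(\hbeta,S)$ via $\hbeta_{(j)}=\hgamma_j\eta(S_{(j)})$ are genuinely inverse to each other (the $\eta/\eta^{-1}$ pairing in \eqref{eq:Sdef}), and confirming that for $j\notin\calA$ we indeed have $\hbeta_{(j)}=0$ consistently with the triplet carrying no $\hgamma_j$ component — i.e., the triplet encodes exactly the same information as $(\hbeta,S)$ with no loss. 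Once these identifications are nailed down, injectivity and surjectivity both reduce to "uniqueness" and "existence" of the block Lasso solution respectively, which are already available.
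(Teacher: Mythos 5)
Your proof is correct and follows exactly the route the paper intends: the paper gives no separate proof of Lemma~\ref{lm:bijection}, stating only that bijectivity ``is a consequence of the uniqueness of $(\hbeta,S)$,'' and your argument fills in precisely that reduction --- injectivity via Theorem~\ref{thm:uniqsuff} and Lemma~\ref{lm:fittedandS} applied to the $y$ reconstructed from a common image point, and surjectivity via existence of a minimizer of the coercive convex objective together with \eqref{eq:Sdef}, \eqref{eq:SinrowX} and the bound $|G(\hbeta)|\le n$ placing the resulting triplet in $\Omega$.
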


\noindent
This lemma applies to $\alp=1$, in which case we define 
$\eta(x)=xI(|x|=1)$ and $\eta^{-1}(x)=\sgn(x)$ % \quad{\text{for }x\in[-1,1]}
by taking the limit $\rho\to \infty$ in Definition~\ref{def:rhofunction}.

The mapping $H$ is established at a quite abstract level so far. 
It will be more convenient to work with the restriction of $H$ to $\Omg_A$ for $A\in\scrA$, defined by
\begin{equation}\label{eq:defineHA}
H_A(r_A,s):=H(r_A,s,A) \quad\text{ for } (r_A,s) \in \Omg_A,
\end{equation}
where $r=(r_1,\ldots,r_J)\in\R^J$ with $r_j=0$ for $j\notin A$. Then $H$ can be understood as
a collection of bijective mappings $\{H_A: A\in\scrA\}$ indexed by subsets of $\N_J$.
Write the block Lasso solution for the response $y$ as $\hbeta=\hbeta(y)$. Let 
\begin{align*}
E_A:=\left\{v\in\R^n: G\left(\hbeta(X\beta_0+v)\right)=A \right\}
\end{align*}
be the set of noise vectors $v$ for which the active set of the block Lasso solution 
$\hbeta(X\beta_0+v)$ is $A$. Denote the block norms and the subgradient of $\hbeta(X\beta_0+v)$
by $\hgamma(X\beta_0+v)$ and $S(X\beta_0+v)$, respectively. 
Then for $v\in E_A$, we have %$(\hgamma_A(X\beta_0+v),S(X\beta_0+v))\in\Omg_A$ and
\begin{align*}
H_A(\hgamma_A(X\beta_0+v),S(X\beta_0+v))=\frn X^\trans v.
\end{align*}
Now the bijective nature of $H_A$ allows us to obtain the density for $(\hgamma_A,S)$ from
the density of the noise vector via a change of variable.

It remains to find the differential of $H_A$ so that we can calculate the Jacobian for the change of variable.
A special aspect of this mapping is that $H_A$ is defined on a manifold and thus
its differential is determined with respect to local parameterizations. 
As an $(n-|A|)$-manifold in $\R^p$, a neighborhood of $s\in\calM_A$ can be parameterized 
by $s_F$, where $F\subset \N_p$ may depend on $(s,A)$ and $|F|=n-|A|$.
Correspondingly, the $n$-manifold $\Omg_A$ will be parameterized by $(r_A,s_F)\in\R^n$
in a neighborhood of $(r_A,s)$. Under this parameterization, Lemma~\ref{lm:diffHA}, 
proven in Appendix~\ref{sec:pfdHA}, gives 
an expression of $d H_A$ in terms of a few matrices defined below.
Let $\eta'(x)=\rho |x|^{\rho-1}$ denotes the derivative of $\eta$. Define
\begin{align}
r\circ \Psi & :=[r_1\Psi_{(1)}|\ldots | r_J\Psi_{(J)}] \in \R^{p\times p} \label{eq:defrPsi}\\
\Psi\circ \eta & :=[\Psi_{(1)}\eta(s_{(1)})|\ldots | \Psi_{(J)}\eta(s_{(J)})] \in \R^{p\times J} \label{eq:defPsis}
\end{align}
and $D=D(s,A)\in\R^{p\times p}$ to be a diagonal matrix whose diagonal elements
$D_{kk}=\eta'(s_k)$ for $k\in \calG_A$ and $D_{kk}=0$ otherwise.

\begin{lemma}\label{lm:diffHA}
Fix $p\geq n$, $\beta_0\in\R^p$, $\lambda>0$ and $A\in \scrA$.
Suppose that $\alp\in(1,\infty)$ and Assumption~\ref{as:X} holds. Then for any interior point $(r_A,s)\in\Omg_A$,
there is a full rank matrix $T=T(\eta(s),A)$ of size $p\times (n-|A|)$ such that $ds=T(\eta(s),A) ds_F$ and
\begin{equation}\label{eq:defM}
d H_A  = \left[(\Psi\circ \eta)_A \mid \{(r\circ \Psi)D + \lambda W\}T(\eta(s),A)\right] d\theta 
:= M(r_A,s,A;\lambda) d\theta,
\end{equation}
where $\theta=(r_A,s_F)\in\R^n$. 
\end{lemma}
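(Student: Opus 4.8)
The plan is to differentiate the defining expression \eqref{eq:defineH} for $H_A$ term by term, treating $(r_A,s)$ as the free variables on $\Omg_A$, and then reparameterize the tangent space of $\calM_A$ at $s$ by the $n-|A|$ local coordinates $s_F$. First I would note that on $\Omg_A$ we may write
\begin{equation*}
H_A(r_A,s)=\sum_{j\in A} r_j\,\Psi_{(j)}\eta(s_{(j)}) + \lambda W s - \Psi\beta_0,
\end{equation*}
since $s$ ranges over $\calM_A\subset\calV$ and the $\beta_0$ term is constant. Differentiating, the $r_j$-derivative of the $j$-th summand is $\Psi_{(j)}\eta(s_{(j)})$, which assembles exactly into the block $(\Psi\circ\eta)_A$ of \eqref{eq:defPsis} (restricted to columns indexed by $A$); this accounts for the first block of $M$. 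The $s$-derivative: within the summation, $r_j\Psi_{(j)}\,d\eta(s_{(j)})=r_j\Psi_{(j)}\,\diag(\eta'(s_{(j)}))\,ds_{(j)}$, and collecting over $j\in A$ gives $(r\circ\Psi)\,D\,ds$ with $r\circ\Psi$ as in \eqref{eq:defrPsi} and $D$ the diagonal matrix of $\eta'(s_k)$ on $\calG_A$ and zero off it (the zero entries are harmless because $r_j=0$ for $j\notin A$); the penalty term contributes $\lambda W\,ds$. Hence $dH_A=(\Psi\circ\eta)_A\,dr_A+\{(r\circ\Psi)D+\lambda W\}\,ds$ as an identity of differentials on $\Omg_A$.

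The second step is to express the constrained differential $ds$ in terms of the free coordinates $ds_F$. Since $\calM_A$ is cut out of the linear space $\calV$ by the $|A|$ equations $\|s_{(j)}\|_{\alp^*}=1$, $j\in A$, the tangent space $T_s\calM_A$ is an $(n-|A|)$-dimensional subspace of $\R^p$; by the implicit function theorem, once a coordinate subset $F$ with $|F|=n-|A|$ is chosen so that $s_F$ parameterizes a neighborhood (such an $F$ exists at an interior, hence regular, point — this uses Assumption~\ref{as:X} to get $\dim\calV=n$ and the smoothness of the $\ell_{\alp^*}$-spheres for $\alp^*\in(1,\infty)$), there is a linear map $T=T(s,A):\R^{n-|A|}\to\R^p$, of full column rank, with $ds=T\,ds_F$. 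Because the constraint functions and $\calV$ depend on $s$ only through $\eta(s)\in\prod\mbS_{\alp}$ (Lemma~\ref{lm:rhoStoS}), or more directly because one may record the dependence that way, $T$ can be written as $T(\eta(s),A)$; I would verify this by writing $T$ explicitly as a basis matrix for $\nul$ of the Jacobian of the active constraints stacked with a basis-selection for $\calV$. Substituting $ds=T\,ds_F$ into the identity from the first step yields \eqref{eq:defM} with $\theta=(r_A,s_F)\in\R^n$.

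The main obstacle is the second step: making precise that an admissible coordinate set $F$ exists at every interior point and that the resulting $T$ is full rank and depends on $s$ only through $\eta(s)$. This requires checking that the gradients of the $|A|$ active constraints $\|s_{(j)}\|_{\alp^*}-1$, restricted to the linear space $\calV$, are linearly independent — i.e., that $|A|$ unit-$\ell_{\alp^*}$-sphere constraints plus the linear conditions defining $\calV$ meet transversally — so that $\calM_A$ is genuinely an embedded $(n-|A|)$-manifold near $s$ and the implicit function theorem applies with some coordinate block $s_F$. The transversality and the block-diagonal structure of these constraint gradients (each involves only coordinates in $\calG_j$) are exactly what force $|F|=n-|A|$ and let one read off $F$; I expect this to be the delicate bookkeeping, and it is presumably where the deferred proof in Appendix~\ref{sec:pfdHA} does its real work. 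The term-by-term differentiation in the first step and the final substitution are then routine.
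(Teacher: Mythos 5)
Your proposal follows essentially the same route as the paper's proof in Appendix~\ref{sec:pfdHA}: differentiate $H_A$ term by term to obtain $dH_A=(\Psi\circ\eta)_A\,dr_A+\{(r\circ\Psi)D+\lambda W\}\,ds$, then substitute $ds=T\,ds_F$ where $T$ spans the solution set of the linearized constraints $Q^{\trans}ds=0$ and $\langle\eta(s_{(j)}),ds_{(j)}\rangle=0$ for $j\in A$, with $Q$ a basis of $\calV^{\perp}$. The one step you explicitly defer --- the linear independence of these $p-n+|A|$ constraint gradients, which fixes $|F|=n-|A|$ and the full rank of $T$ --- is precisely where the paper's short argument does its work, playing $\langle\eta(s_{(j)}),s_{(j)}\rangle=1$ (Lemma~\ref{lm:rhoStoS}) against $Q^{\trans}s=0$; the rest of your plan, including the observation that $T$ depends on $s$ only through $\eta(s)$, matches the paper.
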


\begin{remark}
This lemma applies to every interior point of $\Omg_A$, irrespective of whether or not the corresponding
solution is unique.
Assumption~\ref{as:X} is only needed to fix the dimension of the manifold $\calM_A$.
With some modifications of the proof, the result can be generalized to the situation 
in which Assumption~\ref{as:X} fails to hold.
The size of the matrix $M=M(r_A,s,A;\lambda)$ is $p\times n$, and if Assumption~\ref{as:pj} also holds 
it will be full rank.
The parameterization $s_F$ for $\calM_A$ is defined locally for a neighborhood of $s$.
For each $j\in A$, the unit sphere $\mbS^{p_j-1}_{\alp^*}$, except a set of measure zero,
can be covered by two parameterizations, one for each open semi-sphere.
%, to be illustrated by Example~\ref{exp:simple} in Section~\ref{sec:examples}.
\end{remark}

\begin{remark}\label{rm:MforgrpLasso}
For the special case $\alp=\alp^*=2$ (group Lasso), we have 
$\rho=1$, $\eta(x)=x$ and $\eta'(x)=1$ for $x\in\R$. The matrix $M$ \eqref{eq:defM} has a simpler form:
\begin{equation}\label{eq:Mfor2}
M(r_A,s,A;\lambda)=\left[(\Psi\circ s)_A \mid (r\circ \Psi + \lambda W)T(s,A)\right]. 
\end{equation}
The only reason that we excluded the case $\alp=1$ in the above lemma is because $\eta'$ is not well-defined.
We will cover this case, which reduces to the Lasso, in Example~\ref{exp:lasso}.
\end{remark}

Geometrically, the columns of $T$ consist of a set of tangent vectors of the manifold $\calM_A$
while those of $M$ consist of tangent vectors of the mapping $H_A$. These tangent vectors
determine the ratio between the volume element in the image space $\row(X)$ and that in the domain $\Omg_A$, 
and thus the Jacobian of the mapping. 

\subsection{Joint density}\label{sec:density}

Now we make an explicit link from the augmented estimator $(\hgamma_{\calA},S,\calA)$ to the noise vector $\veps$.
Under Assumption~\ref{as:X}, $\nul(X^\trans)=\{0\}$ and thus by \eqref{eq:defineH}
\begin{equation}\label{eq:bieps}
\veps/\sqn=\sqn (X^\trans)^+ H(\hgamma_{\calA},S,\calA;\beta_0,\lambda)
:=\tdH(\hgamma_{\calA},S,\calA;\beta_0,\lambda).
\end{equation}
%where $(X^{\trans})^+=(XX^{\trans})^{-1} X \in \R^{n\times p}$. 
We note that $\tdH\in\R^n$ is the coordinates of $H$ with respect to the basis 
$(X^{\trans}/\sqn)$ of $\row(X)$.
By Lemma~\ref{lm:bijection}, $\tdH$ is a bijection that maps $\Omg$ onto $\R^n$.
Define $\tdH_A$ similarly as for $H_A$ in \eqref{eq:defineHA}.
It then follows from Lemma~\ref{lm:diffHA}  that the Jacobian of $\tdH_A$ is 
\begin{equation}\label{eq:jacob}
J_A(r_A,s;\lambda)=\det \left[\sqn (X^\trans)^+ M(r_A,s,A;\lambda)\right],
\end{equation}
of which the matrix on the right side is of size $n\times n$.

\begin{theorem}\label{thm:densityhigh}
Fix $p\geq n$, $\beta_0\in\R^p$ and $\lambda>0$. Suppose that Assumptions~\ref{as:X} and \ref{as:pj} hold. 
Then the distribution of the augmented estimator $(\hgamma_{\calA},S,\calA)$ for $\alp\in(1,\infty)$ is given by
the differential form
\begin{equation}\label{eq:nform}
d\mu_A:=\Prob(d r_{A}, ds,\{A\}) = g_n(\tdH_A(r_A,s;\beta_0,\lambda))|J_A(r_A,s;\lambda)| d\theta
\quad\text{ for }(r_A,s,A)\in \Omg,
\end{equation}
where $\theta=(r_A,s_F)\in\R^n$ and $g_n$ is the density of $(\veps/\sqn)$.
\end{theorem}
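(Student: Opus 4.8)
The plan is to prove the theorem by a change of variables carried out separately on each piece $\Omega_A$, $A\in\scrA$, of the sample space, using the map $\tdH_A$ of \eqref{eq:bieps} as a diffeomorphism from $\Omega_A$ onto the corresponding slice of the noise space and reading off its Jacobian from \eqref{eq:jacob}. Since $\veps/\sqn$ has density $g_n$, pushing this density forward through $\tdH_A^{-1}$ produces exactly the form $d\mu_A$ in \eqref{eq:nform}; summing over the disjoint pieces recovers the full distribution of $(\hgamma_{\calA},S,\calA)$ on $\Omega$.

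The first step is to decompose the noise space and to identify $\tdH_A$ with the inverse of the solution map. Under Assumptions~\ref{as:X} and \ref{as:pj}, Theorem~\ref{thm:uniqsuff} gives a unique $\hbeta=\hbeta(y)$ for every $y$, and by Lemma~\ref{lm:fittedandS} the subgradient $S=S(y)$ is always unique; hence $\calA=G(\hbeta)$, $\hgamma$ and $S$ are well-defined Borel functions of $y$, equivalently of $v=y-X\beta_0$. The sets $E_A=\{v\in\R^n:G(\hbeta(X\beta_0+v))=A\}$, $A\in\scrA$, are therefore pairwise disjoint, and since $|G(\hbeta)|\le n\wedge J$ always, $\R^n=\bigsqcup_{A\in\scrA}E_A$ with $\{\calA=A\}=\{\veps\in E_A\}$. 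Next, under Assumption~\ref{as:X} (so $X$ has full row rank $n$ and $(X^\trans)^+X^\trans=\bfI_n$), if $(r_A,s)\in\Omega_A$ satisfies $\tdH_A(r_A,s;\beta_0,\lambda)=v/\sqn$, then the vector $\hbeta$ with $\hbeta_{(j)}=r_j\eta(s_{(j)})$ and the subgradient $s$ satisfy the KKT conditions \eqref{eq:KKTwhole} at the response $y=X\beta_0+v$; by sufficiency of KKT and uniqueness, $\hbeta=\hbeta(y)$, and at an interior point of $\Omega_A$ (where $r_j>0$ for $j\in A$ while $\|s_{(j)}\|_{\alp^*}=1$ forces $\eta(s_{(j)})\ne 0$, and $\|s_{(j)}\|_{\alp^*}<1$ for $j\notin A$) one reads off $G(\hbeta)=A$, i.e. $v\in E_A$. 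Conversely, for $v\in E_A$ the augmented solution lies in $\Omega_A$ by \eqref{eq:Sdef} and \eqref{eq:defcalSofA}. Together with Lemma~\ref{lm:bijection} this shows $\tdH_A$ maps $\Omega_A$ bijectively onto $E_A/\sqn$; restricting to interiors and invoking Lemma~\ref{lm:diffHA} with Assumption~\ref{as:pj} (which makes $M$ full rank, hence $d\tdH_A$ nonsingular), $\tdH_A$ is a $C^1$-diffeomorphism of the interior of $\Omega_A$ onto the interior of $E_A/\sqn$, with $C^1$ inverse $v\mapsto(\hgamma_A(X\beta_0+v),S(X\beta_0+v))$ by the inverse function theorem. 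The transitional set where some $\|S_{(j)}\|_{\alp^*}=1$ for $j\notin\calA$ is the $C^1$-image of a codimension-$\ge 1$ subset of some $\Omega_{A'}$, hence Lebesgue-null in $\R^n$, so it can be discarded.

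With these identifications the computation is routine. Fix $A\in\scrA$ and a Borel set $B$ inside the interior of $\Omega_A$. Since $\veps/\sqn$ has density $g_n$,
\[
\Prob\big((\hgamma_{\calA},S,\calA)\in B\times\{A\}\big)=\Prob\big(\veps/\sqn\in\tdH_A(B)\big)=\int_{\tdH_A(B)}g_n(u)\,du.
\]
Parameterizing $\Omega_A$ locally by $\theta=(r_A,s_F)\in\R^n$ as in Lemma~\ref{lm:diffHA}, so that $ds=T(\eta(s),A)\,ds_F$ and, via $\tdH_A=\sqn(X^\trans)^+H_A$, the differential of $\theta\mapsto\tdH_A$ is the $n\times n$ matrix $\sqn(X^\trans)^+M(r_A,s,A;\lambda)$ with determinant $J_A(r_A,s;\lambda)$ from \eqref{eq:jacob}, the change-of-variables formula on the manifold $\Omega_A$ gives
\[
\int_{\tdH_A(B)}g_n(u)\,du=\int_{B}g_n\big(\tdH_A(r_A,s;\beta_0,\lambda)\big)\,|J_A(r_A,s;\lambda)|\,d\theta,
\]
which is precisely \eqref{eq:nform}. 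Because $\tdH_A$ is intrinsically defined on $\Omega_A$, the resulting $n$-form $d\mu_A$ is independent of the particular coordinate subset $F$ chosen to chart $\calM_A$; summing the identity over the disjoint pieces $A\in\scrA$ yields the distribution of $(\hgamma_{\calA},S,\calA)$ on $\Omega$.

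I expect the main obstacle to be the second step: rigorously matching the abstract bijection $\tdH_A$ of Lemma~\ref{lm:bijection} with the concrete solution map on noise space and showing they are mutually inverse $C^1$ maps between the interiors of $\Omega_A$ and of $E_A/\sqn$. The delicate points are (i) verifying that the transitional boundary between adjacent pieces $\Omega_A$ and $\Omega_{A'}$ corresponds to a Lebesgue-null subset of $\R^n$, so it contributes nothing to the integral; (ii) the local-coordinate bookkeeping on the spheres $\mbS^{p_j-1}_{\alp^*}$, where the chart $s_F$ (hence $T$, hence $M$) depends on the base point and each sphere needs two charts; and (iii) the measurability of $v\mapsto(\hgamma_A,S,A)$, which follows from the $\argmin$/KKT characterization. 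Once $d\tdH_A$ is known to be nonsingular — exactly the content of Lemma~\ref{lm:diffHA} together with Assumption~\ref{as:pj} — smoothness of the inverse, and thus the applicability of the manifold change-of-variables formula, is automatic.
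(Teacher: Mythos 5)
Your proposal is correct and follows essentially the same route as the paper's proof in Appendix~\ref{sec:proofhimden}: both rest on the bijectivity of $\tdH_A$ from Lemma~\ref{lm:bijection} to equate $\Prob(\hgamma_A\in\Gamma, S\in\Phi,\calA=A)$ with $\Prob(\veps/\sqn\in V)$ for $V=\tdH_A(\Gamma\times\Phi)$, and then apply the change-of-variables formula in the local chart $\theta=(r_A,s_F)$ with the Jacobian $J_A$ supplied by Lemma~\ref{lm:diffHA}. Your additional care about the null transitional sets, the identification of $\tdH_A(\Omg_A)$ with $E_A/\sqn$, and the chart-independence of the $n$-form only makes explicit what the paper leaves implicit.
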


See Appendix~\ref{sec:proofhimden} for a proof, 
from which we see that \eqref{eq:nform} is valid as long as
the block Lasso program \eqref{eq:blocklassodef} has a unique solution for almost all $y\in\R^n$.
For each $A\in\scrA$, the $n$-form $d\mu_A$ defines a measure on $\Omg_A$ in the
following sense. Let $k=n-|A|$ and
\begin{equation}\label{eq:density}
f_A(r_A,s)=g_n(\tdH_A(r_A,s))|J_A(r_A,s)|.
\end{equation}
Suppose that $\Gamma\subset (\R^+)^{|A|}$ and 
$\Phi=\{\Phi(u):u\in\Delta\} \subset \calM_A$ is a $k$-surface in $\R^p$
with parameter domain $\Delta\subset \R^k$.
Then by \eqref{eq:nform} we have
\begin{equation}\label{eq:probability}
\Prob(\hgamma_A \in \Gamma, S\in \Phi, \calA = A)= \int_{\Gamma \times \Phi} d\mu_A
=\int_{\Gamma} \int_\Delta f_A(r_A,\Phi(u)) \left|\frac{\partial s_F}{\partial u}\right| du \,dr_A,
\end{equation}
where the Jacobian ${\partial s_F}/{\partial u}=1$ if $\Phi$ is parameterized by $s_F$.
Note that for a particular $k$-surface, parameterizations other than by $s_F$ may be more convenient.
As shown in \eqref{eq:probability}, 
the distribution of $(\hgamma_{\calA},S,\calA)$ is defined by a collection
of measures, $\{\mu_A: A\in\scrA\}$, due to the discrete nature of $\calA$,
and $f_A$ is the density of $\mu_A$ parameterized by $\theta=(r_A,s_F)$.
An important special case of the above integral is
\begin{equation*}
\Prob(\calA=A)=\mu_A(\Omg_A)=\int_{\Omg_A} d\mu_A.
\end{equation*}
Lastly, summing over $\scrA$ in the above equation leads to
\begin{equation*}
\sum_{A\in\scrA} \int_{\Omg_A} d\mu_A=\sum_{|A|\leq n} \Prob(\calA=A) =1.
\end{equation*}

\begin{remark}\label{rmk:ondensity}
Evaluation of the joint density in \eqref{eq:nform} for any $(r_A,s,A)\in\Omg$ can
be done by a simple procedure: 
\begin{enumerate}
\item Find a local parameterization $s_F$ for $s$ and the associated matrix $T$;
\item Calculate the Jacobian $J_A$ \eqref{eq:jacob}, evaluate the mapping $\tdH_A(r_A,s)$ \eqref{eq:bieps}, 
and plug them into \eqref{eq:density} to obtain $f_A(r_A,s)$.
\end{enumerate}
See Examples~\ref{exp:simple} and \ref{exp:orthogonal} for concrete illustrations.
\end{remark}

As a consequence of Theorem~\ref{thm:densityhigh}, the density $f_A$ under i.i.d. Gaussian errors can be found in the
following corollary. See Appendix~\ref{sec:pfGaussian} for a proof.
\begin{corollary}\label{cor:dGaussian}
Suppose that the assumptions of Theorem~\ref{thm:densityhigh} hold. If $\veps\sim \dnorm_n(0,\sigma^2\bfI_n)$,
then for $(r_A,s)\in\Omg_A$ and $A\in\scrA$,
\begin{align}\label{eq:dGaussian}
f_A(r_A,s)=\left(\frac{2\pi\sigma^2}{n}\right)^{-n/2}
\exp\left[-\frac{1}{2\sigma^2}\|X(b+\lmd \Psi^+Ws-\beta_0)\|^2\right]|J_A(r_A,s)|,
\end{align}
where $b\in \R^p$ is given by $b_{(j)}=r_j \eta(s_{(j)})$ for all $j\in\N_J$.
\end{corollary}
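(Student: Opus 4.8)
The plan is to specialize Theorem~\ref{thm:densityhigh} to the Gaussian case by substituting the explicit form of $g_n$ and then simplifying the argument of the exponential. First I would recall that under $\veps\sim\dnorm_n(0,\sigma^2\bfI_n)$ the rescaled noise $\veps/\sqn$ is $\dnorm_n(0,(\sigma^2/n)\bfI_n)$, so its density is
\begin{equation*}
g_n(z)=\left(\frac{2\pi\sigma^2}{n}\right)^{-n/2}\exp\!\left(-\frac{n}{2\sigma^2}\|z\|^2\right),\qquad z\in\R^n.
\end{equation*}
Plugging $z=\tdH_A(r_A,s;\beta_0,\lambda)$ into \eqref{eq:density} immediately produces the prefactor and the Jacobian term; the only real work is to rewrite $\frac{n}{2\sigma^2}\|\tdH_A(r_A,s)\|^2$ in the stated form. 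By \eqref{eq:bieps}, $\tdH_A=\sqn(X^\trans)^+H_A$, so $\|\tdH_A\|^2 = n\,\|(X^\trans)^+H_A\|^2$, and the factor of $n$ cancels against the $1/n$ in front, leaving $\frac{1}{2\sigma^2}\|(X^\trans)^+H_A\|^2$.

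The crux is then to show $\|(X^\trans)^+H_A\|^2=\|X(b+\lambda\Psi^+Ws-\beta_0)\|^2$. I would start from the definition \eqref{eq:defineH}: with $\hbeta_{(j)}=r_j\eta(s_{(j)})$, i.e.\ $\hbeta=b$ where $b_{(j)}=r_j\eta(s_{(j)})$, we have $H_A = \Psi b + \lambda W s - \Psi\beta_0 = \Psi(b-\beta_0)+\lambda W s$. Since $(r_A,s)\in\Omg_A$ we know $s\in\calV=\row(XW^{-1})$, equivalently $Ws\in\row(X)$ by \eqref{eq:SinrowX}; also $\Psi(b-\beta_0)\in\row(X)$ trivially because $\Psi=\frac1n X^\trans X$. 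Hence $H_A\in\row(X)=\row(X^\trans X)$, so $(X^\trans)^+$ acts as a genuine left inverse on the relevant subspace. I would then use the identity $\|(X^\trans)^+u\|^2 = \|(X^\trans X)^+X^\trans\cdot (\text{something})\|$; more cleanly, write $u=X^\trans \xi$ for the (unique in $\row(X)$) preimage and note $\|(X^\trans)^+ X^\trans\xi\| = \|P_{\row(X)}\,?\|$. The slick route: for $u\in\row(X^\trans X)$ one has $(X^\trans)^+u = (X^\trans X)^+X^\trans\cdot X (X^\trans X)^+ u$? That is getting circular. Instead I would invoke the elementary fact that for any $w\in\R^p$, $\|(X^\trans)^+X^\trans X w\|_2 = \|Xw\|_2$ — indeed $(X^\trans)^+X^\trans X w = (X^\trans X)^+ X^\trans X^\trans X w$... let me instead just use $(X^\trans)^+ X^\trans = X(X^\trans X)^+$ is the orthogonal projection onto $\row(X)=\col(X^\trans)$... wait, $(X^\trans)^+ X^\trans$ projects onto $\col((X^\trans)^\trans)=\col(X^\trans)$? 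The clean statement I need, and would state as the key sublemma, is: for any $v\in\R^p$ with $X^\trans X v\in$ the relevant space, $\|\sqn(X^\trans)^+(\tfrac1n X^\trans X v)\| = \|(X^\trans)^+ X^\trans \cdot \tfrac1{\sqn} X v\| = \tfrac1{\sqn}\|Xv\|$, using that $(X^\trans)^+X^\trans$ is the orthogonal projector onto $\col(X^\trans)\supseteq$... and $\tfrac1{\sqn}Xv$ need not lie in the right subspace, but $X^\trans(\tfrac1{\sqn}Xv)$ does, so projecting and taking norms: $\|(X^\trans)^+ X^\trans(\tfrac1{\sqn}Xv)\|$; since $\tfrac1{\sqn}Xv\in\col(X)$ and $(X^\trans)^+X^\trans=XX^+$ is the orthogonal projector onto $\col(X)$, this equals $\tfrac1{\sqn}\|Xv\|$ exactly. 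This is the one computation to get right.

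Putting $v=b+\lambda\Psi^+Ws-\beta_0$ (so that $\frac1n X^\trans X v = \Psi b + \lambda \Psi\Psi^+ Ws - \Psi\beta_0$, and since $Ws\in\row(X)=\row(\Psi)$ we have $\Psi\Psi^+Ws=Ws$, recovering $H_A$), the argument closes: $\frac{1}{2\sigma^2}\|(X^\trans)^+H_A\|^2 = \frac{1}{2\sigma^2}\cdot\frac1n\cdot n\|Xv\|^2$... more precisely $\|\tdH_A\|^2=\|\sqn(X^\trans)^+H_A\|^2 = n\|(X^\trans)^+(\Psi v)\|^2 = n\cdot\frac1n\|Xv\|^2=\|Xv\|^2$, and then $\frac{n}{2\sigma^2\cdot n}\|\tdH_A\|^2$... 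I need to track that $g_n$ already has the $n/\sigma^2$ built in: $g_n(\tdH_A)=(2\pi\sigma^2/n)^{-n/2}\exp(-\frac{n}{2\sigma^2}\|\tdH_A\|^2)$, and $\|\tdH_A\|^2 = \frac1n\|Xv\|^2$ (since $\tdH_A=\sqn(X^\trans)^+\Psi v$ and $\|(X^\trans)^+\Psi v\|=\frac1{\sqn}\|Xv\|$), giving $-\frac{n}{2\sigma^2}\cdot\frac1n\|Xv\|^2=-\frac{1}{2\sigma^2}\|Xv\|^2$, exactly the exponent in \eqref{eq:dGaussian}. The main obstacle is purely the linear-algebra identity relating the pseudo-inverse norm $\|\sqn(X^\trans)^+(\cdot)\|$ to $\|X(\cdot)\|$ and verifying the subspace memberships ($Ws\in\row(X)$ from $(r_A,s)\in\Omg_A$, and $\Psi\Psi^+Ws=Ws$) that make $\Psi^+$ behave as a true inverse there; everything else is direct substitution into \eqref{eq:density} and \eqref{eq:nform}.
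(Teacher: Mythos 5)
Your proposal is correct and follows essentially the same route as the paper's proof: substitute the Gaussian density for $g_n$, use $Ws\in\row(X)$ (from $s\in\calV$) so that $\Psi\Psi^+Ws=Ws$ and $H_A=\Psi(b+\lambda\Psi^+Ws-\beta_0)$, and then use the pseudo-inverse identity $(X^\trans)^+\Psi=X/n$ (equivalently, that $(X^\trans)^+X^\trans$ is the orthogonal projector onto $\col(X)$, which fixes $Xv$) to obtain $\tdH_A=\frac{1}{\sqn}X(b+\lambda\Psi^+Ws-\beta_0)$ and hence the stated exponent. The exploratory detours in your write-up do not affect the validity of the final computation.
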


As we have seen, the sample space $\Omg$ \eqref{eq:defOmg} for the augmented estimator is complex
due to the many constraints involved in $\calM_A$ \eqref{eq:defcalSofA} and the mix of continuous
and discrete components. It is quite surprising that one can
find an exact joint density for the augmented estimator given $\beta_0$ and the noise distribution which 
is usually simple under an i.i.d. assumption. The density gives a complete and explicit characterization of
the sampling distribution according to \eqref{eq:probability}. 
In light of the non-linear and sparse nature of $\hbeta$ and the high-dimension of
the problem, the joint density itself is a significant theoretical result that improves our understanding
of the block Lasso estimator. Applications of this result 
in group inference will be discussed in Section~\ref{sec:infer}.

\begin{remark}\label{rmk:differences}
We summarize the main differences between the joint density of the augmented block Lasso 
in Theorem~\ref{thm:densityhigh} and that of the augmented Lasso in \cite{Zhou14}.
First, the sample space $\Omg_A$ is an $n$-manifold with nonzero curvature for $\alp>1$, 
and consequently the density is specified in \eqref{eq:nform} by a differential form of order $n$.
In contrast, the space $\Omg_A$ has no curvature for the augmented Lasso estimator, whose density
can be defined with respect to the Lebesgue measure. Second, the Jacobian \eqref{eq:jacob} depends
on both $s$ and $A$ for the block Lasso, while it only depends on $A$ for the Lasso. 
See Example~\ref{exp:lasso} for the technical reason and a geometric interpretation for these differences.
Both aspects result in
new challenging computational issues in this work for the development of Monte Carlo algorithms,
which are discussed in Section~\ref{sec:is}.
\end{remark}

For the sake of completeness, we also give the density of $(\hgamma_{\calA},S,\calA)$ when $p<n$,
which can be obtained by simple modifications of a few steps in the proof of the result for $p\geq n$.
See Appendix~\ref{sec:pflowden} for detail.
Assume that $\rank(X)=p<n$, which is sufficient for
both Assumptions~\ref{as:X} and \ref{as:pj} to hold.
Then $\row(X)$ and $\calV$ \eqref{eq:SinrowX} are identical to $\R^p$, 
which implies every $s\in\calM_A$ \eqref{eq:defcalSofA} can be locally parameterized by $s_F$ with
$|F|=p-|A|$. In this case, $H_A:\Omg_A \to \R^p$ and $M(r_A,s,A;\lambda)\in\R^{p\times p}$.

\begin{corollary}\label{cor:densitylow}
Fix $n> p$, $\beta_0\in\R^p$ and $\lambda>0$. If $\rank(X)=p$,
the distribution of the augmented estimator $(\hgamma_{\calA},S,\calA)$ for $\alp\in(1,\infty)$ is given by
the $p$-form
\begin{equation}\label{eq:pform}
d\mu_A= g_{n,X}(H_A(r_A,s;\beta_0,\lambda))|\det M(r_A,s,A;\lambda)| d\theta
\quad\text{ for }(r_A,s,A)\in \Omg,
\end{equation}
where $\theta=(r_A,s_F)\in\R^p$ and $g_{n,X}$ is the density of $U=n^{-1} X^\trans \veps$.
\end{corollary}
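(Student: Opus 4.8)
Throughout, write $\Omg_A$ for the space defined in Section~\ref{sec:space} and $H_A$ for the restriction of $H$ to it. The plan is to mimic the proof of Theorem~\ref{thm:densityhigh}, exploiting the fact that $\rank(X)=p<n$ makes the geometry degenerate in a helpful direction. First I would record the consequences of $\rank(X)=p$ already noted just before the statement: Assumptions~\ref{as:X} and~\ref{as:pj} both hold, so Theorem~\ref{thm:uniqsuff} applies and the triplet $(\hgamma_{\calA},S,\calA)$ is uniquely determined by $y$; moreover $\row(X)=\R^p$ and $\calV=\row(XW^{-1})=\R^p$, so in $\calM_A$ \eqref{eq:defcalSofA} the only equality constraints are the $|A|$ block-norm constraints $\|s_{(j)}\|_{\alp^*}=1$, $j\in A$. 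Hence $\calM_A$ is a $(p-|A|)$-manifold in $\R^p$, locally parameterized by $s_F$ with $|F|=p-|A|$, and $\Omg_A=(\R^+)^{|A|}\times\calM_A$ is a $p$-manifold with local parameter $\theta=(r_A,s_F)\in\R^p$.

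Second, I would set up the change of variables directly on $\R^p$ rather than on a lower-dimensional row space. From the KKT identity \eqref{eq:KKTU}, $H(\hgamma_{\calA},S,\calA;\beta_0,\lambda)=U$ with $U=n^{-1}X^\trans\veps$; since $X^\trans$ has full column rank $p$, $U$ is an absolutely continuous image of $\veps$ and therefore admits a density $g_{n,X}$ on $\R^p$, obtained by integrating the density of $\veps$ over the $(n-p)$-dimensional fibers $\{v:n^{-1}X^\trans v=u\}$. The proof of Lemma~\ref{lm:bijection} carries over unchanged, giving that $H$ is a bijection from $\Omg$ onto $\row(X)=\R^p$; restricting to each $A\in\scrA$ produces $C^1$ bijections $H_A:\Omg_A\to H_A(\Omg_A)$, whose images partition $\R^p$ up to a set of measure zero, exactly as for $E_A$ in the proof of Theorem~\ref{thm:densityhigh} (note that the KKT conditions involve $y$ only through $U$, so the active set is a function of $U$ alone).

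Third, I would compute the Jacobian by redoing the derivation of Lemma~\ref{lm:diffHA}. The computation is identical except that $\dim\calV=p$, so the matrix $T=T(\eta(s),A)$ with $ds=T\,ds_F$ now has size $p\times(p-|A|)$, and one obtains $dH_A=M(r_A,s,A;\lambda)\,d\theta$ with $M$ of size $p\times p$, of full rank under Assumption~\ref{as:pj}. Because $H_A$ maps directly into $\R^p$ — there is no projection onto a proper row subspace, hence no $(X^\trans)^+$ factor and no $\sqn$ rescaling as in \eqref{eq:jacob} — the Jacobian of $H_A$ is simply $|\det M(r_A,s,A;\lambda)|$. Applying the change-of-variables formula to $\Prob(U\in H_A(\Omg_A))=\Prob(\calA=A)$ and its refinements as in \eqref{eq:probability} then yields the $p$-form $d\mu_A=g_{n,X}(H_A(r_A,s;\beta_0,\lambda))|\det M(r_A,s,A;\lambda)|\,d\theta$, and summing over $A\in\scrA$ restores total mass one, proving Corollary~\ref{cor:densitylow}.

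The only genuine work is the re-derivation of Lemma~\ref{lm:diffHA} in this parameterization, i.e., verifying that the tangent-space bookkeeping behind $dH_A$ goes through with $p-|A|$ replacing $n-|A|$; this is routine but slightly tedious, and in fact somewhat simpler than the $p\ge n$ case because $\calV=\R^p$ removes the linear constraints that complicated the manifold $\calM_A$ there. The measure-theoretic details — openness of the parameter domains, the null sets where distinct images $H_A(\Omg_A)$ touch or where a single local chart fails to cover a sphere $\mbS_{\alp^*}^{p_j-1}$ — are dispatched exactly as in the proof of Theorem~\ref{thm:densityhigh}.
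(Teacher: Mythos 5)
Your proposal is correct and follows essentially the same route as the paper's proof: note that Assumptions~\ref{as:X} and~\ref{as:pj} hold trivially, that $\row(X)=\R^p$ and $\calV^\perp=\{0\}$ remove the linear constraint \eqref{eq:rowspaceds} so that $|F|=p-|A|$ and $M$ is $p\times p$, and then rerun the argument of Theorem~\ref{thm:densityhigh} with $U=n^{-1}X^\trans\veps$ in place of $\veps/\sqn$ and $H_A$ in place of $\tdH_A$. The only nit is the phrase ``$X^\trans$ has full column rank $p$'' --- you mean that $X$ has full column rank (equivalently $X^\trans$ has full row rank), which is what makes $U$ an absolutely continuous surjective image of $\veps$ --- but this does not affect the argument.
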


An interesting observation is that we need the density $g_{n,X}$ of $U=n^{-1} X^\trans \veps$
when $p<n$, which is more difficult to determine than the density of $\veps$ needed in the high-dimensional
case \eqref{eq:nform}. The underlying reason for this can be found from the sufficient statistic $t=X^\trans y$ \eqref{eq:KKTwhole}. When $p<n$, the dimension of $t$ is smaller than the sample size $n$
and thus this statistic achieves the goal of data reduction. Consequently, one needs the distribution of $t$ or $U$
to determine the sampling distribution of $\hbeta$. However, when $p\geq n$, $y$ is the coordinates of
the statistic $t$ using the rows of $X$ as the basis and thus the two are equivalent up to a change of basis,
in which case the distribution of $y$ or $\veps$ is all we need. % to determine the distribution of the estimator $\hbeta$.

\subsection{Examples}\label{sec:examples}

We illustrate the distribution of the augmented estimator with a few examples. Example~\ref{exp:simple}
is a simple concrete example that demonstrates various key concepts, 
including the sample space, the density, and probability calculations. 
The second example shows that, under an orthonormal design, the joint distribution   
given by Theorem~\ref{thm:densityhigh} coincides with the result from block soft-thresholding.
The last example considers the Lasso.
Technical details involved in these examples are deferred to Appendix~\ref{sec:techforexps}.

\begin{example}\label{exp:simple}
Consider a simple but nontrivial example with $p=3$, $n=2$, and $J=2$. The two groups
$\calG_1=\{1,2\}$ and $\calG_2=\{3\}$, and pick $\alp=2$. Suppose that
\begin{align*}
\frac{1}{\sqn}X=\left[\begin{array}{ccc}
1 & 0 & 1 \\
0 & 1 & 1
\end{array}\right],
\quad\quad \beta_0=0, \quad\quad \veps\sim \dnorm_2(0,\sigma^2\bfI_2),\quad\quad W=\bfI_3.
\end{align*}
Put $r=(r_1,r_2)$ and $s=(s_1,s_2,s_3)$.

We first determine the space $\calM_A$ \eqref{eq:defcalSofA}. Incorporating the constraint that
\begin{equation}\label{eq:exp1rowspace}
s\in\calV=\row(X) \Leftrightarrow s_1+s_2-s_3=0,
\end{equation}
the manifold $\calM_A$ can be expressed as
\begin{equation}\label{eq:exp1MA}
\calM_A=\{(s_1,s_2,s_1+s_2): (s_1,s_2)\in \mbD_A\},
\end{equation}
where $\mbD_A\subset \R^2$ is the range for $s_{(1)}=(s_1,s_2)$. 
Let $\mbB^m$ be the unit $\ell_2$-ball in $\R^m$. 
For $A=\varnothing$, the definition of $\calM_A$ shows that
%\begin{equation}\label{eq:exp1Dempty}
$\mbD_{\varnothing}=\mbB^2 \cap \{|s_1+s_2|\leq 1\}$,
%\end{equation}
whose boundary $\partial \mbD_{\varnothing}$ consists of two arcs and two line segments
connecting at four points:
$a=(1,0)$, $b=(0,1)$, $c=(-1,0)$, and $d=(0,-1)$.
See Figure~\ref{fig:space} for illustration.
Use $\partial(q_1, q_2)$ to denote the boundary of $\mbD_{\varnothing}$
from $q_1$ to $q_2$ along the positive orientation. %for $q_1,q_2 \in \{a,b,c,d\}$.
It is then immediate that 
\begin{equation}\label{eq:exp1DA}
\mbD_{A}=\left\{
\begin{array}{ll}
\partial(b,c) \cup \partial(d,a) & \text{ for }A=\{1\} \\
\partial(a,b) \cup \partial(c,d) & \text{ for }A=\{2\} \\
\{a,b,c,d\} & \text{ for }A=\{1,2\}.
\end{array}
\right.
\end{equation}
Plugging $\mbD_{A}$ back to \eqref{eq:exp1MA}, we see that $\calM_\vn$ is a surface,
$\calM_{\{1\}}$ and $\calM_{\{2\}}$ are curves, and
$\calM_{\{1,2\}}$ degenerates to four points in $\R^3$. %This clarifies the nature of the manifolds $\calM_A$.

\begin{figure}[!ht]
  \centering
 \includegraphics[width=0.35\textwidth,angle=-90,trim=0.5in 1in 1in 0.5in,clip]{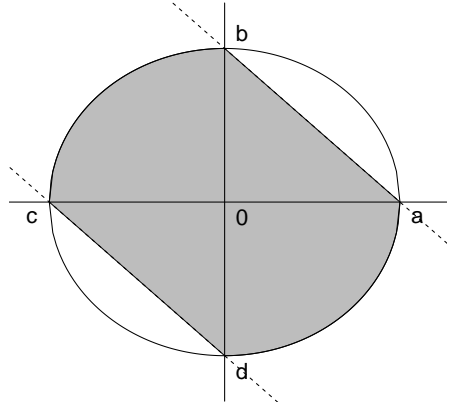}
  \caption{Sample space of $S_{(1)}$ shown as the shaded area.\label{fig:space}}
\end{figure}

We find $f_A$ \eqref{eq:density} and calculate $\Prob(\calA=A)$ for $A=\{1\}$ here.
The two arcs in $\mbD_{\{1\}}$ can be parameterized by $s_1$ and
$\Omg_{\{1\}}$ correspondingly by $\theta=(r_1,s_1)$ with two domains,
$\R^+\times (-1,0)$ and $\R^+\times (0,1)$.
After a few steps of algebra, we arrive at the density
\begin{align}
f_{\{1\}}(r_1,s_1,s_2)=\frac{1}{\pi \sigma^2}\exp\left[-\frac{(r_1+\lmd)^2}{\sigma^2}\right]
\frac{r_1+\lmd}{|s_2|}\label{eq:exp1pdf1}.
\end{align}
%in which $s_2$ is regarded as a function of $s_1$ over each of the two parameter domains.
%for $r_1>0$ and $(s_1,s_2) \in \mbD_{\{1\}}$. Here, $f_{\{1\}}$ is a function
%defined over two surfaces, $\R^+\times \partial(b,c)$ and $\R^+\times \partial(d,a)$.
Integrating $f_{\{1\}}dr_1ds_1$ over $r_1>0$ and $s_{(1)}\in\mbD_{\{1\}}$ gives
$\Prob(\calA=\{1\})=\frac{1}{2}e^{-\lmd^2/\sigma^2}$.
In Appendix~\ref{sec:derivation1}, we provide the results for $A=\vn,\{2\},\{1,2\}$,
and verify that $\Prob(\calA=A)$ indeed sums up to one.
\end{example}

\begin{example}[Orthogonal design]\label{exp:orthogonal}
Suppose $p=n=mJ$, $\Psi=\bfI_p$, and put $W=\sqrt{m}\bfI_p$. In this example, all the groups are of
the same size $m$. It is known that under this setting, the group Lasso ($\alp=2$) is obtained 
by block soft-thresholding the least-squares estimator $\tdbeta=\frn X^\trans y$:
\begin{equation}\label{eq:soft}
\hbeta_{(j)}=\tdbeta_{(j)}\left[1-{\lambda\sqm}/{\|\tdbeta_{(j)}\|}\right]_+, \quad j=1,\ldots,J.
\end{equation}
Assume $\veps \sim \dnorm_n(0,\sigma^2 \bfI_n)$. Then
$\tdbeta_{(j)}\sim \dnorm_m(\beta_{0(j)},(\sigma^2/n)\bfI_m)$ are independent of each other.
%\begin{equation}\label{eq:distrLS}
%\tdbeta_{(j)}\sim \dnorm_m(\beta_{0(j)},(\sigma^2/n)\bfI_m).
%\end{equation}
The distribution of $(\hgamma_\calA,S,\calA)$ for $\alp=2$, derived in Appendix~\ref{sec:derivation2},
is given by 
\begin{equation}\label{eq:densityorth}
d\mu_A = \prod_{j\in \N_J}\left[({2\pi \sigma^2}/{n})^{-\frac{m}{2}} 
\exp \left\{ -\frac{n}{2\sigma^2}\|(r_j+\lambda \sqm) s_{(j)}-\beta_{0(j)}\|^2\right\} |\det M_{(jj)}| d\theta_{(j)}\right],
\end{equation}
where $d\theta_{(j)}=dr_j \wedge ds_{F(j)}$ if $j\in A$ and $d\theta_{(j)}=ds_{(j)}$ (with $r_j=0$) otherwise.
Here, $F(j)$ is a chosen set of $(m-1)$ free coordinates of $s_{(j)}$, and
$\det M_{(jj)}$ has a closed-form expression \eqref{eq:detMjjexp2}.
In what follows, we expemplify that \eqref{eq:densityorth} is consistent with block soft-thresholding \eqref{eq:soft}.

Since $d\mu_A$ factorizes into a product of $J$ terms,
different groups $(\hgamma_{j},S_{(j)})$ are mutually independent. 
Denote the density in each term on the right side of \eqref{eq:densityorth} by $f_j(r_j,s_{(j)})$, 
which determines the distribution of $(\hgamma_{j},S_{(j)})$. 
If $j\notin A$, letting $r_j=0$ we have
\begin{equation}\label{eq:densityzeroorth}
f_j ds_{(j)} = ({2\pi \sigma^2}/{n})^{-\frac{m}{2}} 
\exp \left[ -\frac{n}{2\sigma^2}\|\lambda \sqm s_{(j)}-\beta_{0(j)}\|^2\right] (\lambda\sqm)^m ds_{(j)}.
\end{equation}
It then follows that
\begin{align*}
\Prob(\hbeta_{(j)}=0) & = \int_{\mbB^m} f_j ds_{(j)}
= \int_{\|z\|\leq \lambda \sqm} \phi_m(z;\beta_{0(j)},\sigma^2\bfI_m/n) dz \\
& = \Prob(\|\tdbeta_{(j)}\|\leq \lambda \sqm),
\end{align*} 
where the last equality comes from the distribution of $\tdbeta_{(j)}$. This result is clearly consistent with
the soft-thresholding rule \eqref{eq:soft}. Next we calculate $\Prob(\hgamma_j > t)$ for $j\in A$.
To simplify our derivation, assume further that $\beta_{0(j)}=0$. Integrating $f_j(r_j,s_{(j)})$ 
over the sphere $\|s_{(j)}\|=1$, 
the marginal density of $\hgamma_j$ is 
\begin{align}\label{eq:densityrjorth}
f_j(r_j) = \frac{\left({n}/{\sigma^2}\right)^{\frac{m}2}}{2^{\frac{m}{2}-1} \cdot\Gamma(m/2)}
(r_j + \lambda \sqm)^{m-1} \exp\left[ -\frac{n}{2\sigma^2}(r_j+\lambda \sqm)^2\right]
\end{align}
for $r_j>0$. It then follows that, for $t\geq 0$,
\begin{equation}\label{eq:gmsoft}
\Prob(\|\hbeta_{(j)}\|> t) = \int_t^{\infty} f_j(r_j) dr_j =\Prob\left\{\|\tdbeta_{(j)}\| > t+\lmd \sqm\right\},
\end{equation}  
which again coincides with the result from soft-thresholding.
See Appendix~\ref{sec:derivation2} for the derivation of \eqref{eq:densityrjorth} and \eqref{eq:gmsoft}.
%\textcolor{red}{[Consider asymptotic behaviors of the densities and Jacobians in this example.]}
\end{example}

\begin{example}[Lasso]\label{exp:lasso}
When $\alp=1$ in \eqref{eq:blocklassodef}, the block Lasso reduces to the Lasso with no
group structure. Thus, the result for $\alp=1$ can be deduced by letting $p_j=1$ for all $j$ and $\alp=2$ 
(or any $\alp>1$) in Theorem~\ref{thm:densityhigh}.
In this case, for $j\in\calA$ the subgradient $S_j=\sgn(\hbeta_j)\in \{1,-1\}$ is a function of $\hbeta_j$.
%which implies that $ds_j=0$ for $s\in\calM_A$. 
This leads to two special properties of the matrix $T=T(\eta(s),A)$
defined in Lemma~\ref{lm:diffHA} which do not hold in the general case $p_j\geq 2$: 
(i) $T=T(A)$ depends only on $A$, 
(ii) the submatrix $T_{A\bullet}$ is a zero matrix; see Appendix~\ref{sec:derivation3}. Bearing these facts in mind,
one can apply Theorem~\ref{thm:densityhigh} to find the joint distribution of the augmented Lasso,
given by the density
\begin{equation}\label{eq:nformLasso}
f_A(r_A,s) dr_Ads_F= g_n(\tdH_A(r_A,s))
\left|\det \left\{\sqn (X^\trans)^+ [\Psi_A \mid \lambda W_{B} T_{B\bullet}]\right\}\right| dr_A ds_F
\end{equation}
for $(r_A,s)\in\Omg_A$, where $B=\N_p\setminus A$ and $F\subset B$.
Owing to property (i), the Jacobian here does not depend on $s$, which is fundamentally
different from the block Lasso. A geometrical interpretation for (i) is that the space 
$\calM_A$ \eqref{eq:defcalSofA}
for the Lasso is a polyhedron and thus the set of tangent vectors
that forms the columns of $T$ is invariant at each $s\in\calM_A$, while in the block Lasso case
$\calM_A$ is curved with a different tangent space at different points. This gives one of the aspects in which
this work represents a highly nontrivial generalization to the result for the Lasso.

As adopted by \cite{Zhou14},
the augmented Lasso estimator can also be represented by $(\hbeta_\calA,S_\calB,\calA)$,
where $\calB=\N_p\setminus\calA$ is the set of zero components of $\hbeta$.
With the change of variable, $\hbeta_j=\hgamma_j S_j$ for $j\in\calA$, one can easily obtain
the density under this alternative parameterization from \eqref{eq:nformLasso},
which is identical to the joint density in Theorem 2 of \cite{Zhou14} with the choice of
$(X^{\trans}/\sqn)$ as a basis for $\row(X)$. See Appendix~\ref{sec:derivation3} for the technical details.
\end{example}

\section{Applications in Statistical Inference}\label{sec:infer}

In this section, we develop Monte Carlo methods to make inference about $\beta_0$ 
by utilizing the joint density of the augmented block Lasso estimator. 
Recall that we want to test the hypothesis $H_{0,G}: \beta_{0G}=0$ or to
construct confidence regions for $\beta_{0G}$. Without loss
of generality, assume that $G=\calG_j$ for some $j\in\N_J$ so that our goal is to 
infer $\beta_{0(j)}$. Denote the null hypotheses by $H_{0,j}: \beta_{0(j)}=0$ for $j\in\N_J$.

\subsection{Parametric bootstrap}\label{sec:pbootstrap}

Consider inference with an estimator in the form of
$\estpivot=\estpivot(\hbeta,S)\in\R^p$, a mapping of the augmented estimator $(\hbeta,S)$.
One such approach that has drawn recent attention is 
the de-biased Lasso and its generalization to the de-biased group Lasso. 
Given a $p\times p$ matrix $\hTheta=\hTheta(X)$, %a relaxed inverse $\hTheta=\hTheta(X)$ of the gram matrix $\Psi$,
a form of the de-biased estimator may be expressed as 
\begin{equation}\label{eq:debiased}
\estpivot=\hbeta+ \hTheta X^\trans(y-X\hbeta)/n =\hbeta+\lambda \hTheta WS,
\end{equation}
where $(\hbeta,S)$ is either the augmented Lasso or the augmented group Lasso.
Different de-biased estimators have been constructed with different $\hTheta$, which is often
some version of a relaxed inverse of the Gram matrix $\Psi$.
%There are also methods that directly use $\estpivot=\hbeta$ for group inference \citep{ZhouMain15}.
It is usually impossible to obtain the exact distribution of $(\estpivot-\beta_0)$ for a finite sample.
Thus, bootstrap methods have been developed \citep{ZhangCheng16,Dezeure16} with improved
performance compared to asymptotic approximations for the de-biased methods.

Assuming the error distribution is $\dnorm_n(0,\sigma^2\bfI_n)$ with a known $\sigma^2$ for now,
a parametric bootstrap for the estimator $\estpivot$ contains three steps:
\begin{routine}[$PB(\tdbeta,\sigma^2,\lambda)$]\label{alg:bp}
Given $\sigma^2>0$, $\lambda>0$ and a point estimate $\tdbeta\in\R^p$, 
\begin{enumerate}
\item[(1)] draw $\varepsilon^*  \sim \dnorm_n(0,\sigma^2\bfI_n)$ 
and set $y^*=X \tdbeta + \varepsilon^*$;
\item[(2)] solve \eqref{eq:blocklassodef} with $y^*$ in place of $y$ to obtain $\hbeta^*$;
\item[(3)] calculate $S^*$ via \eqref{eq:KKTwhole} and $\estpivot^*=\estpivot(\hbeta^*,S^*)$.
\end{enumerate}
\end{routine}
%Write this bootstrap procedure as $PB(\tdbeta,\sigma^2,\lambda)$.
Choosing a function $h_j: \R^{p_j} \to [0,\infty)$, we estimate its $(1-\siglevel)$-quantile $h_{j,(1-\siglevel)}$
from a large bootstrap sample such that
\begin{align*}
\Prob\left\{ \left. h_j(\estpivot^*_{(j)}-\tdbeta_{(j)})>h_{j,(1-\siglevel)} \right| \tdbeta\right\} = \siglevel.
\end{align*}
Then, a $(1-\siglevel)$ confidence region for $\beta_{0(j)}$ can be constructed in the form of 
\begin{align}\label{eq:regionest}
R_{j}(\siglevel)=\left\{\theta\in\R^{p_j}:h_j(\hb_{(j)}-\theta)\leq h_{j,(1-\siglevel)}\right\}.
\end{align}
By duality the p-value for testing $H_{0,j}$ is approximated by the tail probability
\begin{align}\label{eq:pval}
\Prob\left\{ \left. h_j(\estpivot^*_{(j)}-\tdbeta_{(j)})\geq h_j(\hb_{(j)}) \right| \tdbeta\right\}.
\end{align}
Common choices of $h_j$ include, for example, various norms and $h_j(\theta)=\|X_{(j)}\theta\|$.
Although out of the scope of this paper, the asymptotic validity of \eqref{eq:regionest} and \eqref{eq:pval}
comes from the fact that $(\hb_{(j)}-\beta_{0(j)})$ is an asymptotic pivot with a careful choice of $\hTheta$
\citep{MitraZhang14,vandeGeer13}.

An interesting and key observation is that the joint density of $[\hbeta^*,S^*\mid \tdbeta]$ is explicitly given by 
\eqref{eq:nform} in Theorem~\ref{thm:densityhigh}, with $\tdbeta$ in place of $\beta_0$, through its equivalent representation. %$(\hgamma^*_{\calA^*},S^*,\calA^*)$. 
Denote this density \eqref{eq:density} by 
$f_A(r_A,s; \tdbeta,\sigma^2,\lambda)$ to emphasize
its dependence on $(\tdbeta,\sigma^2,\lambda)$.
In principle, we can use Monte Carlo methods,
such as importance sampling and MCMC, to draw $(\hbeta^*,S^*)$
and obtain a sample of $\estpivot^*=\estpivot(\hbeta^*,S^*)$, which serve
as alternatives to the above bootstrap sampling. Monte Carlo methods may bring computational efficiency
and flexibility compared to parametric bootstrap. In the following, we will demonstrate the efficiency 
of importance sampling in calculating tail probabilities as in \eqref{eq:pval}, 
which is a prominent difficulty for the bootstrap. 
Monte Carlo methods for other applications, including with an estimated error distribution, 
are discussed in Section~\ref{sec:otherMC}.

\subsection{Importance sampling}\label{sec:is}

The following simple fact 
about the parameterization of $\calM_A$ \eqref{eq:defcalSofA},
proved in Appendix~\ref{sec:pfparam}, is useful
for designing proposal distributions in importance sampling.

\begin{lemma}\label{eq:MAparameterization}
Let $\alp\in(1,\infty)$. For each $|A|\leq n$, the manifold $\calM_A$, except for a set of measure zero, 
can be parameterized by $s_F$ such that the index set $F=F(A)$ only depends on $A$. 
\end{lemma}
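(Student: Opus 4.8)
The plan is to exhibit, for each $A\in\scrA$, a single coordinate subset $F=F(A)\subset\N_p$ with $|F|=n-|A|$ that serves as a valid local parameterization of $\calM_A$ at all but a measure-zero set of points. Recall from \eqref{eq:defcalSofA} that $\calM_A$ is cut out inside $\calV=\row(XW^{-1})$ by the $|A|$ equality constraints $\|s_{(j)}\|_{\alp^*}=1$ for $j\in A$ (the ball inequalities for $j\notin A$ are inactive on the interior and irrelevant to dimension). Since $\dim\calV=n$ under Assumption~\ref{as:X}, $\calM_A$ is generically an $(n-|A|)$-manifold, so we need to pick $n-|A|$ coordinates of $s$ whose values locally determine the remaining $p-(n-|A|)$ coordinates via the linear equations defining $\calV$ together with the $|A|$ sphere equations.

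First I would fix a basis: let $Q\in\R^{p\times n}$ have columns spanning $\calV$, so $s=Qc$ for $c\in\R^n$ parameterizes $\calV$ globally. The constraint map $c\mapsto(\|({Qc})_{(j)}\|_{\alp^*})_{j\in A}\in\R^{|A|}$ has, at a generic point, full-rank differential $n\times|A|$ (its rows are $\sgn$-weighted $\ell_{\alp^*}$-normalized versions of the blocks of $Q$, which are linearly independent off a measure-zero set — here is where a blockwise-general-position type argument or simply the analyticity of the constraint functions is used). Then by the implicit function theorem there is an index set $I\subset\N_n$, $|I|=|A|$, such that $c_{I^c}$ locally parameterizes $\calM_A$; composing with $s=Qc$ and choosing $F$ to be any size-$(n-|A|)$ subset of $\N_p$ on which the corresponding rows of the relevant submatrix of $Q$ are invertible gives the desired $s_F$. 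The key point for the lemma is that the ``generic'' bad set where a \emph{fixed} choice of $I$ (equivalently $F$) fails is the zero set of a real-analytic function on the $(n-|A|)$-dimensional manifold $\calM_A$ that is not identically zero, hence has measure zero in $\calM_A$; taking $F=F(A)$ to be a choice that works on a full-measure subset completes the argument. One should also note $\calM_A$ has only finitely many connected components (it is semi-algebraic-like, being an intersection of $\ell_{\alp^*}$-spheres with a linear space), so finitely many analytic non-vanishing conditions suffice.

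The main obstacle I expect is establishing that the differential of the block-norm constraint map has full rank $|A|$ at generic points of $\calV$, i.e. that the gradients $\grad_s\|s_{(j)}\|_{\alp^*}$, $j\in A$, projected onto $\calV$, are linearly independent generically. These gradients have disjoint supports $\calG_j$, so they are automatically linearly independent as vectors in $\R^p$; the subtlety is only that their projections onto the $n$-dimensional subspace $\calV$ stay independent, which can fail on a lower-dimensional subvariety of $\calM_A$. I would handle this by noting that the relevant $n\times|A|$ matrix $Q^\trans D_s$ (with $D_s$ the block-diagonal matrix of constraint gradients) has a $|A|\times|A|$ minor that is a real-analytic, not-identically-zero function of the parameter $c$ — not identically zero because at any point with, say, all active blocks aligned with distinct coordinate directions the minor is manifestly nonzero — and then invoke the standard fact that the zero set of such a function is Lebesgue-null. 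Everything else (reindexing, the implicit function theorem, passing between $c$-coordinates and $s_F$-coordinates) is routine bookkeeping once this rank statement is in hand.
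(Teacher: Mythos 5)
Your route is genuinely different from the paper's and, as written, has a gap at exactly the step you yourself flag as the main obstacle. The paper's proof is a two-line explicit construction: the linear constraints $Q^{\trans}s=0$ cutting out $\calV$ have a coefficient matrix that does not depend on $s$, and each active sphere $\mbS^{p_j-1}_{\alp^*}$ is written explicitly as a graph, $s_k=\pm\bigl[1-\|s_{(j)\setminus k}\|_{\alp^*}^{\alp^*}\bigr]^{1/\alp^*}$, over the two hemispheres $\{s_k>0\}$ and $\{s_k<0\}$ with $k$ the last index of $\calG_j$; these two charts cover the sphere up to the null set $\{s_k=0\}$, so the same dropped coordinates work almost everywhere and $F$ depends only on $A$. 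No implicit function theorem, no generic-rank argument, and no analyticity are needed.

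The gap in your version is the non-vanishing witness. You argue that the $|A|\times|A|$ minor of the projected constraint gradients is real-analytic and not identically zero ``because at any point with all active blocks aligned with distinct coordinate directions the minor is manifestly nonzero.'' But the function whose zero set must be shown null lives on $\calM_A$ (or at least on $\calV$), and a point with the active blocks aligned to coordinate directions will in general not satisfy $s\in\row(XW^{-1})$, so it is not an admissible witness; you would need a nonzero value at some point of every connected component of $\calM_A$, which is not supplied. There is also a circularity: the statement ``the zero set of a non-trivial real-analytic function on the $(n-|A|)$-dimensional manifold $\calM_A$ has measure zero in $\calM_A$'' presupposes that $\calM_A$ is an $(n-|A|)$-manifold, which is precisely what the full-rank condition you are trying to establish is meant to deliver --- if the rank dropped on an entire component, that component would not have the right dimension and no $(n-|A|)$-coordinate chart would exist there at all, not merely off a null set. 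If you want to keep the implicit-function-theorem framing, the repair is essentially to import the paper's observation that each sphere constraint involves only the coordinates in its own block $\calG_j$ and can therefore be solved block-by-block for one designated coordinate wherever that coordinate is nonzero; at that point you have reproduced the paper's explicit argument and the analyticity machinery becomes unnecessary.
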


%\begin{remark}\label{rmk:parameterization}
%Although the set of variables to parameterize $\calM_A$ only depends on the active set $A$, it should
%be noted that the parameterization itself may still depend on $s$. For example, the above parameterization
%of $\mbS_{\alp^*}$ depends on the sign of $s_k$.
%\end{remark}

A consequence of Lemma~\ref{eq:MAparameterization} is that we may use the same volume element
$d\theta = dr_A \wedge ds_F$ almost everywhere in the subspace $\Omg_A$,
which eases our development of a Monte Carlo algorithm.
Suppose that $q_A(r_A,s)$ is the density of a distribution over $\Omg$ with respect to $d\theta$
such that $\sum_A \int_{\Omg_A}q_A(r_A,s) d \theta=1$. 
As long as the support of $q_A$ is $\Omg_A$ for all $A\in\scrA$, it can be used as a proposal
distribution in importance sampling.
With a little abuse of notation,
put $\theta=(r_A,s)\in \Omg_A$ so that $(\theta,A)$ represents a point in the sample space $\Omg$
at which the volume element is $d\theta$.
Suppose we want to estimate the expectation of a function 
$h(\hbeta,S)=h(\hgamma,S,\calA)$ with respect to $f_A$, using $(\hbeta,S)$ and
$(\hgamma,S,\calA)$ interchangebly. 
Importance sampling can be readily implemented given the densities $f_A$ and $q_A$.
Draw $(A_t,\theta_t)$ from the proposal $q_{A}(\theta)$ for $t=1,\ldots,N$ and calculate importance weights
$w_t={f_{A_t}(\theta_t)}/{q_{A_t}(\theta_t)}$.
Then by the law of large numbers, the weighted sample mean
\begin{align*}
\hat{h}=\frac{\sum_{t=1}^N w_t h(\theta_t,A_t)}{\sum_{t=1}^N w_t} &\toas
\E[h(\hgamma,S,\calA)]
\end{align*}
provides the desired estimate. To estimate the probability in \eqref{eq:pval}, $h$ is taken to be the indicator
function of the event of interest.
When the true $\beta_{0(j)}\ne 0$, the p-value \eqref{eq:pval} can be tiny, 
and bootstrap (Algorithm~\ref{alg:bp}) may fail to provide a meaningful estimate of the significance level. 
In such cases, it is much more efficient to use importance sampling with a proposal distribution 
that has a higher chance to reach the tail of the bootstrap distribution
$f_A(r_A,s; \tdbeta, \sigma^2,\lambda)$. 

We design two types of proposal distributions. 
The first type of proposals draw $(\hbeta^*,S^*)$ by
the bootstrap algorithm $PB(\beta^{\dag}, M\sigma^2,\lambda^\dag)$ 
with a proper choice of $(\beta^{\dag}, M,\lambda^\dag)$, where $M>0$ is a constant.
The proposal distribution has density $f_A(r_A,s; \beta^{\dag}, M\sigma^2,\lambda^\dag)$,
again by Theorem~\ref{thm:densityhigh}.
By increasing the error variance with $M>1$, choosing $\beta^{\dag}\ne \tdbeta$,
and possibly with a different $\lambda^\dag$, we can propose samples in the region of interest
in \eqref{eq:pval} which has a small probability with respect to the target distribution.
The Jacobian term $J_A(r_A,s;\lambda)$ \eqref{eq:jacob} is the time-consuming part in 
evaluating the densities for calculating importance weights.
If we choose $\lambda^\dag=\lambda$, however, this term will cancel out and the importance weight
is simply the ratio of two normal densities, whose calculation is almost costless.  
Our empirical study shows that this choice gives comparable estimation accuracy and thus
we always let $\lambda^\dag=\lambda$. 
Denote by $IS(\beta^\dag,M)$ the importance sampling with the first type of proposals.
Our second design uses a mixture of two proposal distributions with different $\beta^\dag$ and $M$,
which has more flexibility in shifting samples to multiple regions of interest. 
Again the Jacobian terms cancel out in the importance weight \eqref{eq:normalratio}. 
Our importance sampling with a mixture proposal is detailed in the following algorithm.
For brevity, write
\begin{equation*}
\tdH(\hbeta,S;\beta_0)=\sqn(X^{\trans})^+(\Psi\hbeta+\lambda W S-\Psi \beta_0),
\end{equation*}
which is identical to the $\tdH$ in \eqref{eq:bieps}.

\begin{routine}[($IS(a_1,\beta^\dag_1,M_1;a_2,\beta^\dag_2,M_2)$)]\label{alg:is}
Given $a_1+a_2=1$, $\beta^{\dag}_1,\beta^{\dag}_2\in\R^p$ and $M_1,M_2>0$,
\begin{enumerate}
\item[(1)] draw $Z$ from $\{1,2\}$ with probabilities $\{a_1,a_2\}$ and
$(\hbeta^*,S^*)$ from $PB(\beta^\dag_Z,M_Z\sigma^2,\lambda)$;
\item[(2)] calculate importance weight
\begin{equation}\label{eq:normalratio}
W^*=\frac{\phi_n\left(\tdH(\hbeta^*,S^*;\tdbeta);\sigma^2/n\right)}
{\sum_{k=1}^2 a_k\,\phi_n\left(\tdH(\hbeta^*,S^*;\beta^{\dag}_k);M_k\sigma^2/n\right)}.
\end{equation}
%where $\phi_n(\bullet;c)$ is the density for $\dnorm_n(0,c\bfI_n)$ for $c>0$.
\end{enumerate}
\end{routine}

\begin{remark}
The first algorithm $IS(\beta^\dag,M)$ 
can be regarded as a special case of Algorithm~\ref{alg:is} with $a_1=1$, $\beta^\dag_1=\beta^\dag$
and $M_1=M$. One can easily generalize Algorithm~\ref{alg:is} to a mixture proposal
with $K\geq 3$ component distributions. For other error distributions, we simply replace
$\phi_n$ in  \eqref{eq:normalratio} by $g_n$, the density of $\veps/\sqn$.
\end{remark}

In our numerical results, 
the efficiency of an importance sampling estimate is measured by its coefficient of variation (cv) across
multiple independent runs and compared with direct bootstrap outlined in Algorithm~\ref{alg:bp}.

\subsection{Group Lasso}\label{sec:grpinfer}

We begin with a simpler application to test the complete null hypothesis $H_0:\beta_0=0$
using the statistic $T=h(\hbeta)=\sum_j \|\hbeta_{(j)}\|$, where $\hbeta$ is the group Lasso for a particular $\lambda$. 
In this case, our target density $f_A(r_A,s; \beta_0=0, \sigma^2,\lambda)$ determines the exact
distribution of $T$ under $H_0$.

We set the group size $p_j=10$ for all groups and fixed $\sigma^2=1$.
Each row of $X$ was drawn from $\dnorm_p(0,\Sigma)$, where the diagonal elements of $\Sigma$ are all 1.
The off-diagonal elements $\Sigma_{ij}=\rho_1$ if $i,j$ are in the same group and  $\Sigma_{ij}=\rho_2$ otherwise.
We simulated 30 datasets with parameters $(n,p,\rho_1,\rho_2)$ reported in Table~\ref{tab:compdata}.
Put $v=(1,1,1,1,-1,-1,-1,-1,0,0)$. For the first 10 datasets, we chose $\beta_0=0$ so that $H_0$ was true.
For the other 20 datasets, the first two groups of $\beta_0$ were active, with $\beta_{0(1)}=\beta_{0(2)}=v/2$
for datasets 11 to 20 and $\beta_{0(1)}=\beta_{0(2)}=v$ for datasets 21 to 30.
For each dataset, $\lambda$ was chosen to be the smallest value such that the group Lasso solution had
two active groups. The range of $\lambda$ and that of the statistic $T$ across the simulated datasets are reported
in Table~\ref{tab:compdata} as well.

\begin{table}[ht]
\caption{Simulated datasets for testing complete null hypothesis\label{tab:compdata}} \smallskip
\centering
\begin{tabular}{lllcc}
\hline \hline
Dataset & $(n,p)$ & $(\rho_1, \rho_2)$		& $\lambda$ 		& $T$ \\ \hline
1-10		&  $(30, 100)$ & $(0, 0)$			& $(0.396, 0.796)$	& $(0.017, 0.337)$ \\
11-20 	&  $(30, 100)$ & $(0, 0)$			& $(0.554, 1.613)$	& $(0.460, 1.964)$ \\
21-30 	&  $(30, 100)$ & $(0.5, 0)$		& $(0.956, 2.650)$	& $(0.045, 2.186)$ \\ \hline
\end{tabular} 
\end{table}

We applied the algorithm $IS(0,5)$ to generate $N=100,000$ samples. This procedure was repeated 20 times
independently for each dataset to calculate the mean $\bar{q}=\E(\hat{q}^{(IS)})$
and the standard deviation of an estimated p-value $\hat{q}^{(IS)}$, 
from which we calculated $\text{cv}(\hat{q}^{(IS)})$. 
If we had used the bootstrap algorithm $PB(0,\sigma^2,\lambda)$ for the same $N$ to estimate the p-value,
denoted by $\hat{q}^{(PB)}$, 
its cv would have been close to $\sqrt{(1-\bar{q})/(N\bar{q})}$. Figure~\ref{fig:compgrp} plots
$\log_{10}(\bar{q})$, $\text{cv}(\hat{q}^{(IS)})$ and $\log_{10}\{\text{cv}(\hat{q}^{(PB)})/\text{cv}(\hat{q}^{(IS)})\}$
for the 30 datasets.
We observe from the ratio of cv's in panel (c) 
that the importance sampling estimates are much more accurate for datasets 11 to 30 in which
the p-values are very small. For many of these 20 datasets, the improvement of importance sampling
 over bootstrap can be five or more orders of magnitude. 
The p-values were insignificant for the first 10 datasets in which the null hypothesis was true. 
For a majority of these cases, the importance
sampling estimates were slightly less accurate than the bootstrap estimates, which is fully expected.

\begin{figure}[ht]%[!ht]
  \centering
  \rotatebox{-90}{
    \includegraphics[width=0.4\textwidth]{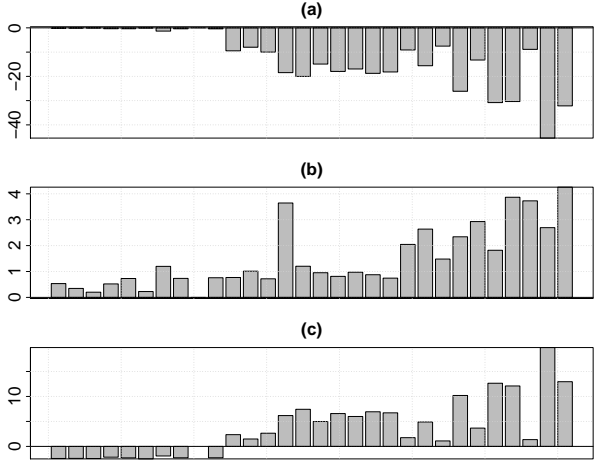}%{Pvalue-grp.eps}
    }
  \caption{Estimation of p-values for testing $H_0$ with the group Lasso. 
  (a) $\log_{10} \bar{q}$, (b) cv$(\hat{q}^{(IS)})$ and 
  (c) $\log_{10}\{\text{cv}(\hat{q}^{(PB)}) / \text{cv}(\hat{q}^{(IS)})\}$. 
  The result for a dataset is reported by a vertical bar in each plot.
  \label{fig:compgrp}}
\end{figure}

\subsection{A de-biased approach}\label{sec:dBLinfer}

The second application concerns a de-biased group Lasso in the form of \eqref{eq:debiased}.
Since our method applies to any choice of $\hTheta$,
to simplify the discussion we set $\hTheta=\Sigma^{-1}$ instead of
using a particular estimate, where $\Sigma$ is the population covariance of $X$. 
The test statistic is chosen as $h_j(\hb_{(j)})=\|X_{(j)}\hb_{(j)}\|:=T_j$ in \eqref{eq:pval}.

We simulated 20 datasets independently under the same settings as those for 
datasets 11 to 30 in Table~\ref{tab:compdata}. The tuning parameter $\lambda$ was chosen by the same method
as in Section~\ref{sec:grpinfer} to calculate the group Lasso $\hbeta$ 
and the de-biased estimate $\hb$ \eqref{eq:debiased} for each dataset. 
Figure~\ref{fig:dbestimate} plots these two estimates for one dataset, 
in which $\beta_{0(1)}=\beta_{0(2)}=v$ and $\beta_{0(j)}=0$ for $j> 2$.
We see that the de-biased group Lasso $\hb$ is not sparse, $\hb_{(j)}\ne 0$ for all $j$, and its first two groups are closer
to the active groups of $\beta_{0}$ than the group Lasso. This largely removed the shrinkage in the active coefficients
of the group Lasso solution and substantially reduced its bias.
Our goal was then to test $H_{0,1}: \beta_{0(1)}=0$ by 
estimating the probability \eqref{eq:pval} for $T_1=\|X_{(1)}\hb_{(1)}\|$ 
with a plug-in point estimate $\tdbeta$.
The observed value of the test statistic $T_1$ ranged from $4.4$ to $21.2$ across the 20 datasets.
Due to the asymptotic unbiasedness of $\hb_{(j)}$, the bootstrap distribution $[\hb^*_{(j)}-\tdbeta_{(j)}\mid \tdbeta]$ 
is not sensitive to the choice of $\tdbeta$ as long as it is sparse. Thus, we chose $\tdbeta=\hbeta$.
See \cite{Dezeure16} for related discussions.

\begin{figure}[ht]
  \centering
 % \rotatebox{-90}{
    \includegraphics[height=0.7\textwidth,angle=-90,trim=0.5in 0in 0.2in 0in,clip]{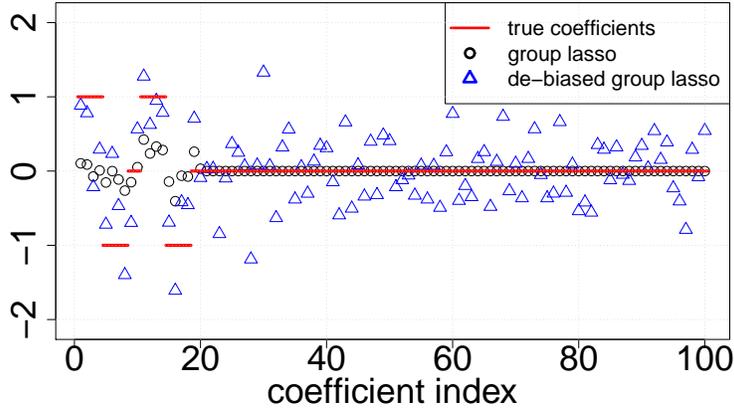}
%    }
  \caption{The group Lasso and de-biased group Lasso solutions for one dataset with $p=100$, where
  the size of each group is 10.
  \label{fig:dbestimate}}
\end{figure}

We designed the following mixture proposal for Algorithm~\ref{alg:is} to approximate the p-value \eqref{eq:pval}
by importance sampling:
\begin{align*}
a_1=a_2=1/2; M_1=2, M_2=4; \beta^\dag_{1}=\hbeta,
\beta^\dag_{2(1)}=\hbeta_{(1)}/2, \beta^\dag_{2(-1)}=\hbeta_{(-1)}.
\end{align*}
Note that $\beta^\dag_{2(1)}$ is the middle point between $\hbeta_{(1)}$ and $0$,
serving as a bridge between the target distribution and the null hypothesis $H_{0,1}$.
To achieve a wider coverage of the sample space,
the error variances of both component distributions were chosen to be greater than $\sigma^2$.
We applied Algorithm~\ref{alg:is} to generate $N=100,000$ weighted samples for each dataset,
and replicated this procedure 20 times independently to calculate cv as we did in previous example.
The same comparisons were conducted and reported in Figure~\ref{fig:compdbl}.
Strong majority of the p-values were estimated to be significant, since $\beta_{0(1)}\ne 0$ for all 20 datasets.
The cv's of the importance sampling estimates are seen to be quite small, 
which is especially satisfactory for those
tiny tail probabilities on the order of $10^{-10}$ or smaller.
As shown in Figure~\ref{fig:compdbl}(c), our importance sampling estimation is more efficient than 
parametric bootstrap for at least 13 out of the 20 datasets, many showing orders of magnitude of improvement.
For most of the other datasets, the importance sampling results are very comparable to
the results from bootstrap.

\begin{figure}[t]
  \centering
  \rotatebox{-90}{
    \includegraphics[width=0.4\textwidth]{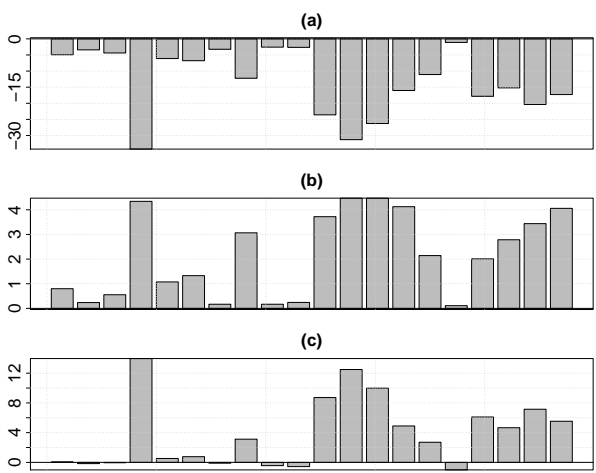}%{dBGL2.eps}
    }
  \caption{Estimation of p-values for testing $H_{0,1}$ with a de-biased group Lasso. 
  Plots are in the same format as those in Figure~\ref{fig:compgrp}.
  \label{fig:compdbl}}
\end{figure}

Compared to the parametric bootstrap in Algorithm~\ref{alg:bp},
the only additional step in our importance sampling algorithms is to evaluate importance weights,
such as \eqref{eq:normalratio}, of which the computing time is negligible relative to drawing group Lasso
samples. As a result, the total running time of importance sampling is almost identical
to that of the bootstrap sampling. The above two applications thus exemplify the huge gain in estimation
accuracy by importance sampling via estimator augmentation at almost identical computing cost.
It is worth mentioning that accurate estimation of small p-values is crucial for ranking the importance of predictors
and controlling false discoveries in large-scale screening.

\subsection{Other applications}\label{sec:otherMC}

Given the joint density $f_A(r_A,s; \tdbeta,\sigma^2,\lambda)$, one may design
MCMC algorithms to draw samples $(\hbeta^*,S^*)$ from this distribution, which is identical to the distribution of
a bootstrap sample generated by $PB(\tdbeta,\sigma^2,\lambda)$ in Algorithm~\ref{alg:bp}.
The advantage of an MCMC algorithm is that it does not need to solve a convex optimization program
in any of its steps.
But evaluating the Jacobian term in $f_A$ could be time-consuming. 
Another potential application is conditional sampling from $[\hbeta^*,S^* \mid \hbeta^* \in B]$, which may be
useful in post-selection inference. %such as in \cite{Lee14}. 
For example, conditioning on
the model selected by $\hbeta$, i.e. $G(\hbeta^*)=G(\hbeta)$, we may wish to sample from an estimator $\hb^*$ with
a nice asymptotic distribution for inference. For this problem, bootstrap may be impractical since
the conditioning event is often a rare event. However, from the joint density one can easily obtain 
the conditional density $\propto f_{G}(r_G,s)$, where $G=G(\hbeta)$, and implement an
MCMC algorithm to draw from this conditional distribution.
In the case of Lasso, \cite{Zhou14} implemented an Metropolis-Hastings sampler for such
conditional sampling. The more general case for a block Lasso will be considered in the future.

Under a Gaussian error assumption, it is a common practice to plug an estimated variance $\hsigma^2$ 
in the bootstrap $PB(\tdbeta,\hsigma^2,\lambda)$.
As long as $\hsigma^2$ is consistent with a certain rate, inference will be valid 
asymptotically \citep{Dezeure16,ZhouMain15}.
Therefore, we can use our importance sampling algorithms with 
$f_A(r_A,s; \tdbeta,\hsigma^2,\lambda)$ as the target density.
Note that the density $f_A$ \eqref{eq:density} depends on the error distribution only through
the density $g_n$ of $\veps/\sqn$. Under a general i.i.d. error assumption, estimating $g_n$ reduces
to estimating the density of an univariate distribution, which can be done quite accurately even when
$n$ is moderate by either a parametric or a nonparametric method. Given an estimate $\hat{g}_n$,
our target density is readily obtained with $g_n$ replaced by $\hat{g}_n$.

\section{Discussions}\label{sec:generalization}

\subsection{Block-$(1,\infty)$ norm}\label{sec:inftynorm}

In this subsection, we consider the case $\alp=\infty$ ($\alp^*=1$). %and $\rho=0$ in this section.
The difference between this case and the case $\alp<\infty$ comes from the subgradient vector $S$. 
Let $\calB_j=\argmax_{k\in\calG_j}|\hbeta_k|\subset \calG_j$,
which may contain multiple elements when a tie occurs,
and $\calB_j^c=\calG_j\setminus \calB_j$ for $j\in \N_J$.
It follows from Lemma~\ref{lm:subdiffLinfty} that (i) 
for $\hbeta_{(j)}\ne 0$, $\|S_{(j)}\|_1= 1$ and
\begin{equation}\label{eq:Sinftyn0}
S_k=\begin{cases}
t_k \sgn(\hbeta_k) & k\in \calB_j \\
0 & k \in \calB^c_j
\end{cases},
\end{equation}
where $\sum_{\calB_j}t_k=1$ and $t_k\geq 0$; (ii) $\|S_{(j)}\|_1\leq 1$ for $\hbeta_{(j)}=0$.
 
Compared to \eqref{eq:Sdef}, the discreteness in $\{S_k=0\}$ for some $k$ as in \eqref{eq:Sinftyn0} distinguishes 
the $(1,\infty)$-norm from other cases of $\alp<\infty$.
Accordingly, the augmented estimator $(\hbeta,S)$ will have more structure. % under block $(1,\infty)$ regularization.
Recall that the active blocks of $\hbeta$ are denoted by $\calA=G(\hbeta)$. 
For $j\in\calA$, define 
\begin{equation}\label{eq:calKj}
\calK_j=\{k\in\calG_j: S_k\ne 0\}\quad\text{and}\quad \calK^c_j=\calG_j\setminus \calK_j.
\end{equation}
Put $\calK=\cup\{\calK_j:j\in\calA\}$ and $\calK^c=\cup\{\calK^c_j:j\in\calA\}$.
It follows from \eqref{eq:Sinftyn0} that
$\hbeta_{\calK_j}=\hgamma_j \sgn(S_{\calK_j})$ for $j\in\calA$, where $\hgamma_j=\|\hbeta_{(j)}\|_{\infty}$.
We can then represent $(\hbeta,S)$ by 
\begin{equation}\label{eq:equivrep}
(\hgamma_\calA,\hbeta_{\calK^c},S, \calA, \calK)\quad \text{with } S_{\calK^c}=0, 
\end{equation}
subject to the constraints that 
$\|\hbeta_{\calK_j^c}\|_{\infty}\leq \hgamma_j$ for  $j\in\calA$.
For $\calA=A\in\scrA$ \eqref{eq:Arange}, Proposition~\ref{prop:uniqsuffinfty} implies that 
assuming solution uniqueness the range of $\calK$ is %for almost all $y$ is 
\begin{equation}\label{eq:rangeofK}
\scrK(A)=\{K\subset \calG_A: K\cap \calG_j \ne \vn, \forall j\in A \text{ and } |\calG_A\setminus K|\leq n-|A|\}.
\end{equation}
Let $K_j=K\cap \calG_j$ and $K^c_j=K^c\cap \calG_j$ for $j\in A$, where $K^c=\calG_{A}\setminus K$.
The sample space for $S$ given $\calA=A$ and $\calK=K$ is
\begin{equation}\label{eq:defcalSofAinfty}
\calM_{A,K}=\left\{s\in  \calV: \|s_{K_j}\|_{1}=1,s_{K_j^c}=0\;\forall j\in A 
\text{ and } \|s_{(j)}\|_{1}\leq 1\;\forall j\notin A\right\},
\end{equation}
where $\calV$ is defined in \eqref{eq:SinrowX}.  
The sample space for $(\hgamma_j,\hbeta_{K_j^c})$ is the cone 
\begin{equation}\label{eq:cone}
\calC_j=\left\{(r,v)\in (\R^+)\times \R^{|K_j^c|}: \|v\|_{\infty} \leq r_j\right\}.
\end{equation}
Then the sample space for $(\hgamma_A,\hbeta_{K^c},S)$ is the product
%\begin{equation}\label{eq:defOmgAinfty}
$\Omega_{A,K}=(\prod_{j\in A} \calC_j) \times \calM_{A,K}$.
%\end{equation}
Obviously, taking union over the range of the sets $A\in\scrA$ and $K\in\scrK(A)$ determines
the space $\Omega$ for the augmented estimator. 
Compared to the case $\alp<\infty$, the subgradient $S$ has lost $|\calK^c|$ 
free dimensions due to the constraints that $S_{k}=0$ for all $k\in\calK^c$.
Consequently, for every interior point $s\in\calM_{A,K}$,
there is a neighborhood that may be parameterized by $s_F$ with $|F|=(n-|A|-|K^c|):=q$.
Let $I=\N_p\setminus K^c$ and note that $ds_k=0$ for each $k\in K^c$.
Similar to Lemma~\ref{lm:diffHA}, we can then find a matrix $T \in \R^{(p-|K^c|) \times q}$
such that $ds_I=T ds_F$. %for each interior point $(r_A,b_{K^c},s)\in\Omg_{A,K}$.

For notational brevity we will use $(\hbeta,S)$ and its equivalent representation \eqref{eq:equivrep} interchangeably.
Write the mappings $H$ \eqref{eq:defineH} and $\tdH$ \eqref{eq:bieps} 
as $H(b,s)$ and $\tdH(b,s)$, respectively, where $(b,s)$ denotes the value of $(\hbeta,S)$. 
For $(r_A,b_{K^c},s)\in\Omg_{A,K}$,
let $\tdH_{A,K}(r_{A}, b_{K^c}, s)=\tdH(b,s)$ with $(r_A,b_{K^c},s,A,K)$ being the equivalent representation of $(b,s)$.
Define two matrices
\begin{align*}
\Psi\circ \sgn(s) & =[\Psi_{(1)}\sgn(s_{(1)})|\ldots | \Psi_{(J)}\sgn(s_{(J)})] \in \R^{p\times J} \\
M(s,A,K;\lmd)& = \left[\{\Psi\circ \sgn(s)\}_A \mid \Psi_{K^c}\mid \lambda {W_I}T\right] \in \R^{p\times n},
\end{align*}
and a related Jacobian 
\begin{equation}\label{eq:jacobinfty}
J_{A,K}(s;\lambda)=\det \left[\sqn (X^\trans)^+ M(s,A,K;\lmd)\right].
\end{equation}
Parallel to Theorem~\ref{thm:densityhigh}, We have the following explicit density for the augmented
estimator under block-$(1,\infty)$ sparsity, which is proved in Appendix~\ref{sec:proofdeninf}.

\begin{theorem}\label{thm:densityinfty}
Fix $p\geq n$, $\beta_0\in\R^p$ and $\lambda>0$. %Suppose that Assumptions~\ref{as:X} and \ref{as:pj} hold. 
Suppose Assumption~\ref{as:X} holds and 
that the program~\eqref{eq:blocklassodef} for $\alp=\infty$ has a unique solution for almost all $y\in\R^n$.
Let $g_n$ be the density of $(\veps/\sqn)$.
Then the distribution of the augmented estimator $(\hbeta,S)$ is given by the $n$-form
\begin{align}\label{eq:nforminfty}
d\mu_{A,K}&=\Prob(d r_{A}, d b_{K^c}, ds,\{A,K\}) \nonumber\\
&= g_n(\tdH_{A,K}(r_{A}, b_{K^c}, s;\beta_0,\lambda))|J_{A,K}(s;\lambda)| d\theta 
:= f_{A,K}(r_A,b_{K^c},s)d\theta
\end{align}
for $(r_A,b_{K^c},s,A,K)\in \Omg$, where $\theta=(r_A,b_{K^c},s_F)\in\R^n$.
\end{theorem}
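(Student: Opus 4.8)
The argument parallels the proof of Theorem~\ref{thm:densityhigh} in Appendix~\ref{sec:proofhimden}, with the extra bookkeeping forced by the discrete label $\calK$. The plan is to partition $\R^n$, the sample space of $\veps$, according to which piece $\Omg_{A,K}$ the augmented estimator lands in; to show that on each piece the KKT map is a bijection whose differential has the stated block form; and to push the density $g_n$ of $\veps/\sqn$ forward through the change of variables.

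First I would decompose the noise space. Writing $\hbeta(\cdot)$, $\calA(\cdot)$, $\calK(\cdot)$ for the solution, its active blocks and the index set \eqref{eq:calKj} of a response $X\beta_0+v$, set
\begin{equation*}
E_{A,K}=\left\{v\in\R^n:\calA(X\beta_0+v)=A,\ \calK(X\beta_0+v)=K\right\}
\end{equation*}
for $A\in\scrA$ and $K\in\scrK(A)$. Since the program \eqref{eq:blocklassodef} has a unique solution for almost all $y$, and $S$ (hence $\calK$) is always unique by Lemma~\ref{lm:fittedandS} given $\hbeta$, the sets $E_{A,K}$ partition $\R^n$ up to a Lebesgue-null set; on $E_{A,K}$ the representation \eqref{eq:equivrep} is well-defined and its value lies in $\Omg_{A,K}=(\prod_{j\in A}\calC_j)\times\calM_{A,K}$ with $\calM_{A,K}$, $\calC_j$ as in \eqref{eq:defcalSofAinfty}, \eqref{eq:cone}. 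Substituting $\hbeta_{\calK_j}=\hgamma_j\sgn(S_{\calK_j})$ and $S_{\calK^c}=0$ into the KKT identity \eqref{eq:KKTU} and using $\Psi_{(j)}\sgn(s_{(j)})=\Psi_{K_j}\sgn(s_{K_j})$ (because $s_{K_j^c}=0$), the map $H$ restricted to $\Omg_{A,K}$ becomes
\begin{equation*}
H_{A,K}(r_A,b_{K^c},s)=\sum_{j\in A}r_j\Psi_{(j)}\sgn(s_{(j)})+\Psi_{K^c}b_{K^c}+\lambda Ws-\Psi\beta_0\ \in\ \row(X),
\end{equation*}
and for $v\in E_{A,K}$ one has $H_{A,K}$ of the augmented estimator equal to $\frn X^\trans v$. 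Uniqueness of $(\hbeta,S)$, i.e.\ of $(\hgamma_\calA,\hbeta_{\calK^c},S,\calA,\calK)$, makes $v\mapsto(\text{augmented estimator})$ injective on $E_{A,K}$, so $\tdH_{A,K}=\sqn(X^\trans)^+H_{A,K}$, well-defined since $\nul(X^\trans)=\{0\}$ under Assumption~\ref{as:X}, is a bijection from $\Omg_{A,K}$'s occupied region onto $E_{A,K}/\sqn\subset\R^n$.

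Next I would compute the differential. Near an interior point, $\calM_{A,K}$ is the intersection of the $n$-dimensional subspace $\calV$ with the coordinate subspace $\{s_{K^c}=0\}$ and the $|A|$ hypersurfaces $\{\|s_{K_j}\|_1=1\}$, which are smooth away from the coordinate hyperplanes; hence a.e.\ it is a $q$-manifold with $q=n-|A|-|K^c|$, locally parameterized by $s_F$, $|F|=q$, yielding a full-rank $T\in\R^{(p-|K^c|)\times q}$ with $ds_I=T\,ds_F$, $I=\N_p\setminus K^c$. Differentiating $H_{A,K}$ in $\theta=(r_A,b_{K^c},s_F)$: the $r_A$-block gives $\{\Psi\circ\sgn(s)\}_A$ (here $\sgn(s_{(j)})$ is locally constant off the said null set), the $b_{K^c}$-block gives $\Psi_{K^c}$, and, since $ds_{K^c}=0$ and $W$ is diagonal, the $s_F$-block gives $\lambda W\,ds=\lambda W_I\,ds_I=\lambda W_IT\,ds_F$. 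Thus $dH_{A,K}=M(s,A,K;\lmd)\,d\theta$ and $d\tdH_{A,K}=\sqn(X^\trans)^+M\,d\theta$, an $n\times n$ matrix with determinant $J_{A,K}(s;\lambda)$ of \eqref{eq:jacobinfty}. By the inverse function theorem, bijectivity plus smoothness of $\tdH_{A,K}$ forces $J_{A,K}\ne0$ a.e.\ on the occupied region, so the change-of-variables formula applied to $g_n$ gives $d\mu_{A,K}=g_n(\tdH_{A,K}(r_A,b_{K^c},s;\beta_0,\lambda))\,|J_{A,K}(s;\lambda)|\,d\theta=f_{A,K}(r_A,b_{K^c},s)\,d\theta$, which is \eqref{eq:nforminfty}; summing the pushforwards over $(A,K)$ reproduces the law of $\veps/\sqn$ and gives total mass one.

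The main obstacle is Step~3 together with the null-set accounting. On the differential side, one must justify carefully that the derivative in $s$ produces only the term $\lambda W_IT$: this needs $\sgn(s_{K_j})$ to be locally constant (so $s_k\ne0$ for $k\in K$) and the constraint surface $\{\|s_{K_j}\|_1=1\}$ to be smooth there, and it requires constructing $T$ explicitly from a basis of the tangent space of $\calM_{A,K}$ adapted to the coordinate block $K^c$ (an $\ell_1$ analogue of Lemma~\ref{lm:diffHA}). On the measure side, one must show that the exceptional configurations — ties in $\argmax_{k\in\calG_j}|\hbeta_k|$, coordinates $s_k=0$ with $k\in K$ (where $\sgn$ and $\|\cdot\|_1$ are nonsmooth), points where $J_{A,K}$ vanishes, and the interfaces between different pieces $\Omg_{A,K}$ — all pull back to Lebesgue-null subsets of $\R^n$; this follows because, under a.e.\ uniqueness, the inverse map is a bijection sending these lower-dimensional sets to measure-zero sets, exactly as in the $\alp<\infty$ case, so the piecewise change of variables covers almost all of $\R^n$.
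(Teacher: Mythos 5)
Your proposal is correct and follows essentially the same route as the paper's proof in Appendix~\ref{sec:proofdeninf}: rewrite $H$ on $\Omg_{A,K}$ via the representation \eqref{eq:equivrep}, count the constraints $ds_{K^c}=0$, $R_Ids_I=0$, $\langle\sgn(s_{K_j}),ds_{K_j}\rangle=0$ to get the $q$-dimensional local parameterization $ds_I=T\,ds_F$, assemble $dH_{A,K}=M(s,A,K;\lmd)\,d\theta$, and push $g_n$ forward through the bijection $\tdH_{A,K}$ exactly as in the proof of Theorem~\ref{thm:densityhigh}. Your additional bookkeeping on the partition $\{E_{A,K}\}$ and the null-set accounting is consistent with what the paper leaves implicit by citing Lemma~\ref{lm:bijection} and the earlier proof.
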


The sufficient condition for solution uniqueness in this case is discussed in Appendix~\ref{sec:proofsinfty}.
The density $f_{A,K}$ is defined in terms of the parameterization $\theta$.
Suppose that $\Gamma\subset \R^{|A|+|K^c|}$ is a subset of the product cone  $\prod_{j\in A}{\calC_j}$ and
$\Phi=\{\Phi(s_F):s_F\in\Delta\}\subset \calM_{A,K}$ is a $q$-surface in $\R^p$
with parameter domain $\Delta\subset \R^q$.
Then we have
\begin{equation}\label{eq:probabilityinfty}
\Prob\left\{(\hgamma_A, \hbeta_{K^c})\in \Gamma, S\in \Phi, \calA = A, \calK=K\right\}
%= \int_{\Gamma \times \Phi} d\mu_{A,K}
=\int_{\Gamma \times \Delta} f_{A,K}(r_A,b_{K^c},\Phi(s_F)) d\theta,
\end{equation}
which interprets the differential form \eqref{eq:nforminfty}.
The density here differs from that in \eqref{eq:nform}
only in the Jacobian term. Clearly, the same importance sampling method (Algorithm~\ref{alg:is})
can be used here due to the cancellation of the Jacobian.

\subsection{Concluding remarks}

By augmenting the sample space to that of $(\hbeta,S)$, we have derived a closed-form
density for the sampling distribution of the augmented block Lasso estimator. 
Given the density, we have demonstrated the use of importance sampling in group inference,
which can be orders of magnitude more efficient than the corresponding parametric bootstrap.
For high-dimensional data, sparsity seems an essential assumption for inference,
and consequently, an inference method is often built upon a non-regular penalized estimator.
It is unlikely to work out an exact pivot in this
setting, and thus, simulation-based approaches have been widely used. Our work of estimator augmentation
opens the door to a large class of Monte Carlo methods for such simulations,
which in our view is the main intellectual contribution.
Due to the complexity of the sample space of an augmented estimator, development of
efficient Monte Carlo algorithms is a highly demanding job and an interesting future direction. 

%\mycmt{Possible generalization to convex loss including generalized linear models via
%Bayesian interpretation. Could be vaguely linked to the randomization in \cite{Tian16}.}

%\newpage
%\renewcommand{\thefigure}{A.\arabic{figure}}  
%\setcounter{figure}{0}  % reset counter 
\renewcommand{\theremark}{\thesection.\arabic{remark}}
\setcounter{remark}{0}

\myappendix
\section{Uniqueness of The Block Lasso}\label{sec:app_uniqueness}

\subsection{Auxiliary lemmas}\label{sec:prooflemmas}

\begin{lemma}\label{lm:rhoStoS}
If $\alp\in(1,\infty)$, then $\eta$ is a bijection that maps $\mbS_{\alp^*}$ onto $\mbS_{\alp}$
and $\langle \eta(v), v \rangle=1$ for any $v\in \mbS_{\alp^*}$.
\end{lemma}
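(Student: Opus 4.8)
The plan is to reduce the whole statement to a one-variable computation, exploiting that $\eta$ acts coordinatewise and that the exponents satisfy the elementary relations forced by conjugacy: from $\frac1\alp+\frac1{\alp^*}=1$ one gets $\alp+\alp^*=\alp\alp^*$, and with $\rho=\alp^*/\alp$ this yields $\rho\alp=\alp^*$, $\alp^*/\rho=\alp$, and $\rho+1=\alp^*$. First I would note that, as a scalar map, $\eta\colon\R\to\R$, $x\mapsto\sgn(x)|x|^{\rho}$, is continuous and strictly increasing (here $\rho>0$ because $\alp\in(1,\infty)$), hence a bijection of $\R$ with inverse $\eta^{-1}(x)=\sgn(x)|x|^{1/\rho}$; applied coordinatewise, $\eta\colon\R^m\to\R^m$ is then a bijection with coordinatewise inverse $\eta^{-1}$. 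It remains only to check that this bijection carries $\mbS_{\alp^*}$ onto $\mbS_{\alp}$.

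For that I would compute both ``directions'' of the norm identity. For any $v\in\R^m$,
\[
\|\eta(v)\|_{\alp}^{\alp}=\sum_{i=1}^m\bigl(|v_i|^{\rho}\bigr)^{\alp}=\sum_{i=1}^m|v_i|^{\alp^*}=\|v\|_{\alp^*}^{\alp^*},
\]
using $\rho\alp=\alp^*$, so $v\in\mbS_{\alp^*}$ implies $\eta(v)\in\mbS_{\alp}$. Symmetrically, for $w\in\R^m$,
\[
\|\eta^{-1}(w)\|_{\alp^*}^{\alp^*}=\sum_{i=1}^m\bigl(|w_i|^{1/\rho}\bigr)^{\alp^*}=\sum_{i=1}^m|w_i|^{\alp}=\|w\|_{\alp}^{\alp},
\]
using $\alp^*/\rho=\alp$, so $w\in\mbS_{\alp}$ implies $\eta^{-1}(w)\in\mbS_{\alp^*}$. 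Combined with the previous paragraph this shows that $\eta$ restricted to $\mbS_{\alp^*}$ is a bijection onto $\mbS_{\alp}$, with inverse the restriction of $\eta^{-1}$.

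Finally, for the inner-product identity I would simply expand: for $v\in\mbS_{\alp^*}$,
\[
\langle\eta(v),v\rangle=\sum_{i=1}^m\sgn(v_i)|v_i|^{\rho}\,v_i=\sum_{i=1}^m|v_i|^{\rho+1}=\sum_{i=1}^m|v_i|^{\alp^*}=\|v\|_{\alp^*}^{\alp^*}=1,
\]
using $\rho+1=\alp^*$. (Alternatively one could invoke the equality case of H\"older's inequality, but the direct computation is shorter.) I do not anticipate any genuine obstacle here; the only point demanding care is the bookkeeping of exponents --- verifying each of $\rho\alp=\alp^*$, $\alp^*/\rho=\alp$, and $\rho+1=\alp^*$ --- together with the observation that $\rho>0$, which is what makes $\eta$ a bona fide monotone bijection of $\R$ and which fails precisely at the excluded endpoints $\alp=1,\infty$.
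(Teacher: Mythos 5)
Your proof is correct and follows essentially the same route as the paper's: the coordinatewise norm computation using $\rho\alp=\alp^*$ (and its symmetric counterpart for $\eta^{-1}$), followed by the direct expansion of $\langle\eta(v),v\rangle$ via $\rho+1=\alp^*$. You merely spell out the exponent bookkeeping and the "similarly" step that the paper leaves implicit.
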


\begin{proof}
 For any $v=(v_i)\in \mbS_{\alp^*}$,
\begin{align*}
\|\eta(v)\|_{\alp}^{\alp}=\|(v_i^\rho)\|_{\alp}^{\alp}=\sum_i |v_i|^{\alp^*}=1.
\end{align*}
Similarly, we can show that $\eta^{-1}(u)\in\mbS_{\alp^*}$ for any $u\in\mbS_{\alp}$.
By definition, $\rho+1=\alp^*$. Then, straightforward calculation leads to
\begin{align*}
\la\eta(v), v \ra =\la \sgn(v) |v|^\rho, \sgn(v) |v|\ra = \sum_i |v_i|^{\rho+1} =\sum_i |v_i|^{\alp^*} =1.
\end{align*}
Here, $|\cdot|$ and $\sgn(\cdot)$ are applied on $v$ in the sense of \eqref{eq:fonvec}.
\end{proof}

\begin{lemma}\label{lm:subdiff}
Let $h(v)=\|v\|_{\alp}$ for $\alp\in(1,\infty)$ and $v\in\R^m$. 
If $v\ne 0$, then 
\begin{equation}\label{eq:gradh}
\grad h(v)=\eta^{-1}(\tdv)\in\mbS^{m-1}_{\alp^*}
\end{equation} 
with $\tdv=v/\|v\|_{\alp} \in \mbS^{m-1}_{\alp}$. If $v=0$, the subdifferential of $h$
\begin{equation}\label{eq:subgradh}
\partial h(0)=\{u \in \R^m:\|u\|_{\alp^*}\leq 1\}.
\end{equation}
\end{lemma}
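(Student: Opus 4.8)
The plan is to handle the two cases separately, in each case combining a direct computation with a standard fact from convex analysis.

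\emph{The case $v\ne 0$.} Since $\alp>1$, every coordinate map $v_i\mapsto |v_i|^{\alp}$ is continuously differentiable on $\R$, and $\sum_i|v_i|^{\alp}>0$ in a neighbourhood of $v$; hence $h(v)=(\sum_i|v_i|^{\alp})^{1/\alp}$ is $C^1$ near $v$. First I would compute the partials by the chain rule, obtaining
\[
\grad h(v)_i=\sgn(v_i)\,|v_i|^{\alp-1}\,\|v\|_{\alp}^{1-\alp}=\sgn(v_i)\bigl(|v_i|/\|v\|_{\alp}\bigr)^{\alp-1}.
\]
The next step is exponent bookkeeping: from $\tfrac1\alp+\tfrac1{\alp^*}=1$ one gets $\alp-1=\alp/\alp^*=1/\rho$, so with $\tdv=v/\|v\|_{\alp}\in\mbS^{m-1}_{\alp}$ the display reads $\grad h(v)_i=\sgn(\tdv_i)|\tdv_i|^{1/\rho}=\eta^{-1}(\tdv)_i$, which is \eqref{eq:gradh}. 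Membership $\grad h(v)\in\mbS^{m-1}_{\alp^*}$ then follows either directly, since $\sum_i|\tdv_i|^{(\alp-1)\alp^*}=\sum_i|\tdv_i|^{\alp}=1$, or at once from Lemma~\ref{lm:rhoStoS}, which states that $\eta^{-1}$ maps $\mbS_{\alp}$ onto $\mbS_{\alp^*}$. Finally, since $h$ is convex and differentiable at $v$, its subdifferential there is the singleton $\{\grad h(v)\}$, which closes this case.

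\emph{The case $v=0$.} Here I would argue straight from the definition: $u\in\partial h(0)$ iff $h(w)\ge h(0)+\la u,w\ra$ for all $w\in\R^m$, i.e. iff $\la u,w\ra\le\|w\|_{\alp}$ for every $w$. For ``$\supseteq$'', given $\|u\|_{\alp^*}\le 1$, Hölder's inequality yields $\la u,w\ra\le\|u\|_{\alp^*}\|w\|_{\alp}\le\|w\|_{\alp}$, so $u\in\partial h(0)$. For ``$\subseteq$'', taking the supremum of $\la u,w\ra\le\|w\|_{\alp}$ over $\|w\|_{\alp}\le 1$ identifies the left-hand side with the dual norm of $u$ and gives $\|u\|_{\alp^*}\le 1$; this uses that $\ell_{\alp^*}$ is the dual of $\ell_{\alp}$, the equality case of Hölder exhibiting a maximizing $w$. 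Together these give \eqref{eq:subgradh}.

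\emph{Main obstacle.} There is no real obstacle; the argument is routine. The only points that need care are (i) verifying the identities $\alp-1=1/\rho=\alp/\alp^*$ and $(\alp-1)\alp^*=\alp$, so that the computed gradient is precisely $\eta^{-1}(\tdv)$ and lands on $\mbS_{\alp^*}$, and (ii) in the $v=0$ case, proving both inclusions — in particular that the dual-ball bound is attained — so that $\partial h(0)$ equals the $\ell_{\alp^*}$-ball rather than being merely contained in it.
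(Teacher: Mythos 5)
Your proposal is correct and follows essentially the same route as the paper: the same chain-rule computation of $\partial h/\partial v_i$ with the exponent identity $\alp-1=1/\rho$, the same appeal to Lemma~\ref{lm:rhoStoS} for membership in $\mbS^{m-1}_{\alp^*}$, and the same H\"older/dual-norm argument for $\partial h(0)$. The only difference is that you spell out both inclusions in the $v=0$ case (including attainment of the dual bound), which the paper compresses into a terse ``if and only if.''
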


\begin{proof}
For $v\ne 0$, 
\begin{align*}
\frac{\partial h}{\partial v_i} =\frac{\sgn(v_i)|v_i|^{\alp-1}}{\|v\|_{\alp}^{\alp-1}}
=\sgn(\tdv_i) |\tdv_i|^{1/\rho} = \eta^{-1}(\tdv_i),
\end{align*}
using the simple fact that $\alp-1=1/\rho$. 
Since by definition $\tdv\in\mbS_{\alp}$, Lemma~\ref{lm:rhoStoS} implies that 
$\eta^{-1}(\tdv)\in\mbS_{\alp^*}$. This proves \eqref{eq:gradh}. By H\"{o}lder's inequality,
\begin{equation*}
\la u, v\ra \leq \|u\|_{\alp^*} \|v\|_{\alp} \leq h(v), \quad \forall v\in\R^m,
\end{equation*}
if and only if $\|u\|_{\alp^*}\leq 1$, which implies \eqref{eq:subgradh}.
\end{proof}

\begin{lemma}[\cite{Negahban11}, Lemma 1]\label{lm:subdiffLinfty}
Let $h(v)=\|v\|_{\infty}$ for $v\in \R^m$ and $K=\argmax_{i}|v_i|\subset \N_m$.
For $v\ne 0$, $u\in \partial h(v)$ if and only if
\begin{equation*}
u_i=\begin{cases}
t_i\sgn(v_i) & i\in K \\
0 & \text{otherwise}
\end{cases}
\end{equation*}
for some $(t_i)_{i\in K}$ so that $\sum_i t_i=1$ and $t_i\geq 0$.
For $v=0$, 
\begin{equation}\label{eq:subdiffLinftyatzero}
\partial h(0)=\{u\in \R^m: \|u\|_1 \leq 1\}.
\end{equation}
\end{lemma}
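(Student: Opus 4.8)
The plan is to reduce the computation to the variational description of the subdifferential of a norm, together with the identification of $\|\cdot\|_1$ as the dual norm of $\|\cdot\|_\infty$; this duality follows from H\"older's inequality in the same way it was used in the proof of Lemma~\ref{lm:subdiff}. Throughout I would write $h(v)=\|v\|_\infty$ and recall that $u\in\partial h(v)$ means $h(w)\geq h(v)+\la u,w-v\ra$ for all $w\in\R^m$.

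First I would dispose of the case $v=0$, which is immediate: the subgradient inequality reads $\|w\|_\infty\geq\la u,w\ra$ for every $w\in\R^m$, and by positive homogeneity this is equivalent to $\sup\{\la u,w\ra:\|w\|_\infty\leq 1\}\leq 1$, i.e.\ to $\|u\|_1\leq 1$, which is exactly \eqref{eq:subdiffLinftyatzero}.

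For $v\neq 0$ I would proceed in two stages. The first stage records the standard reduction that $u\in\partial h(v)$ if and only if $\|u\|_1\leq 1$ and $\la u,v\ra=\|v\|_\infty$: testing the subgradient inequality at $w=0$ and at $w=2v$ pins down $\la u,v\ra=\|v\|_\infty$, substituting this back gives $\|w\|_\infty\geq\la u,w\ra$ for all $w$ and hence $\|u\|_1\leq 1$, while the converse inclusion is one line using H\"older. The second stage extracts the explicit description from these two conditions. Setting $M=\|v\|_\infty>0$ and $K=\argmax_i|v_i|$, the chain
\begin{equation*}
M=\la u,v\ra\leq\sum_i|u_i|\,|v_i|\leq M\sum_i|u_i|=M\|u\|_1\leq M
\end{equation*}
must consist entirely of equalities, and reading off each equality yields, in turn, $\|u\|_1=1$, then $u_iv_i=|u_i|\,|v_i|$ for every $i$ (so $u_i$ and $v_i$ share a sign whenever $u_i\neq 0$), and then $|u_i|(M-|v_i|)=0$ for every $i$ (so $u_i=0$ for $i\notin K$). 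Putting $t_i=|u_i|$ for $i\in K$ and using $|v_i|=M>0$ there gives precisely $u_i=t_i\sgn(v_i)$ on $K$, $u_i=0$ off $K$, $t_i\geq 0$, $\sum_{i\in K}t_i=1$; the converse direction, that any such $u$ satisfies $\|u\|_1=1$ and $\la u,v\ra=M$, is a one-line computation.

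I do not expect a genuine obstacle here: the argument is elementary once the right packaging is chosen, and the only point that needs care is the equality analysis of the displayed chain, where one must keep track of which inequality in the chain produces which of the three structural conclusions. (This statement is Lemma~1 of \cite{Negahban11}, so the above is essentially a reproduction of a known short proof, included for completeness.)
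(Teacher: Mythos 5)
Your proof is correct. The paper itself gives no proof of this lemma --- it is simply quoted from \cite{Negahban11} (Lemma 1) --- so there is nothing to compare against; your argument is the standard one (reduce to $\|u\|_1\le 1$ together with $\la u,v\ra=\|v\|_\infty$ via the dual-norm characterization, then read off the structure of $u$ from the equality chain), and it is consistent in spirit with how the paper proves the companion result Lemma~\ref{lm:subdiff} for $\alp\in(1,\infty)$ using H\"older's inequality. The equality analysis is handled carefully and each of the three structural conclusions ($\|u\|_1=1$, sign alignment, support contained in $K$) is correctly attributed to its inequality, so there is no gap.
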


%Lemma~\ref{lm:subdiffLinfty} is from Lemma 1 in \cite{Negahban11}.

\subsection{Characterization of solutions}\label{sec:characterize}

\begin{proof}[Proof of Lemma~\ref{lm:fittedandS}]
Suppose that $\hbeta^{(1)}$ and $\hbeta^{(2)}$ are two minimizers of $L(\beta;\alp)$ such that
$X\hbeta^{(1)}\ne X\hbeta^{(2)}$. The convexity of $L$ implies that $L(\hbeta^{(1)};\alp)=L(\hbeta^{(2)};\alp)=L^*$.
Since $\|x\|^2$ is strictly convex in $x$, for any $c\in(0,1)$
\begin{equation*}
\|y-X [c\hbeta^{(1)} +(1-c)\hbeta^{(2)}] \|^2 
< c \|y-X\hbeta^{(1)}\|^2+(1-c) \|y-X\hbeta^{(2)}\|^2
\end{equation*}
by the hypothesis that $X\hbeta^{(1)}\ne X\hbeta^{(2)}$. Therefore,
\begin{align*}
L(c\hbeta^{(1)}+(1-c)\hbeta^{(2)};\alp)< c L(\hbeta^{(1)};\alp)+(1-c)L(\hbeta^{(2)};\alp) = L^*,
\end{align*}
which is contradictory to the assumption that the minimum of $L$ is $L^*$.
The uniqueness of $S$ is an immediate consequence of the uniqueness of $X\hbeta$
and that
\begin{equation}\label{eq:solveS}
S=(n\lambda{W})^{-1}X^{\trans}(y-X\hbeta)
\end{equation}
by the KKT conditions \eqref{eq:KKTwhole}.
\end{proof}

We will first establish sufficient conditions for solution uniqueness for $\alp<\infty$,
while deferring the case $\alp=\infty$ to Appendix~\ref{sec:proofsinfty}.
We start with more explicit expressions for $X\hbeta$ and $\hbeta$. 
Write the KKT condition for each block in \eqref{eq:KKTwhole},
\begin{equation}\label{eq:KKT}
\frn X_{(j)}^{\trans} X \hbeta + \lambda w_j S_{(j)} = \frn X_{(j)}^{\trans} y, \quad j=1,\ldots,J.
\end{equation}
Define
\begin{equation}\label{eq:equicor}
\calE=\left\{j\in\N_J:\frac{1}{w_j n} \left\|X_{(j)}^{\trans}(y-X\hbeta)\right\|_{\alp^*}
=\lambda\|S_{(j)}\|_{\alp^*}=\lambda\right\},
\end{equation}
and by \eqref{eq:KKT} and \eqref{eq:Sdef}, $\hbeta_{(-\calE)}=0$.
%If we take $w_j=(p_j)^{1/\alp^*}$ the left side is the normalized length of the correlations between $X_k$, $k\in\calG_j$, and the residual vector $y-X\hbeta$. Thus, the set $\calB$ may be called the equicorrelation set.
Now the $\calE$ block of \eqref{eq:KKT} with $\hbeta_{(-\calE)}=0$ reads
\begin{equation}\label{eq:KKTBblock}
\frn X_{(\calE)}^{\trans}(y-X_{(\calE)}\hbeta_{(\calE)})=\lambda {W}_{(\calE\calE)}S_{(\calE)},
\end{equation}
which shows that ${W}_{(\calE\calE)}S_{(\calE)}\in \row(X_{(\calE)})$. Thus, we have
\begin{equation}\label{eq:exprofS}
{W}_{(\calE\calE)}S_{(\calE)}=X_{(\calE)}^\trans (X_{(\calE)}^\trans)^+ {W}_{(\calE\calE)}S_{(\calE)},
\end{equation}
since the right side is the projection of ${W}_{(\calE\calE)}S_{(\calE)}$ onto $\row(X_{(\calE)})$.
Plugging the above identity into \eqref{eq:KKTBblock}, we arrive at
\begin{equation}\label{eq:blockmiddle}
 X_{(\calE)}^{\trans}X_{(\calE)} \hbeta_{(\calE)}
 =X_{(\calE)}^{\trans}\left[y-n\lambda (X_{(\calE)}^\trans)^+{W}_{(\calE\calE)}S_{(\calE)}\right].
 \end{equation}
A solution to the above equation is 
\begin{align*}
 \hbeta_{(\calE)}
 &=(X_{(\calE)}^{\trans}X_{(\calE)})^+ X_{(\calE)}^{\trans}\left[y-n\lambda (X_{(\calE)}^\trans)^+{W}_{(\calE\calE)}S_{(\calE)}\right] \\
 &=(X_{(\calE)})^+\left[y-n\lambda (X_{(\calE)}^\trans)^+{W}_{(\calE\calE)}S_{(\calE)}\right].
 \end{align*}
Then by the uniqueness of the fit $X\hbeta$ (Lemma~\ref{lm:fittedandS}), for all solutions $\hbeta$ we have
\begin{equation}\label{eq:fitexpression}
X\hbeta=X_{(\calE)}\hbeta_{(\calE)}=X_{(\calE)}(X_{(\calE)})^+\left[y-n\lambda (X_{(\calE)}^\trans)^+{W}_{(\calE\calE)}S_{(\calE)}\right]
:=\hy.
\end{equation}
Conversely, since 
\begin{equation*}
X_{(\calE)}^\trans X_{(\calE)}X_{(\calE)}^+=X_{(\calE)}^\trans(X_{(\calE)}X_{(\calE)}^+)^\trans
=(X_{(\calE)}X_{(\calE)}^+X_{(\calE)})^\trans=X_{(\calE)}^\trans,
\end{equation*}
\eqref{eq:fitexpression} implies \eqref{eq:blockmiddle} and thus the $\calE$ block
of the KKT conditions \eqref{eq:KKTBblock}. Therefore, \eqref{eq:fitexpression} with $\hbeta_{(-\calE)}=0$
is {\em sufficient and necessary} for $\hbeta$ to be a block Lasso solution.
To make the relation between $\hbeta_{(\calE)}$ and $S_{(\calE)}$ more explicit, write
\begin{equation}\label{eq:etaSj}
\hbeta_{(j)}=\|\hbeta_{(j)}\|_{\alp}\eta(S_{(j)})=\hgamma_j\eta(S_{(j)}) \quad\text{ for } j\in\calE,
\end{equation}
which follows from \eqref{eq:Sdef}.
For $B\subset \N_p$, let $\R^B$ be $|B|$-dimensional Euclidean space with coordinates index by $B$
so that a vector $v\in\R^B$ has components $v_j$, $j\in B$. Similarly, $\R^{m\times B}$ denotes the space
of matrices with columns indexed by $B$. Put
\begin{equation}\label{eq:defgamZ}
\quad Z_j=X_{(j)}\eta(S_{(j)}) \in\R^n, 
\quad Z=(Z_j)_{j\in \calE} \in \R^{n\times \calE}.
\end{equation}
Then \eqref{eq:fitexpression} can be rewritten
\begin{equation*}
\sum_{j\in\calE} \hgamma_j X_{(j)}\eta(S_{(j)})=Z\hgamma_{\calE}=\hy.
\end{equation*}
Now we have the following characterization of the block Lasso solutions:

\begin{lemma}\label{lm:conditionuniq}
A necessary and sufficient condition for $\hbeta$ to be a block Lasso solution \eqref{eq:blocklassodef} 
with $\alp\in(1,\infty)$ is 
\begin{equation}
Z\hgamma_{\calE}=\hy\quad\text{and}\quad\hbeta_{(-\calE)}=0.
\end{equation}
%where $\hgamma,Z,\hy$ and $\calE$ are defined in \eqref{eq:defgamZ}, \eqref{eq:fitexpression}, and \eqref{eq:equicor}. 
Moreover, $\calE$, $\hy$ and $Z$ are unique for any $y$, $X$ and $\lambda>0$.
\end{lemma}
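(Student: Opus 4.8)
The plan is to package the chain of identities already derived in this subsection into a two-way implication, and then settle the ``moreover'' claim from Lemma~\ref{lm:fittedandS}. For the necessity, I would take an arbitrary block Lasso solution $\hbeta$: the block KKT conditions \eqref{eq:KKT} together with \eqref{eq:Sdef} give $\hbeta_{(-\calE)}=0$; the $\calE$-block of \eqref{eq:KKT} is then \eqref{eq:KKTBblock}, which the projection identity \eqref{eq:exprofS} turns into \eqref{eq:blockmiddle} and hence into the fit formula \eqref{eq:fitexpression}, i.e. $X\hbeta=\hy$; and substituting the structural relation $\hbeta_{(j)}=\hgamma_j\eta(S_{(j)})$ from \eqref{eq:etaSj} (trivial when $\hbeta_{(j)}=0$, since then $\hgamma_j=0$) together with the definition $Z_j=X_{(j)}\eta(S_{(j)})$ from \eqref{eq:defgamZ} rewrites $X\hbeta=\hy$ as $Z\hgamma_{\calE}=\hy$. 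For the sufficiency I would run this backwards: if $\hbeta_{(-\calE)}=0$ and $\hbeta_{(j)}=\hgamma_j\eta(S_{(j)})$ for $j\in\calE$ with $Z\hgamma_{\calE}=\hy$, then $X\hbeta=Z\hgamma_{\calE}=\hy$; since the full system \eqref{eq:KKTwhole} depends on $\hbeta$ only through $X\hbeta$ and holds for a genuine solution with the same fit and the same subgradient $S$ (Lemma~\ref{lm:fittedandS}), it holds for this $\hbeta$ as well, while \eqref{eq:Sdef} holds by construction; as the KKT conditions are sufficient for optimality, $\hbeta$ is a solution.

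For the ``moreover'' statement I would invoke Lemma~\ref{lm:fittedandS} once more: every minimizer shares the same fit $X\hbeta$ and the same subgradient $S$, so $\hy=X\hbeta$ is unique. Taking $\|\cdot\|_{\alp^*}$ on both sides of \eqref{eq:KKT} shows $\calE=\{j\in\N_J:\|S_{(j)}\|_{\alp^*}=1\}$, a function of $S$ alone, hence unique; and $Z=(X_{(j)}\eta(S_{(j)}))_{j\in\calE}$ is a deterministic function of $X$, the unique $S$, and the unique $\calE$, so it too is unique.

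I do not expect a real obstacle here --- the work is bookkeeping built on the equations already on the page. The one subtlety requiring care is the reading of the condition in the sufficiency direction: the identity \eqref{eq:etaSj} must be kept in force, because a vector with the correct block norms but the wrong block directions could satisfy $Z\hgamma_{\calE}=\hy$ while failing $X\hbeta=\hy$. Making explicit that \eqref{eq:etaSj}, which is just the translation of the subgradient constraint \eqref{eq:Sdef} into an equation for $\hbeta_{(j)}$ in terms of $\hgamma_j$ and $S_{(j)}$, is part of the characterization and is automatic for any genuine solution removes that ambiguity.
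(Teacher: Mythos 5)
Your proposal is correct and follows essentially the same route as the paper: necessity via the KKT chain \eqref{eq:KKT}--\eqref{eq:fitexpression} together with the structural relation \eqref{eq:etaSj}, sufficiency by reversal, and the ``moreover'' claim from Lemma~\ref{lm:fittedandS}. The only cosmetic difference is that in the converse direction the paper verifies \eqref{eq:fitexpression}$\Rightarrow$\eqref{eq:blockmiddle} explicitly via the pseudo-inverse identity $X_{(\calE)}^\trans X_{(\calE)}X_{(\calE)}^+=X_{(\calE)}^\trans$, whereas you observe directly that \eqref{eq:KKTwhole} depends on $\hbeta$ only through $X\hbeta$; both are valid, and your remark that \eqref{eq:etaSj} must be read as part of the characterization matches the paper's implicit convention.
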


The uniqueness of $(\hy,\calE,Z)$ is an immediate consequence of Lemma~\ref{lm:fittedandS}.
So non-uniqueness can only come from $\hgamma_{\calE}$ when 
the linear system $Zx=\hy$ has multiple solutions for $x$, which happens only if $\nul(Z)\ne\{0\}$.
Therefore, $\hbeta$ is a block Lasso solution if and only if 
\begin{equation}\label{eq:characterize}
\hbeta_{(-\calE)}=0 \quad\text{and}\quad \hgamma_{\calE}=Z^+ \hy + \gamma, 
\end{equation}
provided that
\begin{equation}\label{eq:condgamma}
\gamma \in \nul(Z)\subset \R^{\calE}\quad\text{and}\quad (Z^+ \hy + \gamma)_j \geq 0\;\text{ for $j\in\calE$.} 
\end{equation}

\subsection{Proof of sufficiency}\label{sec:proofuniq}

If $\nul(Z)=\{0\}$, then $\hbeta$ is uniquely given by \eqref{eq:characterize} with $\gamma=0$.
Note that in this case $\hgamma_j=(Z^+ \hy)_j$ is necessarily nonnegative as 
there always exists a solution to the block Lasso problem. 
Furthermore, $|\calE|\leq n$ and thus this solution has at most $(n\wedge J)$ nonzero blocks.
This leads to our first sufficient condition for the uniqueness of $\hbeta$.

\begin{proposition}\label{prop:uniqsuff}
For any $y$, $X$, and $\lambda>0$, if $\nul(Z)=\{0\}$, 
then the block Lasso solution for $\alp\in(1,\infty)$ is uniquely given by
\begin{equation}\label{eq:uniqcharacterize}
\hbeta_{(-\calE)}=0, \quad\quad \hgamma_{\calE}=(Z^\trans Z)^{-1} Z^\trans \hy,
\quad\text{and}\quad \hbeta_{(j)}=\hgamma_j\eta(S_{(j)}) \text{ for } j\in\calE,
\end{equation}
Furthermore, $|G(\hbeta)|\leq n\wedge J$.
%where $\hgamma,Z,\hy$ and $\calE$ are defined in \eqref{eq:defgamZ}, \eqref{eq:fitexpression}, and \eqref{eq:equicor}. 
\end{proposition}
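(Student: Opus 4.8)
The plan is to read the result off directly from the characterization of block Lasso solutions already obtained. By Lemma~\ref{lm:conditionuniq}, the triplet $(\hy,\calE,Z)$ is determined by $(y,X,\lambda)$ alone, and every solution $\hbeta$ must satisfy $\hbeta_{(-\calE)}=0$ together with $Z\hgamma_{\calE}=\hy$. The general solution of this linear system is $\hgamma_{\calE}=Z^+\hy+\gamma$ for $\gamma\in\nul(Z)$, as recorded in \eqref{eq:characterize}--\eqref{eq:condgamma}. The hypothesis $\nul(Z)=\{0\}$ therefore forces $\gamma=0$, so $\hgamma_{\calE}=Z^+\hy$ is the unique candidate for the block-norm vector, and uniqueness of $\hbeta$ will follow once we know this candidate is actually attained.

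First I would argue that $Z^+\hy$ is genuinely realized. The objective in \eqref{eq:blocklassodef} is continuous and coercive, so a minimizer $\hbeta$ exists; for any such minimizer the vector $(\|\hbeta_{(j)}\|_{\alp})_{j\in\calE}$ has nonnegative entries by definition, and by the previous paragraph it must equal $Z^+\hy$. Hence $(Z^+\hy)_j\geq 0$ for every $j\in\calE$ automatically, so the nonnegativity side-condition in \eqref{eq:condgamma} is vacuous here and every block Lasso solution has block norms $\hgamma_{\calE}=Z^+\hy$.

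Next I would put $Z^+\hy$ in the stated closed form: since $\nul(Z)=\{0\}$, the columns of the $n\times|\calE|$ matrix $Z$ are linearly independent, so $Z^\trans Z$ is invertible and $Z^+=(Z^\trans Z)^{-1}Z^\trans$, giving $\hgamma_{\calE}=(Z^\trans Z)^{-1}Z^\trans\hy$. Each active block is then recovered through $\hbeta_{(j)}=\hgamma_j\eta(S_{(j)})$ for $j\in\calE$ from \eqref{eq:etaSj} (equivalently \eqref{eq:Sdef}), using the uniqueness of $S$ (Lemma~\ref{lm:fittedandS}). This yields formula \eqref{eq:uniqcharacterize} and the uniqueness of $\hbeta$. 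For the cardinality bound, linear independence of the columns of $Z\in\R^{n\times|\calE|}$ gives $|\calE|\leq n$, while $\calE\subset\N_J$ gives $|\calE|\leq J$, so $|\calE|\leq n\wedge J$; since $\hbeta_{(-\calE)}=0$ implies $G(\hbeta)\subseteq\calE$, we conclude $|G(\hbeta)|\leq n\wedge J$.

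I do not expect a real obstacle at this stage: all the substantive work, namely the KKT manipulation leading to \eqref{eq:fitexpression} and the characterization in Lemma~\ref{lm:conditionuniq}, is already done. The only point deserving a word of care is the automatic nonnegativity of the entries of $Z^+\hy$, which I would justify by invoking existence of a minimizer of \eqref{eq:blocklassodef} rather than by any direct computation.
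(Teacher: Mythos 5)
Your proposal is correct and follows essentially the same route as the paper: it reads the result off from the characterization in Lemma~\ref{lm:conditionuniq} and \eqref{eq:characterize}--\eqref{eq:condgamma}, notes that $\nul(Z)=\{0\}$ forces $\gamma=0$ and that nonnegativity of $(Z^+\hy)_j$ is automatic from the existence of a minimizer, and obtains the cardinality bound from $|\calE|\le n\wedge J$. The paper's own justification is exactly this (stated more tersely just before the proposition), so there is nothing to add.
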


In the following, we prove Theorem~\ref{thm:uniqsuff} for $\alp\in(1,\infty)$. Note that the case $\alp=1$
is equivalent to the case $\alp=2$ with $p_j=1$ for all $j$. Thus, this part covers the range
of $\alp\in[1,\infty)$ as in Theorem~\ref{thm:uniqsuff}.
The result for $\alp=\infty$ will be established in next subsection. 

\begin{proof}[Proof of Theorem~\ref{thm:uniqsuff}]
Suppose that $\nul(Z)\ne\{0\}$.
Then for some $i\in \calE$, there is a set $A\subset \calE\setminus\{i\}$ and $|A|\leq n$ such that 
\begin{equation*}
Z_i/w_i = \sum_{j\in A} c_j (Z_j/w_j),
\end{equation*}
where we may assume that $Z_j$, $j\in A$, are linearly independent and $c_j\ne 0$ without loss of generality.
Let $r=y-X\hbeta$ denote the block Lasso residual. By \eqref{eq:KKT}, for every $j\in\calE$,
\begin{align*}
\langle Z_j, r \rangle=\langle X_{(j)}\eta(S_{(j)}),r \rangle = n\lambda w_j \la \eta(S_{(j)}),S_{(j)}\ra
=n\lambda w_j,
\end{align*}
where the last equality follows from Lemma~\ref{lm:rhoStoS} since $S_{(j)}\in\mbS_{\alp^*}$.
Therefore, for $\lambda>0$ we have
\begin{equation*}
1 = \sum_{j\in A} c_j.
\end{equation*}
Note that $Z_j=X_{(j)}\eta(S_{(j)})$ with $S\in\row(XW^{-1})$ and $\|S_{(j)}\|_{\alp^*}=1$ for $j\in \calE$. 
The above equality is thus contradictory to the assumption that the columns of $XW^{-1}$ are in
blockwise general position (Assumption~\ref{as:pj}).
\end{proof}

\subsection{The case of $\alp=\infty$}\label{sec:proofsinfty}

Recall the KKT conditions in \eqref{eq:KKT}. Let $\alp^*=1$ in \eqref{eq:equicor} to define $\calE$.
By definition $\hbeta_{(j)}=0$ for $j\notin \calE$.
For $j\in\calE$ define $\calK_j$ and $\calK^c_j$ as in \eqref{eq:calKj}.
Note that both $\calE$ and $\calK_j$ are unique due to the uniqueness of $S$ and $\hy=X\hbeta$
for any $y$, $X$ and $\lmd>0$ (Lemma~\ref{lm:fittedandS}).
%Recalling that $\hgamma=(\|\hbeta_{(j)}\|_{\infty})_{j\in\N_J}$, 
It follows from \eqref{eq:Sinftyn0} and \eqref{eq:calKj} that
$\hbeta_{\calK_j}=\hgamma_j \sgn(S_{\calK_j})$ for each $j\in\calE$.
Then the fitted value $\hy$ can be expressed as
\begin{equation*}
\hy=X_{(\calE)}\hbeta_{(\calE)}=
\sum_{j\in \calE} \left\{\hgamma_j X_{\calK_j}\sgn(S_{\calK_j})+X_{\calK_j^c}\hbeta_{\calK_j^c}\right\}
=Z\hzeta,
\end{equation*}
where we define
\begin{equation}\label{eq:defZz}
Z=\left[X_{\calK_j}\sgn(S_{\calK_j}) \mid X_{\calK_j^c}\right]_{j\in\calE} \quad\text{and}\quad 
\hzeta=(\hgamma_j,\hbeta_{\calK_j^c})_{j\in \calE}.
\end{equation}
If $\nul(Z)=\{0\}$, then $\hzeta$ and hence $\hbeta$ will be unique and 
$Z$ has at most $n$ columns. 
Now we generalize Proposition~\ref{prop:uniqsuff} %and Theorem~\ref{thm:uniqsuff} 
to the block-$(1,\infty)$ norm regularization. 

\begin{proposition}\label{prop:uniqsuffinfty}
Suppose that $\nul(Z)=\{0\}$. Then 
for any $\lambda>0$ and $y\in\R^n$, 
the solution $\hbeta$ to the block Lasso problem \eqref{eq:blocklassodef} with $\alp=\infty$ is uniquely given by 
\begin{equation*}
\hbeta_{(-\calE)}=0, \quad\quad \hzeta=(Z^\trans Z)^{-1} Z^\trans \hy,
\quad\text{and}\quad \hbeta_{\calK_j}=\hgamma_j \sgn(S_{\calK_j}) \text{ for } j\in\calE.
\end{equation*}
Furthermore, $|G(\hbeta)|\leq |\calE|\leq n\wedge J$
and $|\calE|+\sum_{j\in\calE}|\calK_j^c|\leq n \wedge p$.
\end{proposition}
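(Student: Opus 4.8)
The plan is to mirror the proof of Proposition~\ref{prop:uniqsuff}, with the identity $\hy=Z\hzeta$ from \eqref{eq:defZz} playing the role that $\hy=Z\hgamma_{\calE}$ plays when $\alp<\infty$. First I would record that every quantity entering this identity is solution-independent. By Lemma~\ref{lm:fittedandS} all block Lasso solutions share the subgradient $S$ and the fit $\hy=X\hbeta$; hence $\calE$ (defined via $\|S_{(j)}\|_1=1$), each $\calK_j$ and $\calK_j^c$ (defined via $S_k\ne0$), and therefore the matrix $Z$ of \eqref{eq:defZz} — assembled solely from $X$, $\calE$, the sets $\calK_j$, and $\sgn(S)$ — depend only on $(y,X,\lambda)$. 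So for any two solutions $\hbeta^{(1)},\hbeta^{(2)}$ the vectors $\hzeta^{(i)}=(\hgamma_j^{(i)},\hbeta^{(i)}_{\calK_j^c})_{j\in\calE}$ solve the \emph{same} system $Z\hzeta^{(i)}=\hy$.

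The second step is to invert. Since $\nul(Z)=\{0\}$, $Z$ has full column rank, $Z^\trans Z$ is invertible, and $Z\hzeta=\hy$ has the unique solution $\hzeta=(Z^\trans Z)^{-1}Z^\trans\hy$. Feeding $\hzeta$ back into the equivalent representation of a solution then determines $\hbeta$ completely: $\hbeta_{(-\calE)}=0$, the $\hbeta_{\calK_j^c}$ are the remaining entries of $\hzeta$, and $\hbeta_{\calK_j}=\hgamma_j\sgn(S_{\calK_j})$ as observed just before the proposition. This gives uniqueness. The side conditions $\hgamma_j\ge0$ and $\|\hbeta_{\calK_j^c}\|_\infty\le\hgamma_j$ need no separate check: the objective $L(\cdot;\alp)$ is coercive, so a solution exists, and we have just shown every solution equals the displayed one.

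For the cardinality bounds I would read off the shape of $Z$: it has $n$ rows and, counting columns in \eqref{eq:defZz}, exactly $|\calE|+\sum_{j\in\calE}|\calK_j^c|$ of them, so full column rank forces $|\calE|+\sum_{j\in\calE}|\calK_j^c|\le n$. Moreover $\|S_{(j)}\|_1=1$ for $j\in\calE$ makes each $\calK_j$ nonempty, whence $|\calK_j^c|\le|\calG_j|-1$ and $|\calE|+\sum_{j\in\calE}|\calK_j^c|\le\sum_{j\in\calE}|\calG_j|=|\calG_{\calE}|\le p$; thus this count is $\le n\wedge p$. In particular $|\calE|\le n$, and trivially $|\calE|\le J$, so $|\calE|\le n\wedge J$, and since $\hbeta_{(-\calE)}=0$ forces $G(\hbeta)\subseteq\calE$ we also get $|G(\hbeta)|\le n\wedge J$. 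The one step that requires care — the main, if modest, obstacle — is the bookkeeping that $\calE$, the $\calK_j$, and $Z$ are genuinely independent of the chosen solution (which rests entirely on uniqueness of $S$) and that the column count of $Z$ exactly matches $\dim\hzeta$, so that $Z\hzeta=\hy$ is a well-posed system; everything else is routine and parallel to the $\alp<\infty$ argument.
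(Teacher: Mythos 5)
Your argument is correct and takes essentially the same route as the paper: the paper likewise observes that $\calE$, the $\calK_j$, and hence $Z$ and $\hy$ are solution-independent by Lemma~\ref{lm:fittedandS}, writes $\hy=Z\hzeta$, and concludes that $\nul(Z)=\{0\}$ forces $\hzeta=(Z^\trans Z)^{-1}Z^\trans\hy$ with the cardinality bounds read off the column count of $Z$. You merely spell out the bookkeeping (nonemptiness of each $\calK_j$, the $\le p$ bound, and the existence argument for the sign constraints) that the paper leaves implicit.
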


% more discussion see EAgroup-4.

\section{Remaining Proofs}

\subsection{Proof of Lemma~\ref{lm:diffHA}}\label{sec:pfdHA}

Consider the equality constraints on $s$ that are 
involved in the definition of $\calM_A$ \eqref{eq:defcalSofA}.
Let $Q=Q(X)\in \R^{p\times (p-n)}$ be a matrix whose columns form a basis 
for $\calV^\perp=\nul(XW^{-1})$.
By Assumption~\ref{as:X}, $\rank(Q)=p-n$. The equality constraints on $s$ are
\begin{align}
Q^{\trans} s & =0, \label{eq:rowspace}\\
\|s_{(j)}\|_{\alp^*}& =1 \quad\forall  j\in A, \label{eq:sphere}
\end{align}
where \eqref{eq:rowspace}  is equivalent to $s\in \calV$.
As a result, we have the following constraints on $ds$:
\begin{align}
Q^{\trans} ds & =0, \label{eq:rowspaceds}\\
\langle \eta(s_{(j)}), ds_{(j)} \rangle&=0 \quad\forall  j\in A. \label{eq:sphereds}
\end{align}
By Lemma~\ref{lm:rhoStoS}, we have $\langle \eta(s_{(j)}), s_{(j)} \rangle=1$ for each $j\in A$.
Comparing this to \eqref{eq:rowspace}, we see that \eqref{eq:rowspaceds} and \eqref{eq:sphereds}
represent $(p-n+|A|)$ linearly independent constraints and therefore
$ds$ has $n-|A|$ free coordinates. Let $F=F(s,A)\subset \N_p$ be a chosen set 
so that $s_F$ parameterizes $\calM_A$ in a neighborhood of $s$.
Then there exists a matrix $T=T(\eta(s),A)\in \R^{p\times (n-|A|)}$ of rank $n-|A|\leq p$ such that
$ds=T(\eta(s),A) ds_F$ in this neighborhood. % of $s$ and $\rank(T)=n-|A|$. 

Fixing $\calA=A$ in \eqref{eq:defineH} leads to the differentiation of $H_A$:
\begin{align*}
d H_A  = \sum_{j\in A}\left[(dr_j)\Psi_{(j)} \eta(s_{(j)})+r_j\Psi_{(j)} d \eta(s_{(j)})\right] + \lambda {W} ds.
%& = (\Psi\circ s)_A dr_A + (r\circ \Psi + \lambda W) ds. \label{eq:dHA}
\end{align*}
Since $d \eta(s_{(j)})=D_{(jj)}ds_{(j)}$ for $j\in A$, we arrive at
\begin{equation}\label{eq:dHA}
d H_A  =  (\Psi\circ \eta)_A dr_A + \{(r\circ \Psi)D + \lambda W\} ds. 
\end{equation}
Plugging $ds=Tds_F$ into the above and letting $\theta=(r_A,s_F)\in \R^n$ complete the proof.

\subsection{Proof of Theorem~\ref{thm:densityhigh}}\label{sec:proofhimden}

Put $v=\tdH_A(r_A,s)$ for $A\in\scrA$ and $(r_A,s)\in \Omg_A$.
The differential of $\tdH_A$ leads to
\begin{equation}\label{eq:dvdtheta}
dv=\sqn (X^\trans)^+ M(r_A,s,A;\lambda)  d\theta,
\end{equation}
where $M$ and $\theta=(r_A,s_F)$ are as in \eqref{eq:defM}.
Let $\Phi\subset \calM_A$ be a neighborhood of $s$ with parameter domain $\Delta$, 
i.e. $\Phi=\{\Phi(s_F): s_F\in \Delta \subset \R^k\}$, where $k=n-|A|$. 
% and $\Phi$ is a continuously differentiable mapping.
Suppose that $\Gamma\subset (\R^+)^{|A|}$ is open and contains $r_A$.
Denote by $V\subset\R^n$ the image of $\Gamma\times \Phi$ under $\tdH_A$. 
To establish the $n$-form in \eqref{eq:nform}, it is sufficient to show \eqref{eq:probability} for $u=s_F$.
Under Assumptions~\ref{as:X} and \ref{as:pj}, the bijective nature of $\tdH_A$ \eqref{eq:bieps} implies that 
\begin{align*}
\Prob(\hgamma_A \in \Gamma, S\in \Phi, \calA = A)  =\Prob(\veps/\sqn \in V) 
 = \int_V g_n(v) dv.
\end{align*}
Applying a change of variable in differential forms, we arrive at
\begin{align*}
\int_V g_n(v) dv& =\int_{\Gamma\times \Delta} g_n\left(\tdH_A(r_A,\Phi(s_F))\right)
\left|\frac{\partial v}{\partial \theta}\right| d\theta \\
&= \int_{\Gamma\times \Delta} f_A(r_A,\Phi(s_F)) d\theta,
\end{align*}
where the Jacobian is determined by \eqref{eq:dvdtheta} and $f_A$ is defined in \eqref{eq:density}.
This completes the proof.

\subsection{Proof of Corollary~\ref{cor:dGaussian}}\label{sec:pfGaussian}

%Denote by $\phi_n(\cdot;\mu,\Sigma)$ the density of $\dnorm_n(\mu,\Sigma)$. 
If $\veps\sim \dnorm_n(0,\sigma^2\bfI_n)$, then $g_n(v)=\phi_n(v;\sigma^2/n)$.
It follows from \eqref{eq:defineH} and \eqref{eq:bieps} that
\begin{align*}
\tdH_A(r_A,s)=\sqn(X^\trans)^+(\Psi b+\lmd Ws -\Psi\beta_0).
\end{align*}
Since $\Psi^+=nX^+(XX^{\trans})^{-1}X$ by Assumption~\ref{as:X}, we have
$\Psi\Psi^+=X^\trans(XX^{\trans})^{-1}X = P_{X^\trans}$, which
is the projection onto $\row(X)$. Thus $Ws=\Psi\Psi^+Ws$, since $Ws\in\row(X)$.
Putting this together with the identity $(X^\trans)^+\Psi=X/n$, we have
\begin{align*}
\tdH_A(r_A,s)=\frac{1}{\sqn} X(b+\lmd \Psi^+Ws-\beta_0),
\end{align*}
and thus
\begin{equation}\label{eq:gforGaussian}
g_n(\tdH_A(r_A,s))=\left(\frac{2\pi\sigma^2}{n}\right)^{-n/2}
\exp\left[-\frac{1}{2\sigma^2}\|X(b+\lmd \Psi^+Ws-\beta_0)\|^2\right].
\end{equation}
Then \eqref{eq:dGaussian} follows immediately.

\subsection{Proof of Corollary~\ref{cor:densitylow}}\label{sec:pflowden}

It is easy to see that Assumptions~\ref{as:X} and \ref{as:pj} hold trivially if $\rank(X)=p<n$. 
Thus, by Lemma~\ref{lm:bijection}
the mappings $H$ and $H_A$ are bijective. 
In this case, $\row(X)=\R^p$ and $\calV^{\perp}=\{0\}$, 
which imply that the constraint \eqref{eq:rowspaceds} no longer exists.
Therefore, $|F|=p-|A|$, $T(\eta(s),A)$ is $p\times (p-|A|)$, and $M(r_A,s,A;\lambda)$ is $p\times p$.
Now,  the desired result is established by 
the same arguments as in Appendix~\ref{sec:proofhimden} with $U= X^\trans \veps/n$ in place of
$(\veps/\sqn)$ and $H_A$ in place of $\tdH_A$.

\subsection{Proof of Lemma~\ref{eq:MAparameterization}}\label{sec:pfparam}

First, the matrix $Q$ that defines the constraint \eqref{eq:rowspace} does not depend on $s$.
Second, the sphere $\mbS^{p_j-1}_{\alp^*}$, $j\in A$, can be parameterized by $s_{(j)\setminus k}$ 
for almost every point on the sphere, where $k\in \calG_j$ is chosen as the last component in the group.
More specifically, we may parameterize the positive half of $\mbS^{p_j-1}_{\alp^*}$ as
\begin{equation*}
\mbS^{p_j-1}_{\alp^*} \cap \left\{s_k > 0\right\} 
= \left\{\left(s_{(j)\setminus k}, [{1-\|s_{(j)\setminus k}\|_{\alp^*}^{\alp^*}}]^{1/\alp^*}\right): s_{(j)\setminus k} \in \mbB^{p_j-1}_{\alp^*}\right\},
\end{equation*}
and the negative half in a similar way, both using the variables indexed by $\calG_j\setminus k$.
Therefore, we can always choose $F(s,A)=F(A)$ to parameterize almost every point on $\calM_A$.

\subsection{Proof of Theorem~\ref{thm:densityinfty}}\label{sec:proofdeninf}

Put $R=Q^{\trans}$ with the matrix $Q$ as in \eqref{eq:rowspace}
and let the index set $I=K\cup \calG_{A^c}\subset \N_p$.
Any $s\in \calM_{A,K}$ must satisfy the following equality constraints:
$s_{K^c}=0$, $R_I s_I =0$, and $\|s_{K_j}\|_1=1$ for all $j\in A$,
%\begin{equation*}
%s_{K^c}=0,\quad\quad R_I s_I =0, \quad\text{and}\quad \|s_{K_j}\|_1=1\;\forall j\in A,
%\end{equation*}
which in turn impose constraints on its differentiation $ds$:
\begin{equation*}%\label{eq:constraintondsinfty}
ds_{K^c}=0,\quad\quad R_{I} ds_{I}  =0, 
\quad\text{and}\quad \la \sgn(s_{K_j}),d s_{K_j}\ra=0\;\forall j\in A.
\end{equation*}
Note that there are $p-n+|A|$ linearly independent constraints 
on $ds_I$ in the above under Assumption~\ref{as:X}.
Thus, $ds_I$ has $(n-|A|-|K^c|)=q$ free coordinates and 
there is a matrix $T=T(\sgn(s_K),A,K)$ so that
\begin{equation}\label{eq:dsI}
ds_I=T(\sgn(s_K),A,K) ds_F,
\end{equation} 
where $T \in \R^{(p-|K^c|) \times q}$ is a rank $q$ matrix and $F=F(\sgn(s_K),A,K)\subset\N_p$ with $|F|=q$.

Starting from \eqref{eq:KKTU}, we have
\begin{align*}
H(\hbeta,S;\beta_0,\lambda)&=\Psi \hbeta + \lambda {W} S - \Psi \beta_0 \\
&=\sum_{j\in\calA}\left[\hgamma_{j}\Psi_{\calK_j} \sgn(S_{\calK_j}) +\Psi_{\calK^c_j}\hbeta_{\calK^c_j}\right]
+ \lambda {W_\calI} S_{\calI} - \Psi \beta_0,
\end{align*}
where the index set $\calI=\calK \cup \calG_{\calA^c}$. Recalling the definition of the matrix $\Psi\circ \sgn(s)$,
we arrive at
\begin{equation*}%\label{eq:Hofinfty}
H(\hbeta,S)=[\Psi\circ \sgn(S)]_\calA \hgamma_{\calA} + \Psi_{\calK^c}\hbeta_{\calK^c}
+ \lambda {W_\calI} S_{\calI} - \Psi \beta_0,
\end{equation*}
where we have used the fact that $\sgn(S_{\calK_j^c})=0$ for $j\in\calA$.
Denote the value of $\hbeta$ by $b\in \R^p$. 
Fixing $\calA=A$ and $\calK=K$, the differentiation of $H_{A,K}$ at $(r_A,b_{K^c},s)\in \Omg_{A,K}$
is 
\begin{align*}
dH_{A,K}&=[\Psi\circ \sgn(s)]_A dr_{A} + \Psi_{K^c}d b_{K^c}
+ \lambda {W_I} ds_{I} \\
& = [\Psi\circ \sgn(s)]_A dr_{A} + \Psi_{K^c}d b_{K^c}+ \lambda {W_I}T ds_F=M(s,A,K;\lmd) d\theta,
\end{align*}
by plugging in \eqref{eq:dsI} for $ds_I$ and putting $\theta=(r_A,b_{K^c},s_F)\in \R^n$.
Then the Jacobian of the mapping $\tdH_{A,K}=\sqn(X^\trans)^+ H_{A,K}$ is given by \eqref{eq:jacobinfty}.
Similar to Lemma~\ref{lm:bijection}, $H_{A,K}$ and hence $\tdH_{A,K}$ are bijections on $\Omg_{A,K}$. 
Now following the proof of Theorem~\ref{thm:densityhigh} leads to 
the desired joint density.

\section{Technical Details in Examples}\label{sec:techforexps}

\subsection{Results and derivations in Example~\ref{exp:simple}}\label{sec:derivation1}

We present the complete results for Example~\ref{exp:simple} in this appendix.
The densities are given by the following differential forms:
\begin{align}
f_{\varnothing}ds_1ds_2&=\frac{\lmd^2}{\pi \sigma^2}\exp\left[-\frac{\lmd^2(s_1^2+s_2^2)}{\sigma^2}\right]
ds_1ds_2,\label{eq:exp1pdf0}\\
f_{\{1\}}dr_1 ds_1&= \frac{1}{\pi \sigma^2}\exp\left[-\frac{(r_1+\lmd)^2}{\sigma^2}\right]
\frac{r_1+\lmd}{|s_2|}dr_1ds_1 \label{eq:exp1pdf1copy} \\
f_{\{2\}}dr_2ds_1&=\frac{2\lmd}{\pi\sigma^2}\exp\left[-\frac{2r_2(r_2+\lmd)}{\sigma^2}\right]
\exp\left[-\frac{\lmd^2(s_1^2+s_2^2)}{\sigma^2}\right]dr_2ds_1,\label{eq:exp1pdf2}\\
f_{\{1,2\}}dr_1dr_2&=\frac{1}{\pi\sigma^2}\exp\left[-\frac{(r_1+r_2+\lmd)^2+r_2^2}{\sigma^2}\right]dr_1dr_2,
\label{eq:exp1pdf3}
\end{align} 
where $r_j>0$ for $j\in A$ and $s_{(1)}\in \mbD_A$ for $A=\varnothing, \{1\}, \{2\}, \{1,2\}$.
Special care must be taken when integrating over the parameter domains of these densities. For example,
\begin{equation*}
(r_1,r_2,s_{(1)}) \in (\R^+)^2 \times \{a,b,c,d\}:=\Theta \quad\text{ for }A=\{1,2\},
\end{equation*}
and therefore
\begin{equation*}
\Prob(\calA=\{1,2\})=\int_\Theta d\mu_{\{1,2\}} = 4 \int_{0}^{\infty} \int_{0}^{\infty} f_{\{1,2\}}dr_1dr_2.
\end{equation*}

Next, we derive these results and find $\Prob(\calA=A)$. A few pre-calculations are in order:
\begin{equation*}
\sqn(X^\trans)^+ =
\frac{1}{3}\left[\begin{array}{ccc}
2 & -1 & 1 \\
-1 & 2 & 1
\end{array}
\right], \quad\quad
\Psi =
\left[\begin{array}{ccc}
1 & 0 & 1 \\
0 & 1 & 1 \\
1 & 1 & 2
\end{array}
\right],
\end{equation*}
which lead to 
\begin{equation*}
r\circ\Psi =
\left[\begin{array}{ccc}
r_1 & 0 & r_2 \\
0 & r_1 & r_2 \\
r_1 & r_1 & 2r_2
\end{array}
\right], \quad\quad
\Psi \circ s=
\left[\begin{array}{cc}
s_1 &  s_3 \\
s_2 & s_3 \\
s_3 & 2s_3
\end{array}
\right].
\end{equation*}
The density function $g_n(v)=\phi_2(v; \sigma^2/2)$ and, with \eqref{eq:exp1rowspace},
\begin{align*}
\tdH_A(r_A,s)=\sqn(X^\trans)^+ \left[\sum_{j\in A} r_j (\Psi\circ s)_j + \lmd s \right] 
= (r_1+\lmd) s_{(1)}+r_2s_3 \I,
\end{align*}
where $\I=(1,1)$ is a (column) vector of ones. 
Constraint \eqref{eq:exp1rowspace} implies that
\begin{equation}\label{eq:exp1rowds}
ds_3=ds_1 + ds_2.
\end{equation}
We will first go through the calculations for $A=\{1\}$ and then move to the other three 
cases. In what follows, let $\tau=\sqrt{2}\lmd/\sigma$ and $Z=(Z_1,Z_2) \sim \dnorm_2(0,\bfI_2)$.

{\em Case 1:} $A=\{1\}$, $r_1>0$ and $r_2=0$.
Combining constraints \eqref{eq:exp1rowds} and $\|s_{(1)}\|^2=1$ leads to
\begin{equation*}
ds = \left[\begin{array}{c}
1 \\-s_1/s_2\\1-(s_1/s_2)
\end{array}\right] ds_1 \Rightarrow T(s,\{1\})=
\left[\begin{array}{c}
1 \\-s_1/s_2\\1-(s_1/s_2)
\end{array}\right].
\end{equation*}
Plugging in $r_2=0$ and $s_3=s_1+s_2$, it is then easy to verify that
\begin{equation*}
%(r\circ \Psi + \lambda W)T=(r_1+\lmd)T, \quad 
M(r_1,s,\{1\})=[s \mid (r_1+\lmd) T], \quad 
J_{\{1\}}(r_1,s)=-(r_1+\lmd)/s_2,\quad
\tdH_{\{1\}}(r_1,s)= (r_1+\lmd)s_{(1)}.
\end{equation*}
Consequently, we obtain the density as in \eqref{eq:exp1pdf1copy}.
The two arcs in $\mbD_{\{1\}}$ (its interior) are parameterized as
\begin{align*}
\partial(b,c)&=\left\{\left(s_1,({1-s_1^2})^{1/2}\right): -1< s_1< 0\right\} \\
\partial(d,a)&=\left\{\left(s_1,-({1-s_1^2})^{1/2}\right): 0< s_1< 1\right\}.
\end{align*}
It then follows that
\begin{align}
\Prob(\calA=\{1\})&=\int_{-1}^0\int_0^{\infty}f_{\{1\}}\left(r_1,s_1,({1-s_1^2})^{1/2}\right)dr_1 ds_1 \nonumber\\
&\quad\quad\quad\quad+\int_{0}^1\int_0^{\infty}f_{\{1\}}\left(r_1,s_1,-({1-s_1^2})^{1/2}\right)dr_1 ds_1\nonumber \\
&= \frac{1}{\pi \sigma^2}\left\{\int_0^{\infty}\exp\left[-\frac{(r_1+\lmd)^2}{\sigma^2}\right] (r_1+\lmd) dr_1\right\}
\left\{2\int_0^1\frac{1}{\sqrt{1-s_1^2}}ds_1\right\} \nonumber\\
&=\frac{1}{2}e^{-\lmd^2/\sigma^2}=  \frac{1}{2} \Prob(\|Z\|\geq \tau). \label{eq:exp1PrA1ofZ}
\end{align}

{\em Case 2:} $A=\varnothing$, $r_1=r_2=0$ and $s_{(1)}\in \mbD_{\varnothing}$.
By \eqref{eq:exp1rowds}, we have
\begin{equation*}
T(s,\varnothing)=
\left[\begin{array}{cc}
1 &0  \\0 & 1\\1& 1
\end{array}\right],
\end{equation*}
which in combination with $r_1=r_2=0$ leads to the following intermediate results:
\begin{equation*}
M(s,\varnothing)=\lmd T(s,\varnothing),\quad J_{\vn}(s)=\lmd^2,\quad
\tdH_{\vn}(s)=\lmd s_{(1)}.
\end{equation*}
Then the density $f_{\vn}$ is obtained immediately as in \eqref{eq:exp1pdf0} and
\begin{align}\label{eq:exp1PrA0}
\Prob(\calA=\vn) =\int_{\mbD_{\vn}} f_{\vn}(s_{(1)})ds_1ds_2 = \Prob(Z \in \tau \mbD_{\vn}).
\end{align}

{\em Case 3:} $A=\{2\}$, $r_1=0$ and $r_2>0$. The interior of $\mbD_{\{2\}}$ is parameterized as
\begin{align*}
\partial(a,b)&=\left\{\left(s_1,1-s_1\right): 0< s_1< 1\right\} \\
\partial(c,d)&=\left\{\left(s_1,-(1+s_1)\right): -1< s_1< 0\right\}.
\end{align*}
Since $s_3=s_1+s_2 \in \{1,-1\}$ in this case, we have $ds_3=ds_2+ds_1=0$ and thus
\begin{equation*}
T(s,\{2\})=
\left[\begin{array}{c}
1   \\- 1\\0
\end{array}\right]
\quad\text{ and }\quad
M(s,\{2\})=
\left[\begin{array}{cc}
s_3 & \lmd  \\s_3 & -\lmd \\2s_3 & 0
\end{array}\right],
\end{equation*}
using the fact that $r_1=0$. Now straightforward calculations give
\begin{equation*}
J_{\{2\}}(s)=-2\lmd s_3, \quad \tdH_{\{2\}}(r_2,s)=\lmd s_{(1)}+r_2s_3 \I.
\end{equation*}
Substituting $s_3$ by $s_1 + s_2$ with
the fact that $|s_3|=1$ leads to the density in \eqref{eq:exp1pdf2}.
Consequently,
\begin{align*}
\Prob(\calA=\{2\})= 
\frac{2\lmd}{\pi\sigma^2}\left\{\int_0^{\infty}\exp\left[-\frac{2r_2(r_2+\lmd)}{\sigma^2}\right] dr_2\right\}
\left\{2\int_{0}^1 \exp\left[-\frac{\lmd^2(s_1^2+(1-s_1)^2)}{\sigma^2}\right] ds_1\right\},
\end{align*}
utilizing the symmetry of $(s_1^2+s_2^2)$ between $\partial(a,b)$ and $\partial(c,d)$. The second integral
\begin{align*}
\int_{0}^1 \exp\left[-\frac{\lmd^2(s_1^2+(1-s_1)^2)}{\sigma^2}\right] ds_1 
=\frac{\sqrt{2\pi\sigma^2}}{2\lmd} \exp\left(-\frac{\lmd^2}{2\sigma^2}\right) \Prob(|Z_2|\leq\lmd/\sigma).
\end{align*}
After completing the first integral, we have
\begin{align}
\Prob(\calA=\{2\})&=2\cdot \Prob(Z_1\geq \lmd/\sigma\text{ and }|Z_2|\leq\lmd/\sigma) \nonumber\\
& = 2\cdot \Prob(Z_1+Z_2 \geq \tau \text{ and } |Z_1-Z_2|\leq \tau). \label{eq:exp1PrA2}
\end{align}

{\em Case 4:} $A=\{1,2\}$, $r_1,r_2>0$ and $\mbD_{\{1,2\}}=\{a,b,c,d\}$. Since $|A|=n=2$, 
\begin{equation*}
M(r,s,\{1,2\})=\Psi\circ s \quad\text{ and }\quad J_{\{1,2\}}(r,s)=s_1^2-s_2^2.
\end{equation*}
It is easy to see that $|J_{\{1,2\}}|=1$ for all $s_{(1)}\in \{a,b,c,d\}$ and
\begin{equation*}
\tdH_{\{1,2\}}(r,s)=(r_1+\lmd) s_{(1)}+r_2s_3 \I.
\end{equation*}
Then we obtain the density
$f_{\{1,2\}}$ in \eqref{eq:exp1pdf3} immediately, which leads to
\begin{align}
\Prob(\calA=\{1,2\}) & = 4 \int_{0}^{\infty} \int_{0}^{\infty} 
\frac{1}{\pi\sigma^2}\exp\left[-\frac{(r_1+r_2+\lmd)^2+r_2^2}{\sigma^2}\right]dr_1dr_2 \nonumber\\
%& = 4 \int_{0}^{\infty} \phi(x) \Phi(-(x+\tau)) dx \nonumber\\
& = 4\cdot \Prob(Z_1\geq 0 \text{ and } Z_2 -Z_1 \geq \tau). \label{eq:exp1PrA3}
\end{align}

Finally, by \eqref{eq:exp1PrA1ofZ}, \eqref{eq:exp1PrA0}, \eqref{eq:exp1PrA2}, and \eqref{eq:exp1PrA3} 
one can easily verify that
\begin{equation*}
\sum_A\Prob(\calA=A)=\Prob(Z \in \R^2)=1.
\end{equation*}

\subsection{Derivations in Example~\ref{exp:orthogonal}}\label{sec:derivation2}

This section is divided into three parts:

{\em Part 1:} Derivation of \eqref{eq:densityorth}.
For $v\in \R^p$ and $j\in\N_J$, define $u=v_{\langle j\rangle}\in\R^p$ so that
$u_k=v_k$ for $k\in\calG_j$ and $u_k=0$ otherwise.
Let $b\in\R^p$ denote the value for $\hbeta$, i.e. $b_{(j)}=r_js_{(j)}$ for $j\in \N_J$.
Straightforward algebra leads to:
\begin{align*}
H_A(r_A,s)& =b+\lambda \sqm s-\beta_0, \\
\Psi\circ s & = \left[s_{\langle 1\rangle}|\ldots| s_{\langle J\rangle}\right] \in \R^{p \times J}, \\
r\circ\Psi + \lambda W & = \diag\left\{(r_j+\lambda\sqm)\bfI_m: j\in \N_J\right\} \in \R^{p\times p}.
\end{align*}
Since $\row(X)=\R^p$, the constraint \eqref{eq:rowspaceds} disappears.
For $j\in A$, choose $k(j)\in\calG_j$ such that $s_{k(j)}\ne 0$ and put $F(j)=\calG_j\setminus\{k(j)\}$. 
Then constraint \eqref{eq:sphereds} can be written as
\begin{equation*}
ds_{k(j)}=-\frac{1}{s_{k(j)}} \langle s_{F(j)}, d s_{F(j)}\rangle \quad \text{ for }j\in A.
\end{equation*}
Without loss of generality, assume that $k(j)=m\cdot j$ to be the last component in the group.
The matrix $T=T(s,A)$ has a blockwise diagonal structure and its $j^\supth$ block
\begin{align*}
T(j) = \left[
\begin{array}{c}
\bfI_{m-1} \\ -s_{F(j)}^\trans /s_{k(j)}
\end{array}
 \right]  \text{ for } j \in A  \quad \text{ and } \quad
T(j)= \bfI_m  \text{ for } j\notin A.
\end{align*}
It follows immediately that 
\begin{equation*}
(r\circ\Psi + \lambda W)T(s,A)=\diag\left\{(r_j+\lambda\sqm)T(j): j\in \N_J\right\}.
\end{equation*}
Permuting the columns of $M$ \eqref{eq:defM} to put $s_{\langle j \rangle}$, $j\in A$, 
to the right of the $j^\supth$ block of
the above matrix, $M$ is also seen to be blockwise diagonal with each block $M_{(jj)}$ of size $m\times m$.
For $j\in A$, the $j^\supth$ block 
\begin{align*}
M_{(jj)} = \left[(r_j+\lambda\sqrt{m}) T(j) \mid s_{(j)} \right],  
\end{align*}
and for $j\notin A$, since $r_j=0$,
\begin{equation*}
M_{(jj)}=(\lambda\sqm)\bfI_m.
\end{equation*} 
Simple calculation with $\|s_{(j)}\|^2=1$ for $j\in A$ shows that 
\begin{equation}\label{eq:detMjjexp2}
|\det M_{(jj)}|=\left\{
\begin{array}{ll}
{(r_j+\lambda \sqm)^{m-1}}/{|s_{k(j)}|} & j \in A \\
(\lambda \sqm)^{m} & j\notin A.
\end{array}
\right.
\end{equation}
Under the hypotheses, $\sqn(X^{\trans})^+=X/\sqn$ is an orthogonal matrix whose determinant is
$\pm 1$. Consequently, the Jacobian \eqref{eq:jacob} is
\begin{equation}\label{eq:jacoborth}
|J_A|=\prod_{j=1}^J |\det M_{(jj)}|.
\end{equation}
Plugging \eqref{eq:jacoborth} into \eqref{eq:dGaussian} with $\Psi=\bfI_p$, 
we obtain the distribution given in \eqref{eq:densityorth}.

{\em Part 2:}
Derivation of the marginal density of $\hgamma_j$ \eqref{eq:densityrjorth} for $j\in A$. 
Denote by $\mu_j$ the joint distribution of $\hgamma_j$ and $S_{(j)}$.
We start from the integral 
\begin{align*}
f_j(r_j)dr_j = \int_{\mbS^{m-1}}d\mu_j 
=C(m) ({2\pi \sigma^2}/{n})^{-\frac{m}{2}} (r_j + \lambda \sqm)^{m-1} 
\exp\left[ -\frac{n}{2\sigma^2}(r_j+\lambda \sqm)^2\right] dr_j, 
\end{align*}
where $C(m)>0$ is a constant:
\begin{align*}
C(m)&=\int_{\mbS^{m-1}} \frac{1}{|s_{k(j)}|} ds_{F(j)}
=2\int_{\mbB^{m-1}} {\left(1-\|v\|^2\right)^{-1/2}} dv.  
\end{align*}
%Here, the positive $(s_{k(j)}> 0)$ and negative ($s_{k(j)}<0$) semi-spheres are parameterized separately.
With a change of variable, $v=x/\sqrt{1+\|x\|^2}$,
\begin{align*}
C(m)=2\int_{\R^{m-1}} {\left(1+\|x\|^2\right)^{-m/2}} dx = \frac{2\cdot \pi^{m/2}}{\Gamma(m/2)},
\end{align*}
by the normalizing
constant of the multivariate $t$-distribution with one degree of freedom.

{\em Part 3:}
Proof of the last identity in \eqref{eq:gmsoft}.
It follows from the sampling distribution of $\tdbeta_{(j)}$ that 
$(n/\sigma^2)\|\tdbeta_{(j)}\|^2\eqinL \chi^2_m$, following a $\chi^2$-distribution
with $m$ degrees of freedom.
Letting $z=(r_j+\lambda\sqm)^2$, we have
\begin{align*}
\int_t^{\infty} f_j (r_j) d r_j&=\int_{(t+\lmd\sqm)^2}^{\infty} 
\frac{\left({n}/{\sigma^2}\right)^{\frac{m}2}}{2^{{m}/{2}} \cdot\Gamma(m/2)}
z^{{m}/{2}-1} \exp\left(-\frac{n}{2\sigma^2}z\right)  dz \\
& = \Prob\left\{(\sigma^2/n)\chi^2_m > (t+\lmd \sqm)^2\right\} \\
& =\Prob\left\{\|\tdbeta_{(j)}\|^2 > (t+\lmd \sqm)^2\right\},
\end{align*} 
which completes the proof.

\subsection{Derivations in Example~\ref{exp:lasso}}\label{sec:derivation3}

We note that the constraint \eqref{eq:sphereds} reduces to $ds_j=0$ for $j\in A$, 
which implies that $T_{A\bullet}=\bfzr$ as in property (ii). Recall that $B=\N_p\setminus A$.
The constraint imposed on $d s_B$
comes from \eqref{eq:rowspaceds} and is thus independent of $s$, hence property (i). As a consequence,
the set of free coordinates of $s$ is always a subset of $B$, i.e. $F\subset B$ and $|F|=n-|A|$.
Since $r_{B}=0$ by definition, $(r\circ\Psi)_j=0$ for all $j\in B$.
It follows that $(r\circ\Psi)T=\bfzr$ and thus as defined in \eqref{eq:defM}
\begin{equation*}
M(r_A,s,A)  = \left[(\Psi\circ s)_A \mid \lambda W T\right]
=\left[(\Psi\circ s)_A \mid \lambda W_{B} T_{B\bullet}\right].
\end{equation*}
Since $|s_j|=1$ for $j\in A$ and $p_j=1$,
\begin{align*}
|J_A(r_A,s)|&=\left|\det \left\{\sqn (X^\trans)^+ [(\Psi\circ s)_A \mid \lambda W_{B} T_{B\bullet}]\right\}\right| \\
&=\left|\det \left\{\sqn (X^\trans)^+ [\Psi_A \mid \lambda W_{B} T_{B\bullet}]\right\}\right|.
\end{align*}
Substituting this into \eqref{eq:nform} gives the density in \eqref{eq:nformLasso}.

To compare \eqref{eq:nformLasso} with Theorem 2 of \cite{Zhou14}, we apply the following change of
variable: Let $b_j=s_jr_j$ denote the value for $\hbeta_j$ for $j\in A$.
Plugging into \eqref{eq:nformLasso} that $r_j=|b_j|$, $s_j=\sgn(b_j)$ and $dr_j=s_j d b_j$ for $j\in A$, 
we obtain the density for $(\hbeta_\calA,S_\calB,\calA)$ parameterized by $(b_A,s_F)$:
\begin{equation}\label{eq:nformLassoalter}
g_n(\tdH_A(|b_A|,s))
\left|\det \left\{\sqn (X^\trans)^+ [\Psi_A \mid \lambda W_{B} T_{B\bullet}]\right\}\right| db_A ds_F,
\end{equation}
where we have again used $|s_j|=1$ for $j\in A$ in the change of the volume elements.

\bibliographystyle{asa}
\bibliography{sparsereferences}

\begin{thebibliography}{27}
\newcommand{\enquote}[1]{``#1''}
\expandafter\ifx\csname natexlab\endcsname\relax\def\natexlab#1{#1}\fi

\bibitem[{Chatterjee and Lahiri(2013)}]{Chatterjee13}
Chatterjee, A. and Lahiri, S. (2013), \enquote{Rates of convergence of the
  adaptive Lasso estimators to the oracle distribution and higher order
  refinements by the bootstrap,} \textit{The Annals of Statistics}, 41,
  1232--1259.

\bibitem[{Dezeure et~al.(2015)Dezeure, B\"{u}hlmann, Meier, and
  Meinshausen}]{Dezeure14}
Dezeure, R., B\"{u}hlmann, P., Meier, L., and Meinshausen, N. (2015),
  \enquote{High-dimensional inference: Confidence intervals, p-values, and
  R-software hdi,} \textit{Statistical Science}, 30, 533--558.

\bibitem[{Dezeure et~al.(2016)Dezeure, B\"{u}hlmann, and Zhang}]{Dezeure16}
Dezeure, R., B\"{u}hlmann, P., and Zhang, C. (2016), \enquote{High-dimensional
  simultaneous inference with the bootstrap,} \textit{Preprint},
  arXiv:1606.03940.

\bibitem[{Javanmard and Montanari(2014)}]{Javanmard13b}
Javanmard, A. and Montanari, A. (2014), \enquote{Confidence intervals and
  hypothesis testing for high-dimensional regression,} \textit{Journal of
  Machine Learning Research}, 15, 2869--2909.

\bibitem[{Lee et~al.(2016)Lee, Sun, Y., and Taylor}]{Lee14}
Lee, J., Sun, D., Y., S., and Taylor, J. (2016), \enquote{Exact post-selection
  inference with the Lasso,} \textit{Annals of Statistics}, 44, 907--927.

\bibitem[{Lockhart et~al.(2014)Lockhart, Taylor, Tibshirani, and
  Tibshirani}]{Lockhart13}
Lockhart, R., Taylor, J., Tibshirani, R., and Tibshirani, R. (2014), \enquote{A
  significance test for the lasso,} \textit{Annals of Statistics}, 42,
  413--468.

\bibitem[{Meinshausen(2015)}]{Meinshausen15}
Meinshausen, N. (2015), \enquote{Group bound: confidence intervals for groups
  of variables in sparse high dimensional regression without assumptions on the
  design,} \textit{Journal of the Royal Statistical Society. Series B}, To
  appear.

\bibitem[{Meinshausen and B\"uhlmann(2010)}]{Meinshausen10}
Meinshausen, N. and B\"uhlmann, P. (2010), \enquote{Stability selection (with
  discussion),} \textit{Journal of the Royal Statistical Society series B}, 72,
  417--473.

\bibitem[{Meinshausen et~al.(2009)Meinshausen, Meier, and
  B\"{u}hlmann}]{Meinshausen09}
Meinshausen, N., Meier, L., and B\"{u}hlmann, P. (2009), \enquote{P-values for
  high-dimensional regression,} \textit{Journal of the American Statistical
  Association}, 104, 1671--1681.

\bibitem[{Mitra and Zhang(2016)}]{MitraZhang14}
Mitra, R. and Zhang, C. (2016), \enquote{The benefit of group sparsity in group
  inference with de-biased scaled group Lasso,} \textit{Electronic Journal of
  Statistics}, 10, 1829--1873.

\bibitem[{Negahban et~al.(2012)Negahban, Ravikumar, Wainwright, and
  Yu}]{Negahban12}
Negahban, S., Ravikumar, P., Wainwright, M., and Yu, B. (2012), \enquote{A
  unified framework for high-dimensional analysis of M-estimators with
  decomposable regularizers,} \textit{Statistical Science}, 27, 538--557.

\bibitem[{Negahban and Wainwright(2011)}]{Negahban11}
Negahban, S. and Wainwright, M. (2011), \enquote{Simultaneous support recovery
  in high dimensions: benifits and perils of block
  $\ell_1/\ell_{\infty}$-regularization,} \textit{IEEE Transactions on
  Information Theory}, 57, 3841--3863.

\bibitem[{Neykov et~al.(2015)Neykov, Ning, Liu, and Liu}]{Neykov15}
Neykov, M., Ning, Y., Liu, J., and Liu, H. (2015), \enquote{A unified theory of
  confidence regions and testing for high dimensional estimating equations,}
  \textit{Preprint}, arXiv:1510.08986.

\bibitem[{Ning and Liu(2014)}]{NingLiu14}
Ning, Y. and Liu, H. (2014), \enquote{A general theory of hypothesis tests and
  confidence regions for sparse high dimensional models,} \textit{Preprint},
  arXiv:1412.8765.

\bibitem[{Roth and Fischer(2008)}]{RothFischer08}
Roth, V. and Fischer, B. (2008), \enquote{The group-lasso for generalized
  linear model: uniqueness of solutions and efficient algorithms,} in
  \textit{Proceedings of the 25th International Conference on Machine
  Learning}.

\bibitem[{Tian~Harris et~al.(2016)Tian~Harris, Panigrahi, Markovic, Bi, and
  Taylor}]{Tian16}
Tian~Harris, X., Panigrahi, S., Markovic, J., Bi, N., and Taylor, J. (2016),
  \enquote{Selective sampling after solving a convex problem,}
  \textit{Preprint}, arXiv:1609.05609v1.

\bibitem[{Tibshirani(1996)}]{Tibshirani96}
Tibshirani, R. (1996), \enquote{Regression Shrinkage and Selection via the
  Lasso,} \textit{Journal of the Royal Statistical Society. Series B.}, 58,
  267--288.

\bibitem[{Tibshirani(2013)}]{Tibshirani13}
--- (2013), \enquote{The lasso problem and uniqueness,} \textit{Electronic
  Journal of Statistics}, 7, 1456--1490.

\bibitem[{van~de Geer et~al.(2014)van~de Geer, B\"{u}hlmann, Ritov, and
  Dezeure}]{vandeGeer13}
van~de Geer, S., B\"{u}hlmann, P., Ritov, Y., and Dezeure, R. (2014),
  \enquote{On asymptotically optimal confidence regions and tests for
  high-dimensional models,} \textit{Annals of Statistics}, 42, 1166--1202.

\bibitem[{van~de Geer and Stucky(2015)}]{vandeGeer15}
van~de Geer, S. and Stucky, B. (2015), \enquote{$\chi^2$-confidence sets in
  high-dimensional regression,} \textit{Preprint}, arXiv:1502.07131.

\bibitem[{Voorman et~al.(2014)Voorman, Shojaie, and Witten}]{Voorman14}
Voorman, A., Shojaie, A., and Witten, D. (2014), \enquote{Inference in high
  dimensions with the penalized score test,} \textit{Preprint},
  arXiv:1401.2678.

\bibitem[{Wasserman and Roeder(2009)}]{Wasserman09}
Wasserman, L. and Roeder, K. (2009), \enquote{High-dimensional variable
  selection,} \textit{Annals of Statistics}, 37, 2178--2201.

\bibitem[{Yuan and Lin(2006)}]{YuanLin06}
Yuan, M. and Lin, Y. (2006), \enquote{Model selection and estimation in
  regression with grouped variables,} \textit{Journal of the Royal Statistical
  Society. Series B}, 68, 49--67.

\bibitem[{Zhang and Zhang(2014)}]{ZhangZhang11}
Zhang, C. and Zhang, S. (2014), \enquote{Confidence intervals for
  low-dimensional parameters in high-dimensional linear models,}
  \textit{Journal of the Royal Statistical Society. Series B}, 76, 217--242.

\bibitem[{Zhang and Cheng(2016)}]{ZhangCheng16}
Zhang, X. and Cheng, G. (2016), \enquote{Simultaneous inference for
  high-dimensional linear models,} \textit{Journal of the American Statistical
  Association}, Published online.

\bibitem[{Zhou(2014)}]{Zhou14}
Zhou, Q. (2014), \enquote{Monte Carlo simulation for {L}asso-type problems by
  estimator augmentation,} \textit{Journal of the American Statistical
  Association}, 109, 1495--1516.

\bibitem[{Zhou and Min(2015)}]{ZhouMain15}
Zhou, Q. and Min, S. (2015), \enquote{Uncertainty quantification under group
  sparsity,} \textit{Preprint}, arXiv:1507.01296.

\end{thebibliography}

\end{document}